\newtheorem{theorem}{Theorem}
\newtheorem{proposition}[theorem]{Proposition}
\newtheorem{definition}[theorem]{Definition}
\newtheorem{corollary}[theorem]{Corollary}
\newtheorem{lemma}[theorem]{Lemma}
\begin{document}

\newcommand{\ignore}[1]{}

\title{Robust Simulations and Significant Separations}
\date{}
\author{Lance Fortnow\thanks{Supported in part by NSF grants CCF-0829754 and DMS-0652521.}\\Northwestern
University \and Rahul Santhanam\\University of Edinburgh}

\maketitle

\begin{abstract}
We define and study a new notion of ``robust simulations''
between complexity classes which is intermediate between the
traditional notions of infinitely-often and almost-everywhere,
as well as a corresponding notion of ``significant
separations''. A language $L$ has a robust simulation in a
complexity class $\C$ if there is a language in $\C$ which
agrees with $L$ on arbitrarily large polynomial stretches of
input lengths. There is a significant separation of $L$ from
$\C$ if there is no robust simulation of $L \in \C$.

The new notion of simulation is a cleaner and more natural
notion of simulation than the infinitely-often notion. We show
that various implications in complexity theory such as the
collapse of $\PH$ if $\NP = \P$ and the Karp-Lipton theorem
have analogues for robust simulations. We then use these
results to prove that most known separations in complexity
theory, such as hierarchy theorems, fixed polynomial circuit
lower bounds, time-space tradeoffs, and the recent theorem of
Williams, can be strengthened to significant
separations, though in each case, an almost everywhere
separation is unknown.

Proving our results requires several new ideas, including a
completely different proof of the hierarchy theorem for
non-deterministic polynomial time than the ones previously
known.

\end{abstract}

\thispagestyle{empty}

\newpage

\setcounter{page}{1}

\section{Introduction}
What does the statement ``$\P\neq\NP$'' really tell us? All is
says is that for any polynomial-time algorithm $A$, $A$ fails
to solve $SAT$ on an infinite number of inputs. These
hard-to-solve inputs could be exponentially (or much worse) far
from each other. Thus even a proof of $\P \neq \NP$ could leave
open the possibility that $SAT$ or any other $\NP$-complete
problem is still solvable on all inputs encountered in
practice. This is unsatisfactory if we consider that one of the
main motivations of proving lower bounds is to understand the
limitations of algorithms.

Another important motivation for proving lower bounds is that
hardness is {\it algorithmically useful} in the context of
cryptography or derandomization. Again, if the hardness only
holds for inputs or input lengths that are very far apart, this
usefulness is called into question. For this reason, theorists
developed a notion of almost-everywhere (a.e.) separations, and
a corresponding notion of infinitely-often (i.o.) simulations.
 A language $L$ is in $\io \C$ for a complexity
class $\C$ if there is some $A\in \C$ such that $A$ and $L$
agree on infinitely many input lengths. A class $\D$ is almost
everywhere not in $\C$ if for some language $L$ in $\D$,
$L\not\in\io \C$, that is any $\C$-algorithm fails to solve $L$
on all but a finite number of input lengths. As an example of
applying these notions, Impagliazzo and Wigderson
\cite{Impagliazzo-Wigderson97} show that if $\E \not \subseteq
\SIZE(2^{o(n)})$ then $\BPP$ is in $\io\P$, and that if $\E
\not \subseteq \io \SIZE(2^{o(n)})$, then $\BPP = \P$.

However, the infinitely often  notion has its own issues.
Ideally, we would like a notion of simulation to capture
``easiness'' in some non-trivial sense. Unfortunately, many
problems that we consider hard have trivial infinitely often
simulations. For example, consider any $\NP$-hard problem on
graphs or square matrices. The natural representation of inputs
for such problems yields non-trivial instances only for input
lengths that are perfect squares. In such a case, the problem
has a trivial infinitely often  simulation on the set of all
input lengths which are not perfect squares. On the other hand,
the problem could be ``padded'' so that it remains non-trivial
on input lengths which are not perfect squares. It's rather
unsatisfactory to have a notion of simulation which is so
sensitive to the choice of input representation.

Not unrelated to this point is that analogues of many classical
complexity results fail to hold in the infinitely often
setting. For example, we do not know if $SAT \in \io\P$ implies
that the entire Polynomial Hierarchy has simulations
infinitely-often in polynomial time. We also don't know if in
general a complete language for a class being easy
infinitely-often implies that the entire class is easy
infinitely-often. This is true for $SAT$ and $\NP$ because
$SAT$ is paddable and downward self-reducible, but for general
complexity classes the implication is unclear. Given that even
these basic analogues are not known, it's not surprising that
more involved results such as the Karp-Lipton theorem
\cite{Karp-Lipton82} and the theorem of Impagliazzo, Kabanets
and Wigderson \cite{Impagliazzo-Kabanets-Wigderson02} don't
have known infinitely often  analogues either.

In an ideal world, we would like all our algorithms to work on
all input lengths, and all our separations to be
almost-everywhere separations. While algorithm design does
typically focus on algorithms that work on all input lengths,
many of the complexity separations we know do not work in the
almost everywhere setting. Separations proved using
combinatorial or algebraic methods, such as Hastad's lower
bound for Parity \cite{Hastad86} or Razborov's monotone circuit
lower bound for Clique \cite{Razborov85} tend to be almost
everywhere (in an appropriate input representation). However,
such techniques typically have intrinsic limitations, as they
run into the natural proofs barrier \cite{Razborov-Rudich97}.
Many of the lower bounds proved recently have come from the use
of indirect diagonalization. A contrary upper bound is assumed
and this assumption is used together with various other ideas
to derive a contradiction to a hierarchy theorem. These newer
results include hierarchy theorems \cite{Barak02,
Fortnow-Santhanam04, vanMelkebeek-Pervyshev06}, time-space
tradeoffs \cite{Fortnow00,
Fortnow-Lipton-vanMelkebeek-Viglas05}, and circuit lower bounds
\cite{Buhrman-Fortnow-Thierauf98, Vinodchandran05, Santhanam07,
Williams10, Williams10a}. Unfortunately, {\it none} of these
results give almost everywhere separations, and so the question
immediately arises what we can say quantitatively about these
separations, in terms of the frequency with which they hold.

To address all these issues, we describe a new notion of
``robust simulation'' and a corresponding notion of
``significant separation''. A language $L$ is in $\ro \C$
(robustly-often in $\C$) if there is a language $A$ in $\C$
such that for every $k$ there are infinitely many $m$ and such
that $A$ and $L$ agree on all inputs between $m$ and $m^k$. A
class $\D$ has a significant separation from $\C$ if there is
some $L$ in $\D$ such that $L\not\in\ro \C$. This implies that
for each $L' \in \C$, there is a constant $k$ such that for
each $m$, $L$ and $L'$ differ on at least one input length
between $m$ and $m^k$. Intuitively, this means that if the
separation holds at some input length, there is another input
length at most polynomially larger at which the separation also
holds, i.e., the hardness is not too ``sparsely'' distributed.

Our definition of robust simulations extends the notion of
uniform hardness of Downey and Fortnow~\cite{Downey-Fortnow}. A
set $A$ is uniformly hard in the sense of Downey and Fortnow if
$A \not \in \ro \P$.

The notion of robust simulation is just slightly stronger than
the notion of infinitely often  simulation, and correspondingly
the notion of significant separation is slightly weaker than
that of almost everywhere separations. By making this tradeoff,
however, we show that we can often achieve the best of both
worlds.

We give robustly often analogues of many classical complexity
results, where infinitely often analogues remain open,
including
\begin{itemize}

\item $\NP \subseteq \ro \P$ implies $\PH \subseteq \ro \P$
\item $\NP \subseteq \ro \SIZE(poly)$ implies $\PH
    \subseteq \ro \P$
\item $\NEXP \subseteq \ro \SIZE(poly)$ implies $\NEXP
    \subseteq \ro \MA$

\end{itemize}

We then use these robustly often  analogues together with other
ideas to give several significant separations where almost
everywhereseparations remain open, including

\begin{itemize}
\item $\NTIME(n^r) \not \subseteq \ro \NTIME(n^s)$, when $r
    > s\geq 1$
\item For each constant $k$, $\Sigma_2 \P \not \subseteq
    \ro \SIZE(n^k)$
\item $SAT \not \subseteq \ro \DTISP(n^{\alpha},
    \polylog(n))$ when $\alpha < \sqrt{2}$
\item $\NEXP \not \subseteq \ro \ACC$
\end{itemize}

The robustly often  notion gives us a cleaner and more powerful
theory than the infinitely often notion.
\subsection{Intuition and Techniques}

To illustrate the advantages of the robustly often  notion over
the infinitely often notion, let's look at a simple example:
trying to show that if $\SAT$ is easy, then all of $\NP$ is
easy. Let $L \in \NTIME(n^k)$ be any $\NP$ language, where $k >
0$ is a constant. $\SAT \in \io \P$ doesn't immediately imply
$L \in \io \P$, as the range of the reduction from $L$ to
$\SAT$ might only intersect input lengths where the
polynomial-time algorithm for $\SAT$ is incorrect. In this
case, the problem can be fixed by padding the reduced instance
to polynomially many different input lengths and using downward
self-reducibility to check YES answers for any of these inputs.
However, this fix depends on specific properties of $\SAT$.

Showing that $\SAT \in \ro \P$ implies $L \in \io \P$ is an
easier, more generic argument. Define a robust set of natural
numbers to be any set $S$ such that for each $k > 0$ there is
an $m$ for which $S$ contains all numbers between $m$ and $m^k$
for some $m$. $\SAT \in \ro \P$ means that there is a robust
set $S$ on which the simulation works. Now the reduction from
$L$ to $\SAT$ creates instances of length $n^k \polylog(n)$,
and it's not hard to see that this automatically implies that
composing the reduction with the algorithm for $\SAT$ gives an
algorithm for $L$ which works on some robust subset $S'$ of
$S$. This implies $L \in \ro \P$. We call a robust subset of a
set a {\it robust refinement}. Many of our arguments will
involve defining a series of robust refinements of a robust set
such that the desired simulation holds on the final refinement
in the series, thus implying that the simulation goes through
in the robustly often  setting.

Thus far, using robustly often  seems easier but hasn't given
us any additional power. The situation changes if we consider
implications where the assumption is used {\it two or more}
times. An example is the proof that $\NP \subseteq \P$ implies
$\PH \subseteq \P$ - this is an inductive proof where the
assumption is used several times. Trying to carry an infinitely
often simulation through fails miserably in this setting
because two infinitely often  simulations do not compose - they
might work on two completely different infinite sets of input
lengths.

Now two robustly often  simulations do not in general compose
either. It is not in general true that for complexity classes
$\B, \C$ and $\D$, $\B \subseteq \ro \C$ and $C \subseteq \ro
\D$, then $\B \subseteq \ro \D$. However, we {\it can} get
these two robustly often  simulations to compose when they are
{\it both} consequences of a single robustly often  assumption.
The robustly often  assumption gives us some robust set to work
with. If we're careful we can define a single robust refinement
of this set on which $\B \subseteq \C$ holds and so too does
$\C \subseteq \D$, which implies $\B \subseteq \D$ holds on
this refinement as well.

This is an idea that we will use again and again in our proofs.
However, in order to use this idea, we need to be careful with
the steps in our proof, as there are only some kinds of
implications for which the idea is useful. For example, it
works well with fixed polynomial-time reductions or
translations with fixed polynomial advice, but not with
exponential padding. More importantly, the idea only works when
all the steps in the proof follow from a {\it single}
assumption, so we need to re-formulate proofs so that they
conform to this pattern. In some cases, eg. the proofs of
Theorem \ref{IKWro} and Theorem \ref{CktLowerBound}, the
re-formulation is non-trivial and leads to proofs that are
quite a bit more involved than the originals
\cite{Impagliazzo-Kabanets-Wigderson02, Kannan82}.

In the case of significant hierarchies for non-deterministic
time, i.e., hierarchies where the lower bound is against robust
simulations, the known techniques break down entirely. The
traditional argument is a ``chaining argument'' \cite{Cook72,
Seiferas-Fischer-Meyer78, Zak83} which uses a chain of
exponentially many input lengths and cannot possibly give a
hierarchy against robustly often  simulations. Here, we come up
with a novel idea of chaining using witnesses to get a
significant hierarchy for polynomial time.

Our most technically involved result is that the recent
breakthrough lower bound of Williams \cite{Williams10,
Williams10a} can be strengthened to a significant separation.
The proof of this result uses almost all of the techniques we
develop, including the sophisticated use of robust refinements
involved in proving Theorem \ref{IKWro}, and a variant of the
significant hierarchy for non-deterministic polynomial time.

Implicit in our paper is a certain proof system for proving
complexity class separations, such that any separation proved
in this system automatically yields an infinitely often
separation. It's an interesting open problem to make this
system explicit, and to study its power and its limitations
more formally.

\section{Preliminaries}
\label{prelims}

\subsection{Complexity Classes, Promise Problems and Advice}

We assume a basic familiarity with complexity classes such as
$\P$, $\RP$, $\BPP$, $\NP$, $\MA$, $\AM$, $\Sigma^{p}_2$,$\PP$
and their exponential-time versions.The Complexity
Zoo\footnote{\tt http://qwiki.caltech.edu/wiki/ComplexityZoo}
is an excellent resource for basic definitions and statements
of results.

Given a complexity class $\C$, $\coC$ is the class of languages
$L$ such that $\bar{L} \in \C$. Given a function $s: \mathbb{N}
\rightarrow \mathbb{N}$, $\SIZE(s)$ is the class of Boolean
functions $f = \{f_n\}$ such that for each $n$, $f_n$ has
Boolean circuits of size $O(s(n))$. Given a language $L$ and an
integer $n$, $L_{n} = L \cap \{0,1\}^{n}$. Given a class $\C$,
$\io\C$ is the class of languages $L$ for which there is a
language $L' \in \C$ such that $L_{n} = L'_{n}$ for infinitely
many length $n$.

In order to deal with promise classes in a general way, we take
as fundamental the notion of a complexity measure. A complexity
measure $\CTIME$ is a mapping which assigns to each pair
$(M,x)$, where $M$ is a time-bounded machine (here a time
function $t_M(x)$ is implicit) and $x$ an input, one of three
values ``0'' (accept), ``1'' (reject) and ``?'' (failure of
$\CTIME$ promise). We distinguish between {\it syntactic} and
{\it semantic} complexity measures. Syntactic measures have as
their range $\{0,1\}$ while semantic measures may map some
machine-input pairs to ``?''. The complexity measures $\DTIME$
and $\NTIME$ are syntactic (each halting deterministic or
non-deterministic machine either accepts or rejects on each
input), while complexity measures such as $\BPTIME$ and
$\MATIME$ are semantic (a probabilistic machine may accept on
an input with probability 1/2, thus failing the bounded-error
promise). For syntactic measures, any halting machine defines a
language, while for semantic measures, only a subset of halting
machines define languages.

A promise problem is a pair $(Y,N)$, where $Y,N \subseteq
\{0,1\}^{*}$ and $Y \cap N = \emptyset$. We say that a promise
problem $(Y,N)$ belongs to a class $\CTIME(t)$ if there is a
machine $M$ halting in time $t$ on all inputs of length $n$
such that $M$ fulfils the $\CTIME$ promise on inputs in $Y \cup
N$, accepting on inputs in $Y$ and rejecting on inputs in $N$.

A language $L$ is in $\CTIME(t)/a$ if there is a machine $M$
halting in time $t(\cdot)$ taking an auxiliary {\it advice}
string of length $a(\cdot)$ such that for each $n$, there is
some advice string $b_n, |b_n|=a(n)$ such that $M$ fulfils the
$\CTIME$ promise for each input $x$ with advice string $b_n$
and accepts $x$ iff $x \in L$.

For syntactic classes, a lower bound with advice or for the
promise version of the class translates to a lower bound for
the class itself.

\begin{definition}
\label{robustness} Let $S$ be a subset of positive integers.
$S$ is robust if for each positive integer $k$, there is a
positive integer $m\geq 2$ such that $n \in S$ for all $m \leq
n \leq m^k$.
\end{definition}

Note that any robust set is infinite. We now define what it
means to simulate a language in a complexity class on a subset
of the positive integers.

\begin{definition}
\label{setsimulations} Let $L$ be a language, $\C$ a complexity
class, and $S$ a subset of the positive integers. We say $L \in
\C$ on $S$ if there is a language $L' \in \C$ such that $L_n =
L'_n$ for any $n \in S$.
\end{definition}

Using the terminology of Definition \ref{setsimulations}, $L
\in \io\C$ for a language $L$ and complexity class $\C$ if
there is some infinite set $S \subseteq \mathbb{N}$ such that
$L \in \C$ on $S$. We now define our main notion of
robustly-often simulations.

\begin{definition}
\label{robustsim} Given a language $L$ and complexity class
$\C$, $L \in \ro\C$ if there is a robust $S$ such that $L \in
C$ on $S$. In such a case, we say that there is a
robustly-often ($\ro$) simulation of $L$ in $\C$. We extend
this notion to complexity classes in the obvious way - given
complexity classes $\B$ and $\C$, $\B \subseteq \ro\C$ if there
for each language $L \in \B$, $L \in\ro\C$. If $\B \not
\subseteq \ro\C$, we say that there is a significant separation
of $\B$ from $\C$.
\end{definition}

Clearly $\B \subseteq \ro\C$ implies $\B \subseteq \io\C$.
Conversely, $\B \not \subseteq \io\C$ gives a very strong
separation of $\B$ and $\C$, i.e., an almost-everywhere
separation, while a significant separation is somewhat weaker
but still much more significant than simply a separation of
$\B$ and $\C$. Intuitively, a significant separation means that
input lengths witnessing the separation are at most
polynomially far apart.

We now define a sequence of {\it canonical refinements} for any
given set $S$, which will play an important part in many of our
proofs.

\begin{definition}
\label{canonical} Let $S$ be a robust set. The canonical
refinement $S_d$ of $S$ at level $d$ is defined as follows for
any integer $d > 0$: $m \in S_d$ iff $m \in S$ and $n \in S$
for all $m \leq n \leq m^d$.
\end{definition}

It is easy to see $S_d$ is robust if $S$ is robust and that
$S_d\subseteq S_{d'}$ for $d\geq d'$.

\section{Robust Simulations}


For any $\NP$-complete language $L$ the language
\[L'=\{x10^i\in L\ |\ \mbox{$|x|+1+i$ is even}\}\] remains
$\NP$-complete but sits in $\io\P$. In contrast if any
$\NP$-complete set under honest m-reductions sits in $\ro\P$
then $\NP\subseteq\ro\P$.

\begin{lemma}
\label{ComplRef} Let $L$ and $L'$ be languages such that $L'$
reduces to $L$ via a polynomial-time honest m-reduction. Let
$\C$ be a complexity class closed under poly-time m-reductions.
If there is a robust $S$ such that $L \in \C$ on $S$, then
there is a robust refinement $S'$ of $S$ such that $L' \in \C$
on $S'$.
\end{lemma}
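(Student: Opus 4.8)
The plan is to push $L'$ through $\C$ by composing the honest reduction $f\colon L' \to L$ with the given $\C$-algorithm for $L$, and then to verify that the set of input lengths on which this composition computes $L'$ correctly is itself robust and contained in $S$. First I would fix the two polynomial bounds that come with $f$. Since $f$ is polynomial-time computable, $|f(x)| \le |x|^{c}$ for all sufficiently long $x$ and some constant $c \ge 1$; since $f$ is honest, there is a polynomial $q$ with $|x| \le q(|f(x)|)$, which after enlarging $c$ gives $|f(x)| \ge |x|^{1/c}$ for all sufficiently long $x$. Let $L'' \in \C$ witness $L \in \C$ on $S$, so $L''_n = L_n$ for every $n \in S$, and define $A = \{x : f(x) \in L''\}$. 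Because $A$ reduces to $L''$ via the polynomial-time m-reduction $f$ and $\C$ is closed under such reductions, $A \in \C$.

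Next I would define the refinement $S' = \{\, n \in S : |f(x)| \in S \text{ for every } x \text{ with } |x| = n \,\}$. By construction $S' \subseteq S$, so $S'$ will be a genuine refinement. For every $n \in S'$ and every $x$ with $|x| = n$ we have $x \in A \iff f(x) \in L'' \iff f(x) \in L \iff x \in L'$, where the middle equivalence uses $|f(x)| \in S$ and the last uses that $f$ m-reduces $L'$ to $L$. Hence $A_n = L'_n$ for all $n \in S'$, i.e. $L' \in \C$ on $S'$.

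It remains to show $S'$ is robust, which is the one step needing care, as it involves matching up the two-sided polynomial relation between $|x|$ and $|f(x)|$ with a single robust block of $S$. Given $k$, apply robustness of $S$ with exponent $k' = k c^{2}$: there is $m_0 \ge 2$ with $[m_0, m_0^{k'}] \subseteq S$, and we may take $m_0$ large enough that the length bounds on $f$ hold for all inputs of length at least $m_0^{c}$. Set $m = m_0^{c} \ge 2$. For any integer $n$ with $m \le n \le m^{k}$ we have $n \in [m_0^{c}, m_0^{ck}] \subseteq [m_0, m_0^{k'}] \subseteq S$ (using $c \ge 1$), and for any $x$ of length $n$ we have $|f(x)| \in [n^{1/c}, n^{c}] \subseteq [m^{1/c}, m^{ck}] = [m_0, m_0^{c^{2}k}] = [m_0, m_0^{k'}] \subseteq S$. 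Therefore $[m, m^{k}] \subseteq S'$, so $S'$ is robust. Then $S'$ is a robust refinement of $S$ with $L' \in \C$ on $S'$, which is the claim. The only real obstacle is the exponent bookkeeping in this last paragraph; everything else is the generic ``compose and refine'' pattern described in the introduction.
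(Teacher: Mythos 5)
Your proof is correct and follows essentially the same path as the paper's: you take the same refinement $S' = \{n \in S : |f(x)| \in S \text{ for all } |x| = n\}$, note closure of $\C$ under m-reductions to get the witness language $A$, and verify robustness of $S'$ by exponent bookkeeping. The only cosmetic difference is that the paper first restates robustness symmetrically (intervals $[m^{1/k}, m^k]$) and picks $m'(k) = m(ck)$, whereas you keep the one-sided formulation $[m, m^k]$, invoke robustness with exponent $kc^2$, and shift the base to $m_0^c$; both computations are correct and interchangeable. One small remark: your ``take $m_0$ large enough'' aside is harmlessly informal, since robustness for a fixed exponent only guarantees some $m_0$, not arbitrarily large ones — but the bounds $|x|^{1/c} \le |f(x)| \le |x|^c$ can simply be arranged to hold for all $|x| \ge 2$ by enlarging $c$, which is also what the paper tacitly does.
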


Proof in appendix.

The proof ideas of Lemma~\ref{ComplRef} can be used to show
that robustly often  analogues of various useful implications
hold. We omit the proofs of these propositions, since they
follow from the definition of robustly often  in much the same
way as Lemma \ref{ComplRef}.

The first analogue essentially says that we can take a complete
language to be representative of a complexity class, even in
the context of robustly often  simulations. It is an immediate
consequence of Lemma \ref{ComplRef}.

\begin{proposition}
\label{SATro} If $\SAT \in \ro \P$, then $\NP \subseteq \ro
\P$.
\end{proposition}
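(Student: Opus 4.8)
The plan is to derive Proposition \ref{SATro} as essentially a direct corollary of Lemma \ref{ComplRef}, combined with the fact that every language in $\NP$ reduces to $\SAT$ via a suitable reduction and that $\P$ is closed under polynomial-time m-reductions. First I would fix an arbitrary language $L \in \NP$ and recall the Cook--Levin reduction: there is a polynomial-time many-one reduction $f$ from $L$ to $\SAT$. The one subtlety is the \emph{honesty} requirement in Lemma \ref{ComplRef}: we need $|f(x)|$ to be at least polynomially large in $|x|$, i.e., bounded below by $|x|^{\epsilon}$ for some $\epsilon > 0$. This is not automatic from Cook--Levin alone, but it is easily arranged: if $M$ is a nondeterministic machine running in time $n^c$ deciding $L$, the standard tableau construction already produces a formula of size at least $n$, and in any case we may pad the output formula with dummy clauses (e.g.\ conjoin $(y \vee \bar y)$ for fresh variables $y$) to ensure $|f(x)| \geq |x|$ without changing satisfiability. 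So $f$ is an honest polynomial-time m-reduction from $L$ to $\SAT$.

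Next I would invoke the hypothesis $\SAT \in \ro \P$: by Definition \ref{robustsim}, there is a robust set $S$ such that $\SAT \in \P$ on $S$, i.e., some $L'' \in \P$ agrees with $\SAT$ on all input lengths in $S$. Since $\P$ is closed under polynomial-time m-reductions, Lemma \ref{ComplRef} applies with $\C = \P$, the ``target'' language $L$ there being $\SAT$ and the ``source'' language being our $L$. It yields a robust refinement $S'$ of $S$ such that $L \in \P$ on $S'$. By Definition \ref{robustsim} again, this means precisely $L \in \ro \P$. Since $L \in \NP$ was arbitrary, $\NP \subseteq \ro \P$.

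I do not expect any real obstacle here; the content is entirely in Lemma \ref{ComplRef}, which is already stated (and whose proof is deferred to the appendix). The only point requiring a sentence of care is verifying that the Cook--Levin reduction can be taken to be honest, which as noted is handled by trivial padding of the output instance. One should also note in passing that the reduction $f$ is a single fixed reduction not depending on the refinement, so that the ``composition under a single assumption'' philosophy emphasized in the introduction is respected: we use the assumption $\SAT \in \ro \P$ exactly once, extract its robust set $S$, and then refine. I would keep the write-up to two or three sentences, citing Lemma \ref{ComplRef} and the closure of $\P$ under m-reductions, since anything longer would be redundant with the lemma's proof.
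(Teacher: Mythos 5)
Your proof is correct and matches the paper's own treatment: the paper explicitly states that Proposition~\ref{SATro} ``is an immediate consequence of Lemma~\ref{ComplRef},'' which is precisely how you derive it (fix $L \in \NP$, take the Cook--Levin reduction to $\SAT$, check honesty, and apply the lemma with $\C = \P$). Your extra remark about padding to ensure honesty is a reasonable hygiene check, though the standard tableau reduction already produces an output of length at least $|x|$, so it is satisfied without modification.
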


In fact, if $\SAT \in \P$ on $S$ for some robust set $S$ and $L
\in \NTIME(n^d)$ for some integer $d$, the proof of Proposition
\ref{SATro} gives that $L \in \P$ on $S_{d+1}$, i.e., on the
canonical refinement of $S$ at level $d+1$.

The next proposition says that translation arguments using a
fixed polynomial amount of advice carry through in the robustly
often setting, for any ``reasonable'' complexity measure where
``reasonable'' means that the measure is closed under efficient
deterministic transductions \cite{vanMelkebeek-Pervyshev06}.
All complexity measures considered in this paper are reasonable
in this sense.

\begin{proposition}
\label{PolyPadding} Let $\BTIME$ and $\CTIME$ be any complexity
measures closed under efficient deterministic transductions.
Let $g$ and $h$ be time-constructibe functions, and $p$ a
polynomial. If $\BTIME(g(n)) \subseteq \ro \CTIME(h(n))$, then
$\BTIME(g(p(n))) \subseteq \ro \CTIME(h(p(n)))$.
\end{proposition}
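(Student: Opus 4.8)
The plan is to mimic, in the robustly-often setting, the standard padding/translation argument, taking care that the single robust set handed to us by the hypothesis is refined in a controlled, polynomially-bounded way so that both the ``padding up'' and the ``simulation'' steps hold simultaneously on the final refinement. Concretely, let $L \in \BTIME(g(p(n)))$ via a machine $M$, and let $p$ have degree $e$, so $p(n) \le n^e$ for all large $n$. The padded language is $L^{\mathrm{pad}} = \{\, x 1 0^{i} : x \in L,\ |x 1 0^i| = p(|x|)\,\}$ (with the usual convention for small inputs), which lies in $\BTIME(g(m))$ as a function of the padded length $m = p(|x|)$, because on input of length $m$ the machine recovers $x$ by an efficient deterministic transduction (stripping the padding) and runs $M$; here we use that $\BTIME$ is closed under efficient deterministic transductions.

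Next I would apply the hypothesis $\BTIME(g(n)) \subseteq \ro \CTIME(h(n))$ to $L^{\mathrm{pad}}$: there is a robust set $S$ and a machine $N$ witnessing $L^{\mathrm{pad}} \in \CTIME(h(n))$ on $S$. To recover $L$ on a robust set, I would build a $\CTIME(h(p(n)))$ machine for $L$ that, on input $x$ of length $n$, pads $x$ to length $p(n)$ and runs $N$; this computes $L(x)$ correctly whenever $p(n) \in S$, and runs in time $h(p(n))$ up to the transduction overhead (again using closure under efficient deterministic transductions, now for $\CTIME$, and time-constructibility of $g,h$ to do the clocking). So it remains to produce a robust set $S'$ with the property that $n \in S' \implies p(n) \in S$. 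This is exactly where the canonical refinements of Definition~\ref{canonical} come in: I would take $S' = \{\, n : m \le p(n) \text{ implies } \ldots \}$ — more precisely, I claim $S' := \{\, n \in \mathbb{N} : p(n) \in S \text{ and } p(n') \in S \text{ for the relevant range}\,\}$ contains a robust set, and the clean way to see it is to observe that since $p(n) \le n^e$ and $p$ is increasing, the image $\{p(n)\}$ together with robustness of $S$ at level $k e$ forces an interval $[m, m^{k}]$ of $n$'s all mapping into $S$.

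In more detail for that last step: given $k$, robustness of $S$ yields $m_0 \ge 2$ with $[m_0, m_0^{ke}] \subseteq S$. Set $m = m_0$ (or the least $n$ with $p(n) \ge m_0$, which is still polynomially related to $m_0$). Then for every $n$ with $m \le n \le m^{k}$ we have $p(n) \le n^{e} \le m^{ke} \le m_0^{ke}$ and $p(n) \ge p(m) \ge m_0$ (for large enough parameters, adjusting constants), hence $p(n) \in S$. Thus $[m, m^k] \subseteq S'$, and since $k$ was arbitrary, $S'$ is robust. (One uses $S_{ke}$, the canonical refinement of $S$ at level $ke$, to make the interval bookkeeping uniform, exactly as advertised after Proposition~\ref{SATro}.) Running the composed machine described above on inputs whose length lies in $S'$ then establishes $L \in \CTIME(h(p(n)))$ on $S'$, i.e.\ $L \in \ro \CTIME(h(p(n)))$, and since $L$ was arbitrary we get $\BTIME(g(p(n))) \subseteq \ro \CTIME(h(p(n)))$.

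The main obstacle — really the only subtlety beyond routine bookkeeping — is the pullback of robustness through $p$: a robust set $S$ need not have a robust preimage under an arbitrary polynomial, and the argument genuinely needs the quantitative ``arbitrarily large polynomial stretch'' in Definition~\ref{robustness} (ordinary infinitely-often would fail here), together with the fact that $p$ has a fixed degree $e$ so that a stretch of length $k$ in $n$ only costs a stretch of length $ke$ in $p(n)$. I would also be mildly careful about small input lengths where $p(n) < n+1$ or the padding is ill-defined, and about the precise meaning of ``efficient deterministic transduction'' for the class $\CTIME$ (it must be closed under \emph{composition} with such transductions on both the input and the clocking side), but these are exactly the ``reasonableness'' conditions the proposition already assumes, so they introduce no real difficulty.
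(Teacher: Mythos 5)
Your argument is correct, and it is essentially the approach the paper intends: pad $L$ so that $L^{\mathrm{pad}} \in \BTIME(g(m))$, apply the robustly-often hypothesis, and then pull the robust set back through $p$ using the "stretch one level $k$ in $n$-space into level $ke$ in $p(n)$-space" observation, exactly mirroring the way Lemma~\ref{ComplRef} pulls a robust set back through an honest reduction $f$ with $|x|^{1/c}\le|f(x)|\le|x|^{c}$. The paper omits this proof and explicitly says it follows from the definitions in much the same way as Lemma~\ref{ComplRef}, and your write-up supplies those same details (including the key point that robustness, unlike infinitely-often, survives composition with a fixed-degree polynomial because a level-$k$ interval in $n$ costs only a level-$ke$ interval in $p(n)$).
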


As a consequence of Proposition \ref{PolyPadding}, we get for
example that if $\NTIME(n) \subseteq \ro \P$, then $\NP
\subseteq \ro \P$. There are contexts where exponential padding
is used in complexity theory, eg., in the proof that $\NP = \P$
implies $\NEXP = \EXP$. This result doesn't seem to translate
to the robustly often  setting, however a weaker version does.

\begin{proposition}
\label{NEXPro} If $\NP \subseteq \ro \P$, then $\NEXP \subseteq
\io \EXP$.
\end{proposition}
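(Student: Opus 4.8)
The plan is to mimic the classical argument that $\NP = \P$ implies $\NEXP = \EXP$, but track carefully where the robust assumption is invoked. Recall the classical proof: given $L \in \NEXP$, there is a constant $c$ and a nondeterministic machine deciding $L$ in time $2^{n^c}$. Pad each input $x$ of length $n$ to $x' = x10^{2^{n^c}}$, so that $|x'| = N \approx 2^{n^c}$; the padded language $L_{\mathrm{pad}} = \{x10^i : x \in L,\ i = 2^{|x|^c} - |x| - 1\}$ lies in $\NTIME(\mathrm{poly}(N)) \subseteq \NP$. Under $\NP \subseteq \ro\P$ (together with Proposition~\ref{SATro}, which gives $\NP \subseteq \ro\P$ as a uniform statement via reduction to $\SAT$), we get $L_{\mathrm{pad}} \in \ro\P$, so there is a robust set $S$ and a polynomial-time algorithm $A$ agreeing with $L_{\mathrm{pad}}$ on all input lengths in $S$. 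To decide $x$ of length $n$, compute $x'$ (which takes time $2^{O(n^c)}$, affordable for an $\EXP$ machine) and run $A$ on it; this works correctly whenever $N = |x'| \in S$, and runs in time polynomial in $N$, i.e., $2^{O(n^c)}$, which is exponential in $n$.

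The key remaining point is to argue that the set of \emph{original} input lengths $n$ for which this works is infinite, so that we obtain an $\io\EXP$ simulation. The padding function $n \mapsto N(n) = 2^{n^c}$ is strictly increasing, so its image $I = \{N(n) : n \in \mathbb{N}\}$ is an infinite set. Since $S$ is robust it is infinite, but infinitude of $S$ alone does not guarantee $S \cap I$ is infinite — here is where robustness does the work. Because $S$ is robust, for every $k$ there is an $m \geq 2$ with $[m, m^k] \subseteq S$; choosing $k$ large relative to $c$, any interval of the form $[m, m^k]$ with $m$ large contains a value $2^{n^c}$ for some integer $n$ (indeed consecutive values $2^{n^c}$ and $2^{(n+1)^c}$ satisfy $2^{(n+1)^c} \le (2^{n^c})^k$ once $(n+1)^c \le k\, n^c$, which holds for all $n$ once $k \ge 2^c$, say). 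Hence $S$ contains infinitely many elements of $I$, so there are infinitely many lengths $n$ at which the $\EXP$ algorithm above is correct on all of $\{0,1\}^n$. That is exactly $L \in \io\EXP$.

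Assembling these pieces: fix $L \in \NEXP$, form $L_{\mathrm{pad}} \in \NP$, apply Proposition~\ref{SATro} (or directly $\NP \subseteq \ro\P$) to get a robust $S$ and poly-time $A$ deciding $L_{\mathrm{pad}}$ on $S$, define the $\EXP$ machine $M(x) = A(x10^{2^{|x|^c}-|x|-1})$, and use the robustness-of-$S$ argument above to exhibit infinitely many lengths on which $M$ decides $L$ correctly. Since $L \in \NEXP$ was arbitrary, $\NEXP \subseteq \io\EXP$.

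I expect the main obstacle to be precisely the interaction between the padding map and robustness in the second paragraph: one must be careful that robustness of $S$ (intervals $[m,m^k]$ for arbitrarily large $k$) genuinely forces infinitely many hits of the sparse image $\{2^{n^c}\}$, and in particular one should note the asymmetry — we only get $\io\EXP$, not $\ro\EXP$, because the inverse image under the exponential padding map of a robust set need not be robust (the gaps $[m,m^k]$ in $S$ translate, after taking logarithms and $c$-th roots, to intervals in the original length scale that are only of the form $[n, (\text{const})\cdot n]$ rather than $[n, n^k]$). This is exactly the phenomenon flagged in the remark preceding the proposition that "exponential padding" does not translate to the robustly-often setting, and it is why the conclusion is stated with $\io$ rather than $\ro$.
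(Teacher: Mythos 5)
Your proof is correct and follows exactly the route the paper intends. The paper itself omits the proof (stating only that these propositions ``follow from the definition of robustly often in much the same way as Lemma~\ref{ComplRef}'') and prefaces the proposition with the remark that exponential padding doesn't translate to the $\ro$ setting but a weaker version does — which is precisely the phenomenon you analyze: padding $L \in \NTIME(2^{n^c})$ to $L_{\mathrm{pad}} \in \NP$, pulling a robust set $S$ back through $n \mapsto 2^{n^c}$, and observing that each interval $[m, m^k] \subseteq S$ (with $k \geq 2^c$, $m \geq 2$) must contain at least one $2^{n^c}$, giving infinitely many good lengths $n$, while the preimage intervals are only $O(k^{1/c})$-linear stretches and hence not robust. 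Your explanation of why the conclusion degrades from $\ro$ to $\io$ is the right one.
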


The proposition below says that simulations of a syntactic
class in another class can be translated to a simulation with
fixed polynomial advice, even in the robustly often  setting.

\begin{proposition}
\label{PolyAdvice} Let $\BTIME$ be a syntactic complexity
measure and $\CTIME$ a complexity measure, such that both
$\BTIME$ and $\CTIME$ are closed under efficient deterministic
transductions. Let $f$ and $g$ be time-constructrible measures
and $p$ a polynomial. If $\BTIME(f(n)) \subseteq \ro
\CTIME(g(n))$, then $\BTIME(f(n))/p(n) \subseteq \ro
\CTIME(g(n+p(n)))/p(n)$.
\end{proposition}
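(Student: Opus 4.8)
The plan is to mimic the standard ``guess the advice string'' translation argument, but to execute it on a carefully chosen canonical refinement of the robust set supplied by the hypothesis, so that the final simulation is itself robustly often. Suppose $L \in \BTIME(f(n))/p(n)$ via a machine $M$ taking advice of length $p(n)$, and let $A$ be the language witnessing $\BTIME(f(n)) \subseteq \ro\CTIME(g(n))$, so there is a robust set $S$ and a machine $N$ deciding $A$ within the $\CTIME$ promise on all inputs of lengths in $S$, with $A_n = (\text{the language of }M\text{ with its correct advice on }n)_n$ for $n \in S$ — more precisely, I first need to encode the advised $\BTIME$ computation as a single syntactic $\BTIME$ language. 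The natural way: define $L^* = \{ \langle x, b \rangle : M \text{ accepts } x \text{ with advice } b \}$; since $\BTIME$ is syntactic and closed under efficient deterministic transductions, $L^* \in \BTIME(f'(n))$ for a time bound $f'$ polynomially related to $f$ on the padded length, so by Proposition~\ref{PolyPadding} (applied to absorb the polynomial blow-up) we get $L^* \in \ro\CTIME(g')$ for the corresponding $g'$; let $T$ be a robust set on which this simulation holds, and let $N^*$ be the $\CTIME$-machine.

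Next I would build the advised $\CTIME$-machine for $L$. On input $x$ of length $n$, the new machine receives as advice the string $b_n$ (the correct $p(n)$-bit advice for $M$ on length $n$) and simulates $N^*$ on $\langle x, b_n \rangle$. This runs within the $\CTIME(g'(\cdot))$ promise and correctly decides $L$ on input $x$ provided the length $|\langle x, b_n\rangle|$, which is $n + p(n) + O(\log n) \le n + p'(n)$ for a slightly larger polynomial $p'$, lies in $T$. So the new machine is a correct advised $\CTIME(g(n+p'(n)))/p'(n)$ machine for $L$ on every $n$ such that $n + p'(n) \in T$. To conclude $L \in \ro\CTIME(g(n+p(n)))/p(n)$ (absorbing $p'$ into $p$, which is harmless since the proposition is stated for an arbitrary polynomial $p$), I need the set $S' = \{ n : n + p'(n) \in T \}$ to be robust.

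The main obstacle is exactly this last point: verifying that pulling back a robust set $T$ under the map $n \mapsto n + p'(n)$ yields a robust set. This is where the canonical refinements of Definition~\ref{canonical} enter. Given $k$, pick $m$ with $[m, m^{k'}] \subseteq T$ for a sufficiently large $k'$ (depending on $k$ and the degree of $p'$); then for all $n$ in a range $[m', (m')^k]$ with $m'$ slightly larger than $m$, the value $n + p'(n)$ stays inside $[m, m^{k'}]$ since $n + p'(n) \le n \cdot n^{O(1)} \le (m')^{k \cdot O(1)} \le m^{k'}$ when $k'$ is chosen large enough, and $n + p'(n) \ge n \ge m' \ge m$. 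Hence $[m', (m')^k] \subseteq S'$, so $S'$ is robust. (One should double-check the edge case where $p'$ is not monotone or $n$ is small, but replacing $p'$ by $n \mapsto n^c$ for a constant $c$ dominating it handles this.) Everything else is bookkeeping: the closure of both measures under efficient deterministic transductions is what lets us (i) form the syntactic language $L^*$ from the advised machine, (ii) apply Proposition~\ref{PolyPadding} to handle the polynomial change in input length and time bound, and (iii) re-expand $N^*$ back into an advised machine for $L$ without leaving the respective complexity classes.
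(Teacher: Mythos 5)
Your proposal is correct and matches the paper's intent: the paper omits this proof, noting only that it "follows from the definition of robustly often in much the same way as Lemma~\ref{ComplRef}," and your argument is exactly that — fold the advice into the input to obtain a syntactic $\BTIME$ language, apply the hypothesis, and then verify that the pull-back of the resulting robust set under the polynomially bounded length map $n \mapsto n+p(n)$ is itself robust, which is the same canonical-refinement calculation as in Lemma~\ref{ComplRef}. The only bookkeeping worth tightening is to use a plain concatenation encoding (the advice length $p(n)$ is determined by $n$), so that the padded length is exactly $n+p(n)$ and no absorption of $p'$ into $p$ is needed.
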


\begin{theorem}
\label{PolyHier} If $\NP \subseteq \ro\P$, then $\PH \subseteq
\ro\P$
\end{theorem}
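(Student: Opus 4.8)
The plan is to mimic the classical induction showing $\NP \subseteq \P \Rightarrow \PH \subseteq \P$, but to carry out the entire induction \emph{inside a single robust set}, using canonical refinements to compose the polynomially-many applications of the hypothesis. The key point, stressed in the introduction, is that two robustly-often simulations do not compose in general, but they do compose when they are both derived from one robustly-often assumption by fixed-polynomial-time reductions. So I would fix, once and for all, the robust set $S$ on which $\SAT \in \P$ (this exists since $\NP \subseteq \ro \P$ gives $\SAT \in \ro \P$), and then argue that every level $\Sigma^p_i$ of the hierarchy is contained in $\P$ on some canonical refinement $S_{d(i)}$ of $S$, where $d(i)$ depends only on $i$ and the polynomial bounding the relevant machine.

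The induction would go as follows. Base case $i=1$: by Proposition~\ref{SATro} and the remark following it, any $L \in \NTIME(n^d)$ is in $\P$ on $S_{d+1}$. Inductive step: suppose every $\Sigma^p_{i}$ language given by an $n^d$-time machine is in $\P$ on $S_{c}$ for an appropriate $c = c(i,d)$. Take $L \in \Sigma^p_{i+1}$ computed by a machine running in time $n^e$ that makes one existential quantifier over strings of length $n^e$ followed by a $\Pi^p_{i}$ predicate (equivalently, a $\Sigma^p_{i}$ predicate on the complement). The standard move is: replace the inner $\Sigma^p_i$ (or $\Pi^p_i$) predicate by the deterministic algorithm obtained from the inductive hypothesis, which turns $L$ into an $\NP$ language (one existential quantifier over polynomially-long strings, then a polynomial-time check), and then apply the base case again. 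Each of these two steps is a fixed-polynomial-time reduction, so each costs only a bounded further refinement of $S$: if the inner predicate is correct on $S_c$ and the inner predicate is queried on inputs of length at most $n^{e'}$ for some fixed $e'$, then the reduction is faithful on $S_{c \cdot e'}$ (or some similar fixed power), and the final $\NP$-to-$\P$ step costs one more level. Since $i$ is constant, the composition of constantly many such fixed-polynomial blow-ups yields a single robust canonical refinement $S_{d(i+1)}$ on which $L \in \P$. This is exactly the ``series of robust refinements'' strategy described in the introduction.

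The main obstacle is bookkeeping the refinement levels correctly through the two-step reduction at each stage, and in particular making sure the inner-predicate algorithm from the inductive hypothesis is being invoked only on input \emph{lengths} that lie in the refinement where it is guaranteed correct. Concretely, when the outer machine on input $x$ of length $n$ queries the $\Pi^p_i$ oracle, it queries it on strings whose length is some fixed polynomial $n^{e'}$ in $n$; for the substitution to be valid we need $n^{e'} \in S_c$, which by Definition~\ref{canonical} is exactly what holds when $n \in S_{c \cdot e'}$ (using $m \le n \le m^{c}$ with $m = n^{e'}$, so that all of $n^{e'}, \dots, n^{e' c}$ lie in $S$). One must also handle co-classes: $\Pi^p_i \subseteq \ro\P$ on a robust set follows from $\Sigma^p_i \subseteq \ro\P$ on the same set, since $\P$ is closed under complement and the refinement is unchanged. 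Once these index computations are pinned down, the statement follows by taking the final refinement at level $i = $ the number of alternations, which is a constant for any fixed level of $\PH$, and then noting a language in $\P$ on a robust set is by definition in $\ro\P$.
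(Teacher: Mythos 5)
Your proposal is correct and matches the paper's proof in structure and in its key idea: the paper also does an induction over the levels $\Sigma^p_k$, fixes once a robust set $S$ on which $\SAT \in \DTIME(n^c)$, maintains an inductive hypothesis of the form ``$\Sigma_k SAT \in \DTIME(n^{c^{k+o(1)}})$ on $S_{c^{k+1}}$'', and in each inductive step substitutes the inner oracle by the deterministic algorithm guaranteed by the hypothesis and then applies the $\SAT$ simulation once more, tracking the refinement level. Your composition observation --- that a polynomial-stretch reduction applied to a simulation correct on $S_c$ yields a simulation correct on $S_{ce'}$ --- is precisely the device the paper uses (via ``the same idea as in the proof of Lemma~\ref{ComplRef}''), and your remark about closure under complement to pass between $\Sigma^p_i$ and $\Pi^p_i$ is the implicit step the paper takes when it writes $\Sigma_{k+1}SAT \in \NTIME(n^{1+o(1)})^{\Sigma_k SAT}$. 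The paper is a bit more explicit about keeping the query length down to $n^{1+o(1)}$ by padding, and fixes specific refinement levels $S_{c^{k+1}}$, $S_{c^{k+2}}$, but these are just the concrete instantiation of your bookkeeping, so the two proofs are essentially the same.
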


Proof in the appendix.

\begin{theorem}
\label{PolyHierNonunif} If $\NP \subseteq \ro\SIZE(poly)$, then
$\PH \subseteq \ro\SIZE(poly)$
\end{theorem}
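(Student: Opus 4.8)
The plan is to mimic the classical proof that $\NP \subseteq \SIZE(\poly)$ implies $\PH \subseteq \SIZE(\poly)$ (via Karp--Lipton, collapsing $\PH$ to $\Sigma_2^p$, which is then in $\SIZE(\poly)$), but to execute it entirely within a \emph{single} robust set so that the several uses of the hypothesis compose. The key is that the hypothesis $\NP \subseteq \ro\SIZE(\poly)$ hands us \emph{one} robust set $S$ on which $\SAT$ (and hence, via Lemma~\ref{ComplRef} and its quantitative refinement, every fixed $\NTIME(n^d)$ language) has polynomial-size circuits; all subsequent reasoning will be relativized to canonical refinements $S_d$ of this same $S$.

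First I would record the robustly-often Karp--Lipton step: if $\SAT \in \SIZE(\poly)$ on $S$, then by the standard self-reduction argument $\Pi_2^p \subseteq \Sigma_2^p$ ``on $S$'' in the appropriate sense --- more precisely, any fixed $\Pi_2^p$ predicate, which after padding is a $\coNP$ question about an $\NP$ oracle, can be rewritten as a $\Sigma_2^p$ predicate by having the existential player guess the polynomial-size circuit for $\SAT$ and the universal player verify it; since the circuit is only guaranteed correct on input lengths in $S$, one restricts attention to the canonical refinement of $S$ at a level large enough to absorb the polynomial blow-up of the reduction (as in the remark following Proposition~\ref{SATro}). Iterating the collapse up the hierarchy, each level costs another polynomial factor and hence another drop to a deeper canonical refinement; since the hierarchy has constant depth, only finitely many refinements are taken, and $S_d$ remains robust for every fixed $d$. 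This yields: for each fixed $L \in \PH$ there is a fixed $d$ and a $\Sigma_2^p$ language $L''$ with $L_n = L''_n$ for all $n \in S_d$.

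Next I would apply the hypothesis a \emph{second} time to handle $\Sigma_2^p$ itself. Here one uses that $\Sigma_2^p = \NP^{\SAT}$, and that $\SAT \in \SIZE(\poly)$ on $S$ lets us replace the $\SAT$ oracle by a polynomial-size circuit; guessing that circuit non-deterministically shows $\Sigma_2^p \subseteq \NP/\poly$ on (a canonical refinement of) $S$, i.e.\ on some $S_{d'}$. Then Proposition~\ref{PolyAdvice} together with the hypothesis $\NP \subseteq \ro\SIZE(\poly)$ --- applied once more on the \emph{same} underlying robust set --- gives $\NP/\poly \subseteq \SIZE(\poly)$ on a further refinement. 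Composing the three inclusions on the deepest refinement $S_{d''}$, which is still robust, shows $L \in \SIZE(\poly)$ on $S_{d''}$, hence $L \in \ro\SIZE(\poly)$.

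The main obstacle, and the reason this is not completely routine, is bookkeeping the advice and the input-length blow-ups so that every invocation of the hypothesis is charged against the \emph{one} robust set $S$ rather than against fresh, possibly disjoint robust sets: a naive ``$\NP \subseteq \ro\SIZE(\poly)$, so $\Sigma_2^p \subseteq \ro\SIZE(\poly)$, so \ldots'' fails because robustly-often simulations do not compose in general. I would therefore phrase each step as ``holds on the canonical refinement $S_c$ of the fixed set $S$'' (using Definition~\ref{canonical} and the fact that $S_c \subseteq S_{c'}$ for $c \ge c'$), verify that the reductions involved are fixed-polynomial-time with fixed-polynomial advice so that Propositions~\ref{PolyPadding} and~\ref{PolyAdvice} apply, and only at the very end observe that finitely many refinements of a robust set are robust. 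A secondary point to be careful about is that $\SIZE(\poly)$ is non-uniform, so the ``language $L' \in \C$ on $S$'' formalism of Definition~\ref{setsimulations} must be read as: there is a circuit family correct on all lengths in $S$ --- which is exactly what the hypothesis provides and what Lemma~\ref{ComplRef} (with $\C = \SIZE(\poly)$, closed under poly-time m-reductions) propagates to refinements.
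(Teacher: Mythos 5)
Your proposal is correct in substance, but it takes a genuinely different route from the one the paper intends. The paper states that the omitted proof of Theorem~\ref{PolyHierNonunif} ``closely resembles the proof of Theorem~\ref{PolyHier},'' i.e.\ it is a direct induction on the levels of the hierarchy: fix a single robust $S$ on which $\SAT \in \SIZE(n^c)$, formulate an inductive hypothesis $H_k$ asserting $\Sigma_k SAT \in \SIZE(n^{c^{k+o(1)}})$ on the canonical refinement $S_{c^{k+1}}$, and in the inductive step write $\Sigma_{k+1}SAT \in \NTIME(n^{1+o(1)})^{\Sigma_k SAT}$, hardwire the $\Sigma_k SAT$ circuit as advice, and appeal to Proposition~\ref{PolyAdvice} together with the hypothesis $\SAT \in \SIZE(\poly)$ on $S$ to pass from $\NTIME/\poly$ back to $\SIZE(\poly)$ on a deeper refinement. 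Your proposal instead goes through the robustly-often Karp--Lipton theorem (Lemma~\ref{KLRobust}) to first collapse $\PH$ to $\Sigma_2^p$ on refinements of $S$, and then separately reduces $\Sigma_2^p$ to $\SIZE(\poly)$. Both approaches ultimately rest on exactly the same bookkeeping device --- all invocations of the hypothesis are charged to canonical refinements $S_d$ of a single robust set, and only finitely many refinements are taken for any fixed level of $\PH$ --- so they succeed for the same reason. The paper's direct induction is more economical and more closely mirrors the uniform proof; your Karp--Lipton route is the one most readers would reach for first, but the collapse to $\Sigma_2^p$ is logically unnecessary since the induction has to be carried out anyway to justify it.

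One small imprecision worth flagging: in the step establishing $\Sigma_2^p \subseteq \NP/\poly$ on a refinement, you say the $\NP$ machine ``guesses that circuit non-deterministically.'' The circuit for $\SAT$ should be supplied as the $/\poly$ advice string, not guessed; a non-deterministically guessed circuit cannot in general be verified correct, whereas the advice quantifier lets you simply posit the correct circuit at each length in the refinement. This is a slip in phrasing rather than in the argument, since you do correctly land in $\NP/\poly$ and then invoke Proposition~\ref{PolyAdvice} to get down to $\SIZE(\poly)$.
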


\begin{theorem}
\label{PolyHierBPP} If $\NP \subseteq \ro\BPP$, then $\PH
\subseteq \ro\BPP$.
\end{theorem}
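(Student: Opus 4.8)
The plan is to mimic the structure of the proof of Theorem \ref{PolyHier} ($\NP \subseteq \ro\P \Rightarrow \PH \subseteq \ro\P$), but working over $\BPP$ instead of $\P$. The central point, as emphasized in the introduction, is that although two $\ro$-simulations do not compose in general, they do compose when both are consequences of a \emph{single} $\ro$-assumption: the assumption hands us one robust set $S$, and we then pass to a nested sequence of canonical refinements $S \supseteq S_{d_1} \supseteq S_{d_2} \supseteq \cdots$ on which each successive inclusion holds outright (as a "on $S_{d_i}$" statement). So first I would unwind the classical inductive proof that $\NP\subseteq\BPP$ implies $\PH\subseteq\BPP$ into a form where every use of the hypothesis is a fixed-polynomial-overhead step — namely, a reduction of a $\Sigma_i\P$ query to a $\SAT$-type query, followed by one application of the assumed $\BPP$-algorithm for $\SAT$, followed by (Adleman-style / Lautemann-style) error-reduction so that the resulting $\BPP$ machine has error small enough to survive a polynomial number of nested oracle calls being replaced by the $\BPP$ simulation.

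The key steps, in order. (1) Fix $L \in \Sigma_k\P$ for some constant $k$; it suffices to handle each level with a constant number of uses of the hypothesis. (2) Invoke $\NP \subseteq \ro\BPP$ via Proposition \ref{SATro}'s analogue: there is a robust set $S$ and a $\BPP$-algorithm $B$ such that $\SAT \in \BPTIME(\cdot)$ agreeing with $\SAT$ on all input lengths in $S$. (3) Show inductively that $\Sigma_i\P \subseteq \BPP$ on $S_{d_i}$ for an appropriate constant $d_i$ depending only on $i$ and the polynomial blow-up of the standard reductions: given the $\BPP$-simulation of $\Sigma_{i-1}\P$ on $S_{d_{i-1}}$, a $\Sigma_i\P$ predicate on inputs of length $n$ reduces, via a fixed polynomial-time reduction, to a $\Sigma_i$-SAT query whose length is $n^{O(1)}$, which in turn is handled by one $\NP$-query relative to a $\Sigma_{i-1}\P$ oracle; replace the $\NP$-query by $B$ (valid when the query length lies in $S$, which is ensured by passing from $m$ to a refinement so that $m, m^2, \ldots, m^{d_i}$ all lie in $S$) and the oracle by the inductive $\BPP$-simulation. (4) Drive the error probability down by standard amplification at each of the constantly many levels, so that the overall simulation is still bounded-error; use closure of $\BPTIME$ under efficient deterministic transductions (as in Proposition \ref{PolyPadding}) to absorb the polynomial overhead. (5) Since $S_{d_k}$ is robust whenever $S$ is, conclude $L \in \ro\BPP$.

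The main obstacle I expect is \emph{not} the robust-refinement bookkeeping — that is by now routine given the canonical-refinement machinery — but rather making sure the recursive collapse of $\PH$ into $\BPP$ can be organized so that the hypothesis is genuinely used only through fixed-polynomial-overhead steps on a \emph{single} robust set. The textbook proof of $\NP\subseteq\BPP \Rightarrow \PH\subseteq\BPP$ typically proceeds one quantifier at a time with an oracle $\BPP$-computation, and one must be careful that "$\BPP$ with a $\BPP$ oracle $=\BPP$" is invoked in the bounded-error, polynomially-amplified form so the error does not accumulate across the (constantly many) levels, and that at every stage the lengths of the instances fed to the $\SAT$-simulator are controlled well enough to fall inside the robust set's current refinement. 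Once the proof is phrased with all hypothesis-uses being polynomial-overhead reductions composed with one call to the $\SAT$ $\BPP$-simulator, the refinement argument goes through exactly as in Theorem \ref{PolyHier}.
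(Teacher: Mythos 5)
Your proposal is essentially the approach the paper intends: the paper omits the proof of Theorem~\ref{PolyHierBPP} (and of Theorem~\ref{PolyHierNonunif}) as ``closely resembling'' the proof of Theorem~\ref{PolyHier}, and your plan is exactly that adaptation -- a single robust set $S$ from the hypothesis, an inductive argument level-by-level on $\PH$ tracked via canonical refinements $S_{d_1}\supseteq S_{d_2}\supseteq\cdots$, with all uses of the hypothesis reformulated as fixed-polynomial-overhead steps. You also correctly flag the one genuinely $\BPP$-specific complication, namely that the collapse $\NP^{\BPP}\subseteq\BPP$ and the oracle-removal $\BPP^{\BPP}=\BPP$ must be performed with error amplified enough to survive the union bound over exponentially many possible queries at each of the constantly many levels; once phrased this way, the refinement bookkeeping goes through exactly as in Theorem~\ref{PolyHier}. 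The only soft spot is in step (3), where ``replace the $\NP$-query by $B$'' elides a substep: after substituting the $\BPP$-simulation of the $\Sigma_{i-1}\P$ oracle and fixing a good random string (which exists with high probability after amplification), the resulting $\NP$ question must first be reduced to a $\SAT$ instance whose length is controlled to lie in the current refinement, and only then handed to $B$. This is a presentational gap rather than a mathematical one -- you already acknowledge the need for care at exactly this point -- so the proposal stands.
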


We omit the proof of Theorem \ref{PolyHierNonunif}, which
closely resembles the proof of Theorem \ref{PolyHier}.

Next we show a robust analogue of the Karp-Lipton theorem
\cite{Karp-Lipton82}. We formulate a stronger statement which
will be useful when we show significant fixed-polynomial
circuit size lower bounds for $\Sigma^{p}_{2}$. The proof is
simpler than for some of the other analogues, since the
assumption is only used once during the proof.

\begin{lemma}
\label{KLRobust} If there is a constant $k$ and a robust set
$S$ such that $\SAT \in \SIZE(n^k)$ on $S$, then there is a
robust refinement $S'$ of $S$ such that $\Pi_2SAT \in
\Sigma_2-\TIME(n^{k+1+o(1)})$.
\end{lemma}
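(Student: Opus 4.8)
The plan is to adapt the standard Karp--Lipton argument to the robustly-often setting, using the fact that the size hypothesis is invoked only once. Suppose $\SAT \in \SIZE(n^k)$ on the robust set $S$, witnessed by a family of circuits $\{C_n\}$ of size $O(n^k)$ that correctly decide $\SAT$ on all input lengths $n \in S$. First I would recall the self-reducibility trick underlying Karp--Lipton: from a circuit of size $s$ purportedly deciding $\SAT$ on length $n$, one can, in time polynomial in $s$ and $n$, either produce a satisfying assignment to a given formula of length $\leq n$ or detect that the circuit is not a correct $\SAT$-oracle at some length; crucially, if the circuit \emph{is} correct on all relevant lengths, the self-reduction succeeds. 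So the existential player in a $\Sigma_2$ computation can guess the circuit $C$ and then the universal player verifies, via the self-reduction, that $C$ behaves consistently as a $\SAT$-decider; any formula produced along the way has length polynomial in the original input, and we will need all those lengths to lie in $S$.

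The key quantitative point is to track which input lengths are needed. On an input of length $n$ for $\Pi_2\SAT$ — a formula $\forall y\, \exists z\, \phi(y,z)$ with $|\phi|$ of size roughly $n$ — the natural $\Sigma_2$ simulation guesses a circuit $C$ of size $O(n^k)$ intended to decide $\SAT$ on formulas of length at most $n^{1+o(1)}$ (the inner formula $\exists z\, \phi(y,z)$ after substituting a value for $y$, plus the overhead of the self-reduction to extract a witness $z$). It then universally checks: (i) that $C$ is self-consistent as a $\SAT$-decider on formulas up to this length, via downward self-reducibility, which is a deterministic polynomial-time check given $C$; and (ii) that for the given $y$, running the self-reduction using $C$ yields a genuine satisfying $z$ for $\phi(y,z)$. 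If $C = C_m$ for the appropriate $m \leq n^{1+o(1)}$ and \emph{every} length touched by the self-reduction lies in $S$, then the simulation is correct. The circuit has size $O(n^{k+o(1)})$, it is guessed and then used inside a single universal phase of deterministic work running in time $\mathrm{poly}(n^k) = n^{k+o(1)}$, so the whole thing is a $\Sigma_2$-computation running in time $n^{k+1+o(1)}$ (the extra factor absorbing the circuit-evaluation and self-reduction overhead).

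To finish, I would take the canonical refinement: let $S'$ be the set of $m$ such that $m \in S$ and $n \in S$ for all $m \leq n \leq m^{c}$, for a suitable constant $c$ (say $c = 2$, comfortably larger than the $1+o(1)$ blow-up) — this is exactly $S_c$ in the notation of Definition \ref{canonical}, which is robust whenever $S$ is. For $n \in S'$, every formula length arising in the simulation on inputs of length $n$ lies in $[n, n^c] \subseteq S$, so the guessed circuit $C_m$ (for the relevant $m$) is correct on all those lengths, and the self-reduction genuinely decides the inner $\SAT$ instances; hence the $\Sigma_2$-machine correctly decides $\Pi_2\SAT$ on all inputs of length $n \in S'$. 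This gives $\Pi_2\SAT \in \Sigma_2\text{-}\TIME(n^{k+1+o(1)})$ on $S'$, which is the claim. The main obstacle is the careful bookkeeping in the second paragraph: one must ensure that \emph{all} the $\SAT$-instance lengths invoked — both in the self-consistency check of $C$ and in the witness-extraction for the universally-quantified $y$ — stay within a fixed polynomial of $n$, so that a single canonical refinement $S_c$ suffices to make every use of the hypothesis correct simultaneously; this is precisely the "single assumption, one refinement" pattern the paper emphasizes, and it works here because the Karp--Lipton argument uses the circuit hypothesis only once.
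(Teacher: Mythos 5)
Your proof is correct and follows essentially the same route as the paper: the standard Karp--Lipton simulation, with the observation that all $\SAT$-instance lengths touched are bounded by $n^{1+o(1)}$, so passing to the canonical refinement $S_2$ makes the single use of the circuit hypothesis valid on a robust set. The only cosmetic difference is that the paper folds the self-reduction into the guess by guessing a witness-producing circuit $C'$ of size $O(n^{k+1})$ up front and universally checking $\phi(\vec{x}, C'(f(\langle\phi,\vec{x}\rangle)))$ directly, whereas you guess the smaller $\SAT$-deciding circuit $C$ and run the downward self-reduction inside the universal phase; both give the same $n^{k+1+o(1)}$ bound and the same refinement.
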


Proof in the appendix.

The following is an immediate corollary.

\begin{corollary}
\label{KLcorollary} If $\NP \subseteq \ro\SIZE(poly)$, then
$\Sigma^{p}_{2} \subseteq \ro\Pi^{p}_{2}$.
\end{corollary}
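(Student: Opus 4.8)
The plan is to deduce Corollary \ref{KLcorollary} directly from Lemma \ref{KLRobust} together with the robust translation machinery (Proposition \ref{PolyPadding}) and the completeness reduction (Lemma \ref{ComplRef} / Proposition \ref{SATro}), being careful that each refinement step only consumes the \emph{single} robustly-often assumption once. First I would observe that $\NP \subseteq \ro\SIZE(poly)$ in particular gives $\SAT \in \ro\SIZE(poly)$, so there is a constant $k$ and a robust set $S$ with $\SAT \in \SIZE(n^k)$ on $S$. Then Lemma \ref{KLRobust} hands us a robust refinement $S'$ of $S$ on which $\Pi_2\SAT \in \Sigma_2\text{-}\TIME(n^{k+1+o(1)})$, i.e., $\Pi_2\SAT \in \Sigma^{p}_{2}$ on $S'$.

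Next I would upgrade this set-bounded simulation of the single language $\Pi_2\SAT$ to a simulation of all of $\Pi^{p}_{2}$. Since $\Pi_2\SAT$ is complete for $\Pi^{p}_{2}$ under honest polynomial-time m-reductions, and $\Sigma^{p}_{2}$ is closed under polynomial-time m-reductions, Lemma \ref{ComplRef} yields, for each $L \in \Pi^{p}_{2}$, a robust refinement $S'_L$ of $S'$ on which $L \in \Sigma^{p}_{2}$; more concretely, if $L \in \Pi_2\text{-}\TIME(n^d)$ then (as in the remark after Proposition \ref{SATro}) one can take $L \in \Sigma^{p}_{2}$ on the canonical refinement $S'_{d+1}$. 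Either way $L \in \ro\Sigma^{p}_{2}$, so $\Pi^{p}_{2} \subseteq \ro\Sigma^{p}_{2}$, and complementing every language gives the claimed $\Sigma^{p}_{2} \subseteq \ro\Pi^{p}_{2}$.

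The one genuinely delicate point — though for this corollary it is mild — is the same composition issue flagged in the introduction: a priori two robustly-often simulations need not compose, so I must make sure the two uses of ``$\SAT$ is small on some set'' are really the \emph{same} use. That is why the argument threads a single robust set $S$ through Lemma \ref{KLRobust} and then only ever passes to canonical refinements of the resulting $S'$; no second independent robustly-often hypothesis is invoked. Because Lemma \ref{KLRobust} is already stated in the ``one robust set in, one robust refinement out'' form, and Lemma \ref{ComplRef} likewise only refines the set it is given, the chain $S \supseteq S' \supseteq S'_{d+1}$ stays inside one robust set and everything goes through; the $o(1)$ slack in the exponent is harmless since we only need containment in $\Sigma^{p}_{2}$. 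I expect no further obstacle, as this is exactly the ``immediate corollary'' the text advertises.
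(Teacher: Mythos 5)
Your proof is correct and fills in exactly the steps the paper leaves implicit when it labels Corollary \ref{KLcorollary} an ``immediate corollary'' of Lemma \ref{KLRobust}: extract a robust set $S$ for $\SAT$, apply Lemma \ref{KLRobust} to get $\Pi_2\SAT \in \ro\Sigma^p_2$, lift to all of $\Pi^p_2$ via completeness and Lemma \ref{ComplRef}, then complement. Your remark about threading a single robust set through the argument is the right instinct, though here (as you note) it is especially easy because the hypothesis is only consumed once, inside Lemma \ref{KLRobust}.
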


Analogously, it is easy to show an robustly often version of a
theorem credited to Nisan by Babai, Fortnow and Lund
\cite{Babai-Fortnow-Lund91}. We need the following lemma.

\begin{lemma}
\label{KLMNro} Let $L$ be a language complete for $\E$ under
linear-time reductions. If there is a robust set $S$ and a
constant $k \geq 1$ such that $L \in \SIZE(n^k)$ on $S$, then
there is a robust refinement $S'$ of $S$ such that $L \in
\MATIME(n^{2k})$ on $S'$.
\end{lemma}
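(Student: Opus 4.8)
The plan is to mimic the classical Karp--Lipton-style argument that yields ``$\E \subseteq \SIZE(n^k)$ implies $\E \subseteq \MA$'' (the result credited to Nisan), but to execute it carefully on a single robust set so that only canonical refinements of $S$ are used, never exponential padding. Recall the classical idea: if $L$ is $\E$-complete under linear-time reductions and has circuits of size $n^k$, then the PCP/interactive-proof machinery for $\E$ (equivalently, the fact that $\E$ has a complete problem with a ``checkable'' or instance-checkable proof system, e.g.\ via arithmetization of a $\SPACE$-complete or $\EXP$-complete problem) lets an $\MA$ machine guess a polynomial-size circuit purportedly computing $L$ on the relevant input lengths, and then use the interactive protocol (made into an $\MA$ protocol because the prover's messages can be assumed to be computed by the guessed circuit, which Merlin sends in one round) to verify both that the circuit is correct and that it yields the right answer on the actual input. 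The running time is dominated by evaluating the circuit at polynomially many points, giving roughly $n^{2k}$, which matches the claimed $\MATIME(n^{2k})$ bound.

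First I would fix the instance-checkable / downward-self-reducible complete language for $\E$: take $L$ complete for $\E$ under linear-time reductions, and use the fact that $\E$ has a problem solvable by an interactive proof in which, on input of length $n$, the honest prover's answers are computable by a circuit of size $\poly(2^n/\ldots)$... no --- the key point is subtler, so instead I would use the standard observation that a $\DTIME(2^{O(n)})$ computation can be made ``self-correctable'' at the level of a multilinear extension over a field of size $\poly(n)$, so that given a circuit $C$ of size $n^k$ on inputs of length $n$ one can \emph{verify} via a constant-round (hence $\MA$-reducible) sum-check-style protocol, running in time $\poly(n) \cdot (\text{circuit evaluation cost})$, that $C$ agrees with $L$ on length $n$, and simultaneously decode the answer on the given input. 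Merlin sends $C$; Arthur runs the (public-coin, constant-round) verification, collapsing it into $\MA$ in the usual way (and using that $\E$-complete-under-linear-time-reductions lets us keep input lengths linearly related so the protocol really does run at length $n$, not a blown-up length).

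The main obstacle --- and the reason this needs the robust-refinement machinery rather than being immediate --- is that the hypothesis $L \in \SIZE(n^k)$ only holds \emph{on} the robust set $S$, i.e.\ for input lengths $n \in S$ there is a good circuit, but the interactive-proof verification for $L$ at length $n$ naturally queries the ``$L$-oracle'' (or the arithmetized computation) at some polynomially larger length $n'$, and we need $n' \in S$ as well in order for Merlin's circuit to be correct there. This is exactly what the canonical refinement is for: if the verification at length $n$ only probes lengths up to $n^c$ for a fixed constant $c$ (depending only on the linear-time completeness reduction and the arithmetization, not on $k$), then take $S' = S_c$, the canonical refinement of $S$ at level $c$; for $n \in S'$ every length the protocol touches lies in $S$, so Merlin can supply a correct circuit for each such length (Merlin can send all of them, since there are only $\poly(n)$ many and each has size $n^{k\cdot c} = \poly(n)$), Arthur evaluates them in time $\poly(n)$, and the protocol is sound because the instance-checker/sum-check is sound regardless of whether the circuits are honest. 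Thus $L \in \MATIME(n^{2k})$ on $S'$, with $S'$ a robust refinement of $S$ by the remark following Definition~\ref{canonical}. I would double-check that the exponent is genuinely $2k + o(1)$: one factor of $k$ from the size of each circuit Merlin sends, one from the number of evaluations needed along the protocol transcript, with all other overhead ($\poly(n)$ from arithmetization and from the number of probed lengths) absorbed into the $o(1)$ in the exponent; if the honest constant $c$ pushes this above $n^{2k}$ I would instead state the bound as $\MATIME(n^{ck})$ or simply $\MATIME(n^{O(k)})$, but the structure of the argument is unaffected.
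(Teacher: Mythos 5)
The paper does not actually include a proof of Lemma~\ref{KLMNro} in the appendix; it is dismissed with ``analogously, it is easy to show,'' pointing back to the proof template of Lemma~\ref{KLRobust} and to Nisan's observation in Babai--Fortnow--Lund. Your proposal follows exactly that intended route: use the interactive-proof machinery for an $\E$-complete problem (arithmetization/sum-check, prover simulable by a circuit under the circuit-size hypothesis), have Merlin supply the circuit(s), and --- crucially --- pass to a canonical refinement $S_c$ so that every input length the verifier probes lies in $S$, where the prover's circuit is actually correct. You have correctly identified that this last step is the only genuinely new ingredient relative to the classical proof, and that it works because the linear-time completeness and the arithmetization together keep all probed lengths within a \emph{fixed} polynomial factor of $n$, so a single canonical refinement suffices. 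This is the same approach the surrounding text of the paper presupposes, so I would call this a match rather than a different route.

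Two small cautions. First, your accounting for the exponent $2k$ is admittedly heuristic; with a linear-time-complete $L$ and a quasilinear arithmetization the probed lengths are $n^{1+o(1)}$, which gives circuits of size $n^{k+o(1)}$ and verification time $n^{k+O(1)}$ rather than obviously $n^{2k}$. The bound $n^{2k}$ is most plausibly a deliberately loose bound chosen so that the downstream padding calculations in the proof of Theorem~\ref{IKWro} come out cleanly (e.g.\ $n^{2k} \to n^{4k^2}$); your hedge that one could instead write $\MATIME(n^{O(k)})$ is the honest reading and does not affect any of the applications. Second, the writeup is somewhat circuitous (the false start in the second paragraph, the parenthetical self-corrections); a cleaner version would simply fix at the outset an $\E$-complete $L$ with a self-correctable low-degree encoding, state that the protocol at length $n$ only probes lengths in $[n, n^{c}]$ for a constant $c$ depending only on the encoding, and then take $S' = S_c$. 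But the substance is correct.
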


The analogue of Nisan's result follows as a corollary.

\begin{corollary}
\label{Nisanro} If $\EXP \subseteq \ro\SIZE(poly)$, then $\EXP
\subseteq \ro\MA$.
\end{corollary}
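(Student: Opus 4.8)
The plan is to adapt Nisan's argument that $\EXP\subseteq\P/\poly$ implies $\EXP=\MA$ — which routes through an instance checker for an $\E$-complete language — while bookkeeping input lengths carefully so that the simulation survives on a canonical refinement of $S$. As in Lemma~\ref{KLRobust}, the circuit hypothesis is used only once, so a single canonical refinement will suffice. First I would fix an \emph{instance checker} $\mathcal{C}$ for $L$: a probabilistic polynomial-time oracle machine such that for every input $x$, $\mathcal{C}^{L}(x)$ outputs $[x\in L]$ with probability at least $2/3$, while for \emph{every} oracle $O$ and every $x$, $\mathcal{C}^{O}(x)$ outputs a value in $\{[x\in L],\textrm{``fail''}\}$ with probability at least $2/3$. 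Such a checker exists for any $\E$-complete language, by the $\mathrm{MIP}=\NEXP$ machinery of Babai--Fortnow--Lund combined with the observations that the honest prover for a $\DTIME(2^{O(n)})$ computation is itself computable in $\E$ and, using linear-time completeness of $L$, many-one reducible to $L$ with only a linear blow-up in instance length; so on an input of length $m$ all of $\mathcal{C}$'s queries can be taken to have lengths in an interval $[m,p(m)]$ with $p$ \emph{linear}, and we pick a constant $d_0$ with $p(m)\le m^{d_0}$ for all large $m$. Quasilinear arithmetization moreover lets $\mathcal{C}$ run in time $m^{1+o(1)}$ and ask $m^{1+o(1)}$ queries.

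Next I would give the $\MA$ protocol on input $x$ with $|x|=m$: Merlin sends, for each length $\ell\le p(m)$ that $\mathcal{C}$ could query, a circuit $C_\ell$ purportedly computing $L_\ell$; Arthur simulates $\mathcal{C}(x)$, answering each oracle query $y$ by evaluating $C_{|y|}(y)$, and accepts iff $\mathcal{C}$ accepts. Soundness is immediate: Merlin's circuits describe some oracle $O$, and if $x\notin L$ then by the checker guarantee $\mathcal{C}^{O}(x)$ does not accept with probability $\ge 2/3$. For completeness I would set $S'=S_{d_0}$, the canonical refinement of $S$ at level $d_0$ (Definition~\ref{canonical}), which is robust since $S$ is; for $m\in S'$ every queried length $\ell$ satisfies $m\le\ell\le p(m)\le m^{d_0}$ and hence lies in $S$, so a genuine circuit of size $O(\ell^{k})=O(m^{k})$ for $L_\ell$ exists and Merlin can send these, giving acceptance with high probability. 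Arthur's cost is $m^{1+o(1)}$ for running $\mathcal{C}$ plus $m^{1+o(1)}$ evaluations of size-$O(m^{k})$ circuits, i.e.\ $m^{k+1+o(1)}$ total, which is $O(m^{2k})$ precisely because $k\ge 1$; hence $L\in\MATIME(m^{2k})$ on $S'$.

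The crux — and the only place robustness is genuinely needed — is that the checker on length $m$ must be handed \emph{correct} circuits at \emph{all} the queried lengths \emph{simultaneously}, which an infinitely-often hypothesis cannot supply but a robust one can, precisely because passing to $S_{d_0}$ guarantees an entire polynomial-length window $[m,m^{d_0}]\subseteq S$. The remaining technical care is in (i) arranging that $\mathcal{C}$'s queries all have length $\ge m$ (pad the checker) so the window covers them, and in discarding the finitely many small $m$ with $p(m)>m^{d_0}$ while keeping $S'$ robust; and (ii) invoking a quasilinear-overhead instance checker so the $m^{1+o(1)}$ factors do not spoil the exponent $2k$. With only a generic $\poly$-time checker one would get $\MATIME(m^{O(k)})$ on $S'$, which already yields Corollary~\ref{Nisanro} but not the sharp bound stated.
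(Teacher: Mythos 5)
Your proposal is correct and matches the paper's approach: the paper derives Corollary~\ref{Nisanro} directly from Lemma~\ref{KLMNro} (which it states without proof), and your argument is precisely a proof of that lemma via the Babai--Fortnow--Lund instance checker, with the canonical refinement $S_{d_0}$ supplying Merlin with correct circuits at all polynomially-bounded query lengths simultaneously. The one caveat, which you yourself flag, is that the lemma's sharp $\MATIME(n^{2k})$ exponent requires the checker's query lengths and running time to be quasilinear in $m$; a generic polynomial-overhead checker gives only $\MATIME(n^{O(k)})$, which still suffices for the corollary as stated.
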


The following can be shown using the easy witness method of
Kabanets \cite{Kabanets01, Impagliazzo-Kabanets-Wigderson02}
and known results on pseudo-random generators
\cite{Nisan-Wigderson94, Klivans-vanMelkebeek99}.

\begin{lemma}
\label{EasyWitness} Let $R$ be any robust set and $k > 1$ be
any constant. Then there is a robust refinement $R'$ of $R$
such that either $\NE \subseteq \DTIME(2^{n^{16k^4}})$ on $R$
or $\MATIME(n^{4k^2}) \subseteq \NE/O(n)$ on $R'$.
\end{lemma}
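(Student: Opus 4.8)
The plan is to adapt Kabanets' easy-witness method to the robustly-often setting, carrying a single robust set through the argument by taking canonical refinements. Fix a robust set $R$ and a constant $k > 1$. We consider the function $s(n) = n^{4k^2}$ governing the $\MATIME$ computation. The dichotomy to aim for is: either $\NE$-machines have \emph{easy witnesses} at all relevant lengths in $R$ (meaning accepting computations can always be described by small circuits), in which case a brute-force search over circuit descriptions puts $\NE$ into $\DTIME(2^{n^{O(k^4)}})$ on $R$; or else some $\NE$-machine has \emph{no} easy witness infinitely often within $R$, and that hard witness is a truth table of a function with large circuit complexity, which via the Nisan--Wigderson \cite{Nisan-Wigderson94} and Klivans--van Melkebeek \cite{Klivans-vanMelkebeek99} pseudo-random generator constructions derandomizes $\MATIME(n^{4k^2})$ into $\NE/O(n)$ on a robust refinement $R'$ of $R$. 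The $O(n)$ advice is used to hand the derandomization procedure the input length at which (and a pointer to the machine for which) the hard truth table is found, together with enough of a description to reconstruct it nondeterministically.

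First I would fix, for each constant $c$, a canonical enumeration of $\NE$-machines and set the easy-witness parameter so that a witness of a machine running in time $2^{cn}$ on an input of length $n$ is "easy" if the corresponding $cn$-bit-indexed truth table has circuits of size $n^{a}$ for an appropriate polynomial exponent $a = a(k)$; the exponents $16k^4$ and $4k^2$ in the statement are exactly what fall out of chasing the NW/KvM parameters (seed length, stretch, and the blow-up from the reconstruction argument) through this choice. Second, I would define the "easy-witness holds on $R$" side: if for every $\NE$-machine and every $n \in R$ every accepting path has an easy witness, then the $\DTIME(2^{n^{16k^4}})$ simulation on $R$ is immediate by enumerating all circuits of the relevant size, evaluating them to produce candidate witnesses, and running the verifier. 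Third, I would handle the complementary case: there is an $\NE$-machine $N$ and infinitely many $n \in R$ with an input on which $N$ accepts but has no easy witness; picking the accepting witness lexicographically (or by any fixed rule) gives, for those lengths, a truth table $f_n$ with circuit complexity $> n^{a}$, and by a standard averaging/padding argument one gets functions of the right hardness at a robust subset of input lengths — here is where I take the canonical refinement $R'$ of $R$ (at a level depending on $k$) so that the lengths at which the generator is seeded and the lengths at which $\MATIME(n^{4k^2})$ is being simulated both lie in $R$ simultaneously. Finally, on $R'$ one feeds $f_n$ into the NW generator, uses it to replace the random bits of the $\MATIME$ computation, and verifies the result nondeterministically by guessing and checking the hard truth table and its circuit-lower-bound certificate — putting the language into $\NE/O(n)$ on $R'$.

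The main obstacle I anticipate is the bookkeeping that makes the two alternatives land on compatible robust sets: the easy-witness side is naturally a statement about all of $R$, whereas the hard-witness side only gives hardness at \emph{some} lengths in $R$, and the derandomization needs hardness at lengths polynomially related to the lengths where we want the $\MATIME$ simulation. Getting these to coincide is exactly what forces the refinement $R'$ and the particular quantifier structure in the statement ("$\NE \subseteq \DTIME(\cdots)$ on $R$ \emph{or} $\MATIME(\cdots) \subseteq \NE/O(n)$ on $R'$"), and care is needed because a robustly-often hardness assumption, unlike an almost-everywhere one, does not by itself hand you hardness at every length you later query — one must pass to a canonical refinement and argue, as in the remark following Proposition \ref{SATro}, that the polynomial reductions/transductions involved map $R'$-lengths to $R$-lengths. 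The pseudo-random generator machinery itself is off-the-shelf; the novelty is purely in threading a single robust set through both horns of the dichotomy.
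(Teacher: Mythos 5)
The paper does not include a proof of Lemma \ref{EasyWitness}; it merely cites Kabanets and the standard PRG constructions. So this has to be judged on its own merits, and I think the proposal, while identifying the right framework, sets up the dichotomy incorrectly and then papers over the resulting gap.

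Your dichotomy is ``easy witnesses hold at \emph{every} length in $R$'' versus ``some $\NE$-machine has a yes-instance with no easy witness at infinitely many $n \in R$.'' The first horn does give $\NE \subseteq \DTIME(2^{n^{O(k^4)}})$ on $R$. The problem is the second horn. You claim that ``by a standard averaging/padding argument one gets functions of the right hardness at a robust subset of input lengths,'' but no such argument exists: hardness of a truth table at length $n$ does not propagate across a \emph{polynomial} range of input lengths. Restricting a hard function on $m$ variables gives a function on $m-1$ variables with roughly half the hardness, so the hardness budget is exhausted after $O(\log n)$ steps, i.e., over a $\polylog$-sized window, not an $n$-vs-$n^k$ window. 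If the set of ``no easy witness'' lengths is, say, $\{2, 2^2, 2^{2^2}, \ldots\} \cap R$, then it is infinite but contains no robust subset, and your Case 2 simply fails to deliver a robust $R'$. So the two horns of your dichotomy do not cover all possibilities: it can happen that easy witnesses fail somewhere (killing your Case 1 on $R$) while the failure set is too sparse to be robust (killing your Case 2).

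The fix is to flip which side of the dichotomy the padding is used on. Define $H \subseteq R$ as the set of lengths where the universal $\NE$-machine lacks an easy witness of the appropriate polynomial size. The right case split is whether $H$ contains a robust refinement of $R$, not whether $H$ is empty. If $H$ does contain a robust refinement $R'$, feed the hard truth tables into NW/KvM and derandomize $\MATIME(n^{4k^2})$ into $\NE/O(n)$ on $R'$ -- this is the part of your writeup that is fine, including the $O(n)$ advice pointing to the hard instance. If $H$ does \emph{not} contain a robust refinement, then there is a fixed $d$ such that for every $m$ there is some $n \in [m, m^d]$ with easy witnesses. Now the deterministic simulation on $R$ goes through by a padding search: to decide an $\NE$-instance of length $m$, pad it to each length $n$ in $[m, m^d]$, run the easy-witness enumeration at that length, and accept iff any of them produces a valid witness. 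Soundness is automatic (the verifier never accepts a bad witness), and completeness follows because at least one $n$ in the window has easy witnesses. The running time is $\sum_{n \le m^d} 2^{n^{O(a)}} = 2^{m^{O(da)}}$, which with the appropriate choice of exponents is $2^{m^{16k^4}}$. This is the ``either/or on $R$ vs.\ on $R'$'' structure of the lemma: Case 1 lives on all of $R$ because the padding search works everywhere, while Case 2 lives on the refinement $R'$ carved out by the robust part of $H$. Your intuition that padding and canonical refinements are the key tools was correct, but you applied the padding on the wrong horn of the dichotomy.
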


We next derive an robustly often  analogue of the main theorem
of Impagliazzo, Kabanets and Wigderson
\cite{Impagliazzo-Kabanets-Wigderson02}.

\begin{theorem}
\label{IKWro} If $\NEXP \subseteq \ro \SIZE(poly)$, then $\NEXP
\subseteq \ro \MA$.
\end{theorem}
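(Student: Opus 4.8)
The plan is to mimic the classical Impagliazzo--Kabanets--Wigderson argument while being careful that every use of the hypothesis $\NEXP \subseteq \ro\SIZE(poly)$ is traced back to a \emph{single} robust set, with all subsequent manipulations carried out as a chain of canonical (or other polynomial-length) refinements of that one set. First I would unpack the hypothesis: since $\NEXP \subseteq \ro\SIZE(poly)$, there is in particular a robust set $S$ on which some fixed $\NEXP$-complete language (under linear-time reductions, e.g. a succinct-circuit version of an $\NEXP$-complete problem) lies in $\SIZE(n^c)$ for a constant $c$. By Lemma~\ref{ComplRef}-style reasoning this will push down to all of $\NEXP$ on suitable refinements, and by a padding/translation argument it also gives $\EXP \subseteq \SIZE(poly)$ on a refinement of $S$, hence by Lemma~\ref{KLMNro}/Corollary~\ref{Nisanro} (applied on that refinement) $\EXP \subseteq \MA$ on a further refinement $S'$ of $S$.

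Next I would invoke Lemma~\ref{EasyWitness} with the robust set in hand and an appropriate constant $k$ (chosen large enough to dominate the polynomial blow-ups in the circuit size and the $\MATIME$ bound coming from the previous step). This yields a robust refinement $R'$ on which we are in one of two cases. In the ``hard'' case, $\NE \not\subseteq \DTIME(2^{n^{16k^4}})$ already on $R'$, which combined with standard hardness-to-randomness (Nisan--Wigderson / Klivans--van Melkebeek style, with the parameters tracked to hold on $R'$ rather than a.e.) derandomizes $\MA$ down to $\NP$ on a refinement, and in particular derandomizes the $\MA$ simulation of $\EXP$ obtained above; feeding this back into $\NEXP$ via the collapse $\EXP$-in-$\MA$ together with the easy-witness characterization gives $\NEXP \subseteq \MA$ (indeed a $\ro$ simulation, since everything holds on a refinement of the original $S$). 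In the ``easy'' case, $\MATIME(n^{4k^2}) \subseteq \NE/O(n)$ on $R'$, so the $\MA$ simulation of $\EXP$ can be upgraded to an $\NE/O(n)$ simulation; combined with $\EXP \subseteq \SIZE(poly)$ on the same refinement and a translation argument (Proposition~\ref{PolyAdvice}-type, to manage the linear advice) one concludes that $\NEXP$ has small circuits that can be found in $\NE$, so the easy-witness method collapses $\NEXP$ into $\MA$, again on a robust refinement of $S$. Either way $\NEXP \subseteq \MA$ on a robust set, i.e. $\NEXP \subseteq \ro\MA$.

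The delicate bookkeeping step, and the one I expect to be the main obstacle, is ensuring that the case analysis of Lemma~\ref{EasyWitness} and the subsequent derandomization/collapse are all performed relative to refinements of the \emph{same} robust set $S$ that witnesses the hypothesis, rather than independent infinitely-often events. In the classical proof the two cases are combined with a win-win argument that implicitly uses the hypothesis ``everywhere''; here I must instead fix $S$ once, pass to a canonical refinement $S_d$ for a $d$ large enough to absorb all the polynomial overheads (the $n^{16k^4}$ exponent, the $\MATIME$ degree, the advice length, the reduction blow-ups), and verify that each implication in the chain is of the ``fixed polynomial-time / fixed polynomial advice'' type for which refinements compose (as discussed in the Intuition section) and never of the ``exponential padding'' type. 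Concretely, the hardness-to-randomness direction needs the hard language's hardness to hold on input lengths lying \emph{densely} inside a polynomial window, which is exactly what a canonical refinement of a robust set supplies, so the pseudorandom generator construction can be run at each such length; getting the quantifiers in the right order there is the crux. Once the refinement discipline is set up correctly, each individual step is essentially the classical argument with polynomially-bounded parameters, and the conclusion $\NEXP \subseteq \ro\MA$ follows.
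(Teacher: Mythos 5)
Your high-level plan matches the paper's: fix one robust set $S$ witnessing the hypothesis, pass everything through a chain of refinements of $S$ so that the implications compose (rather than being independent infinitely-often events), invoke Lemma~\ref{KLMNro} to get $\E\subseteq\MATIME(\mathrm{poly})$ on a refinement, and then feed a refinement into Lemma~\ref{EasyWitness}. Your opening paragraph and the ``delicate bookkeeping'' paragraph correctly identify the crux. That much is on target.

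Where the proposal goes wrong is in the case analysis after Lemma~\ref{EasyWitness}. The lemma's dichotomy is: either (1)~$\NE\subseteq\DTIME(2^{n^{16k^4}})$ on $R$, or (2)~$\MATIME(n^{4k^2})\subseteq\NE/O(n)$ on a refinement $R'$. In the paper's proof these two branches play \emph{asymmetric} roles: branch~(1) directly yields the conclusion, since $\NE\subseteq\DTIME(2^{n^{16k^4}})\subseteq\MA$ on a further canonical refinement (using the earlier $\DTIME\to\MATIME$ containment together with padding), hence $\NEXP\subseteq\ro\MA$. Branch~(2) is \emph{ruled out by contradiction}: on the refinement one chains $\DTIME(2^{n^{2k}})\subseteq\MATIME(n^{4k^2})\subseteq\NE/O(n)\subseteq\SIZE(n^k)$ (the last step by Proposition~\ref{PolyAdvice}), which contradicts the a.e.\ diagonalization $\DTIME(2^{n^{2k}})\not\subseteq\io\SIZE(n^k)$. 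Your proposal instead attempts to reach $\NEXP\subseteq\ro\MA$ in \emph{both} branches, and the reasoning in the branch you call ``hard'' --- derandomize $\MA$ to $\NP$, then ``feed back'' an easy-witness characterization --- is not substantiated; in fact if that derandomization held on the relevant refinement together with $\EXP\subseteq\MA$, you would be forced into $\EXP\subseteq\NP$-type absurdities, which is precisely the contradiction the paper exploits. Similarly, in the branch you call ``easy'' you assert that $\NEXP$ has small circuits findable in $\NE$ and hence collapses to $\MA$; this chain of implications is not given and does not follow from the ingredients on the table. The fix is to abandon the ``win-win'' framing: treat branch~(1) as the success case and branch~(2) as an impossibility established via Proposition~\ref{PolyAdvice} and direct diagonalization, exactly as in the paper.

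A smaller point: you also negate the first alternative of Lemma~\ref{EasyWitness} (writing ``$\NE\not\subseteq\DTIME(2^{n^{16k^4}})$'') as though the two alternatives were complementary; they are not, and the argument should consume the lemma's two alternatives as stated rather than their apparent negations.
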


Proof in the appendix.

\section{Significant Separations}

\subsection{Hierarchies}

The proofs of the hierarchies for deterministic time and space
actually give almost-everywhere separations and therefore
significant separations.

For nondeterministic time the situation is quite different.
Cook~\cite{Cook72} showed that
$\NTIME(n^r)\subsetneq\NTIME(n^s)$ for any reals $r<s$.
Seiferas, Fischer and Meyer~\cite{Seiferas-Fischer-Meyer78}
generalize this result to show that
$\NTIME(t_1(n))\subsetneq\NTIME(t_2(n))$ for
$t_1(n+1)=o(t_2(n))$. Zak~\cite{Zak83} gives a simpler proof of
the same result. All these proofs require building an
exponential (or worse) chain of equalities to get a
contradiction. Their proofs do not give almost everywhere
separations or significant separations. No relativizable proof
can give an $\io$ hierarchy as Buhrman, Fortnow and Santhanam
give a relativized world that $\NEXP\subseteq\io\NP$.

In this section we give relativizing proofs that
$\NEXP\not\subseteq\ro\NP$ and that
$\NTIME(n^r)\not\subseteq\ro \NTIME(n^s)$ for $r>s\geq 1$. The
latter proof requires a new proof of the traditional
nondeterministic time hierarchy.

%
%

\begin{theorem}
\label{NexpHier} $\NEXP\not\subseteq\ro\NP$.
\end{theorem}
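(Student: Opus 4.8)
The plan is to diagonalize in the style of the standard $\NEXP$ versus $\NP$ separation, but to do the bookkeeping carefully enough that the diagonalization defeats \emph{every} robust simulation, not merely every almost-everywhere or infinitely-often one. The key observation is that, unlike the nondeterministic time hierarchy at polynomial level, we have an exponential gap in resources here, so a direct diagonalization (no chaining, no translation) already works; we just need to make sure the set of input lengths on which we beat a given $\NP$-machine is itself \emph{co-robust} (its complement is not robust), so that no robust set can avoid it. Concretely, suppose for contradiction that $\NEXP \subseteq \ro\NP$. Let $\{M_i\}$ be a standard enumeration of $\NP$-machines, where $M_i$ runs in time $n^i + i$. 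I would build a language $L \in \NTIME(2^n) \subseteq \NEXP$ by a diagonalization that, on input lengths drawn from a suitably chosen infinite arithmetic-progression-like family, ensures $L$ disagrees with $M_i$ on a \emph{dense} (polynomially-spaced) collection of lengths.

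The first step is to fix the diagonalization template. Partition the positive integers into intervals $I_1, I_2, \dots$ where $I_j$ is responsible for one requirement. On inputs of length $n \in I_j$, the construction runs $M_{i(j)}$ (for the requirement index $i(j)$ assigned to $I_j$) on some canonical input of length $n$ and flips the answer; since $M_{i(j)}$ runs in time polynomial in $n$ and $L$ is allowed time $2^n$, this simulation-and-flip is easily carried out within the $\NEXP$ budget (indeed within $\NTIME(2^n)$, using the fact that we can deterministically simulate the nondeterministic polynomial-time machine and negate, which costs only exponential time). This guarantees $L_n \neq (M_{i(j)})_n$ for every $n \in I_j$. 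The second step — and this is where the "robust" strengthening lives — is to choose the intervals $I_j$ so that for each fixed requirement index $i$, the union of all $I_j$ with $i(j) = i$ is \emph{not} contained in the complement of any robust set; equivalently, the complement of this union is not robust. The cleanest way to arrange this: assign requirement $i$ to infinitely many intervals, spaced so that between consecutive such intervals the ratio of endpoints is bounded — e.g., ensure that for every $k$ and every large $m$, the interval $[m, m^k]$ contains a length on which we diagonalize against $M_i$. One can do this by letting the $j$-th interval assigned to requirement $i$ start at roughly $2^{2^{j}}$ (doubly-exponential spacing is overkill; a geometric blow-up with unbounded exponent suffices), interleaving requirements via a pairing function so that requirement $i$ recurs with gaps whose exponents tend to infinity.

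The third step assembles the contradiction. If $\NEXP \subseteq \ro\NP$ then in particular $L \in \ro\NP$, so there is some $\NP$-machine $M_i$ (appearing in our enumeration as $M_{i(j)}$ for infinitely many $j$) and a robust set $S$ with $L_n = (M_i)_n$ for all $n \in S$. By robustness, for the constant $k$ witnessing how far apart the requirement-$i$ intervals can be, there is an $m$ with $[m, m^k] \subseteq S$; by our spacing choice, $[m, m^k]$ contains some length $n \in I_j$ with $i(j) = i$, and there $L_n \neq (M_i)_n$, contradicting $n \in S$. Since the argument only uses the enumeration of $\NP$-machines and simulation-with-negation, it relativizes, matching the claim in the text. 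The main obstacle I anticipate is purely combinatorial: pinning down the interval schedule so that \emph{simultaneously} (a) each requirement is met on a family of lengths that pierces every $[m, m^k]$ for its own threshold $k$, and (b) the construction of $L$ on length $n$ only needs to know which single requirement owns $n$ and can compute $i(j)$ and the canonical diagonalization input in time $2^{O(n)}$. Both are routine once the pairing function is chosen so that the requirement index assigned to the interval starting near $2^{t}$ is a slowly growing function of $t$ (e.g., $\log^* t$ or the inverse of a fast-growing schedule), guaranteeing each index recurs with exponent gaps $\to\infty$ while keeping $i(j)$ cheaply computable from $n$.
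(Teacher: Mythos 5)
Your approach — direct diagonalization against an enumeration of $\NP$-machines, arranging the diagonalization lengths to be dense enough to pierce every robust set — is a genuinely different route from the paper's. The paper argues indirectly: it fixes a language $K$ that is hard for $\io\NTIME(n^k)$ by a standard a.e.\ diagonalization, pads $K$ into $\NE$, invokes the assumed robust simulation of an $\NE$-complete set, and unpads (via Lemma~\ref{ComplRef}) to get $K \in \NTIME(n^k)$ on an infinite set, a contradiction. Your plan would, if executed correctly, be self-contained and not need the unpadding machinery. The correct combinatorial requirement your plan needs is: for each requirement index $i$ there is a constant $k_i$ such that \emph{every} interval $[m,m^{k_i}]$ (for all sufficiently large $m$) contains a length at which you diagonalize against $M_i$ — in other words, the complement of the set of requirement-$i$ lengths must \emph{not} be robust, so each requirement's lengths must have bounded multiplicative gaps \emph{in the exponent}. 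You state this correctly mid-paragraph (``ratio of endpoints is bounded,'' ``for every $k$ and every large $m$''), and such a schedule does exist (e.g., assign requirement $i$ to lengths $n$ with $2^i \| \lfloor \log n\rfloor$, giving exponent gaps bounded by a constant for each fixed $i$).

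However, the schedule you actually propose undoes this: both ``letting the $j$-th interval assigned to requirement $i$ start at roughly $2^{2^j}$'' and, explicitly, ``each index recurs with exponent gaps $\to\infty$'' give, for any fixed $i$, unboundedly large polynomial stretches of lengths containing no requirement-$i$ length. That makes the complement of the requirement-$i$ set robust, so a robust $S$ \emph{can} entirely avoid your diagonalization against $M_i$, and the contradiction in your third step does not go through. The slowly-growing-index variant ($i(j)\approx\log^* t$) is worse still: each index then occurs in only one contiguous block, so it does not recur infinitely often at all. This is not a presentational slip — it is exactly the point where ``infinitely often'' separations fail to upgrade to ``robustly often'' ones, and your schedule as written produces only an i.o.\ separation.

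There is a secondary gap in the resource accounting. You place $L$ in $\NTIME(2^n)$ by ``deterministically simulate the nondeterministic polynomial-time machine and negate.'' Negating $M_i$ on an input of length $n$ costs roughly $2^{O(n^i)}$ time, and $i = i(n)$ is unbounded over $n$; so without truncating the simulation (e.g., running $M_{i(n)}$ only when $n^{i(n)} \le n^c$ for a fixed $c$, or clocking it at $2^n$ steps and declining to diagonalize when it doesn't halt in time) the resulting $L$ is not in $\NTIME(2^{p(n)})$ for any fixed polynomial $p$, hence not obviously in $\NEXP$. This is fixable with standard clocking, but it interacts with the scheduling — you must ensure that after truncation each requirement is still met poly-densely — and the proposal does not address it.
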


Proof in appendix. Using similar ideas, we can get a separation
against sub-polynomial advice.

\begin{theorem}
\label{NexpHierAdv} $\NEXP \not \subseteq \ro \NP/n^{o(1)}$
\end{theorem}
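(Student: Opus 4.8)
The plan is to adapt the proof of Theorem~\ref{NexpHier} (that $\NEXP \not\subseteq \ro\NP$) to tolerate $n^{o(1)}$ bits of advice. The starting point is the observation that a significant separation only requires us to defeat each candidate simulator on one input length within every polynomial window; so we may fix a putative $\NP/n^{o(1)}$ machine $M$ with advice length $a(n) = n^{o(1)}$, assume toward a contradiction that $M$ robustly-often computes a hard $\NEXP$ language $H$ on some robust set $S$, and derive a contradiction by padding and translation. The diagonal language $H \in \NEXP$ should be chosen so that on input lengths in $S$ it encodes, via exponential padding, the behavior of $M$ (together with \emph{all} possible advice strings of length $a(n)$) on much shorter inputs, thereby forcing $M$ to contradict itself.

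The key steps, in order, would be: (1) Set up the padding so that a hard instance of length $n$ is simulated by $M$ at some length $N = 2^{n^{\Theta(1)}}$, and choose $H$ by diagonalizing against the pair (machine $M$, advice string $b$) for every $b \in \{0,1\}^{a(N)}$; since $a(N) = N^{o(1)} = 2^{n^{o(1)}}$, there are only $2^{2^{n^{o(1)}}}$ advice strings, which is still at most $2^{2^{O(n)}}$, so a $\NEXP$-machine working at length $n$ has enough time to cycle through all of them and simulate $M$'s (nondeterministic, hence co-nondeterministic-check-free) computation on each. Here I would reuse the ``chaining using witnesses'' device: rather than building an exponential chain of input lengths, $H$ at the relevant lengths asks $M$-with-every-advice to commit to witnesses at polynomially-related lengths, and the robustness of $S$ guarantees a window $[m, m^k]$ entirely inside $S$ on which the simulation must be correct, which is exactly where the self-contradiction is forced. (2) Use Definition~\ref{canonical} to pass to a canonical refinement $S_d$ of $S$ for the appropriate constant $d$ coming from the polynomial blow-up of the padding, so that the window in which all relevant input lengths lie is contained in $S$. (3) On that window, derive the contradiction: $H$ was defined to differ from $M$-with-advice-$b$ at some length, but $b$ was an arbitrary advice string of the right length, and $S_d$-membership forces $M$ to be correct on all these lengths simultaneously --- impossible. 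Because everything is relativizable (diagonalization plus counting advice strings), the proof relativizes, matching the remark preceding Theorem~\ref{NexpHier}.

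The main obstacle I expect is the bookkeeping in step~(1): ensuring that the exponential padding factor and the advice-enumeration factor together still leave the diagonalizing language inside $\NEXP$ (i.e., deterministic time $2^{n^{O(1)}}$ with nondeterministic guessing), \emph{and} that the ``hard'' input length and all the padded lengths used in the witness-chaining argument fit inside a single robust window. Since $a(n) = n^{o(1)}$, at length $N = 2^{\Theta(n^c)}$ we have $\log a(N) = o(n^c)$, so enumerating advice costs a factor $2^{a(N)} = 2^{2^{o(n^c)}}$, which is $2^{o(N)}$ --- genuinely subexponential in $N$, hence absorbable; but one must check that the nondeterministic simulation of $M$ (which itself runs in time $\mathrm{poly}(N)$) composed with this enumeration stays within $\NEXP$ as measured at the \emph{short} length $n$. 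I would handle this by making the padding generous enough (choosing $N$ a sufficiently large but fixed exponential in $n$) that $2^{a(N)} \cdot \mathrm{poly}(N) \le 2^{2^{O(n)}}$, which holds precisely because $a$ is sub-polynomial; the $o(1)$ in the advice bound is exactly what is needed and no more. The rest --- choosing the canonical refinement level $d$, and checking the self-contradiction on the robust window --- is then routine given Lemma~\ref{ComplRef}-style reasoning and the proof of Theorem~\ref{NexpHier}.
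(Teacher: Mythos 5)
The paper gives no explicit proof for Theorem~\ref{NexpHierAdv}, saying only that it follows from ``similar ideas'' to Theorem~\ref{NexpHier}. That proof constructs a language $K\in\DTIME(2^{n^{2k}})$ by direct diagonalization against $\io\NTIME(n^k)$ \emph{at the same input length $n$}, pads $K$ \emph{polynomially} to a language $K'\in\E\subseteq\NE$, and then uses the $\NE$-complete language $L$ and the robustly-often assumption to place $K$ back in $\NTIME(\mathrm{poly})$ on a robust set, contradicting the diagonalization. The advice version simply diagonalizes $K$ against $\io\NTIME(n^k)/n^{o(1)}$ instead: since advice strings at length $n$ number only $2^{n^{o(1)}}$, which is subexponential, the enumeration fits comfortably inside $\DTIME(2^{n^{2k}})$, and sub-polynomial advice stays sub-polynomial under the polynomial reductions used throughout.

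Your proposal instead tries to diagonalize ``live'' inside the $\NEXP$ language, across an \emph{exponential} padding gap $N = 2^{n^{\Theta(1)}}$, enumerating all advice strings for the simulator at length $N$. This is where the argument breaks: at length $N$, the advice length is $a(N) = N^{o(1)} = 2^{o(n^c)}$, so the number of advice strings is $2^{a(N)} = 2^{2^{o(n^c)}}$, which is doubly exponential in $n$. You claim ``still at most $2^{2^{O(n)}}$, so a $\NEXP$-machine working at length $n$ has enough time,'' but $\NEXP$ time at length $n$ is $2^{n^{O(1)}}$, \emph{singly} exponential — a $\NEXP$ machine cannot enumerate $2^{2^{O(n)}}$ advice strings, let alone $2^{2^{o(n^c)}}$ for larger $c$. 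The ``sufficiently generous'' choice of $N$ you invoke only makes the problem worse, since $a(N)$ grows with $N$. The fix is to diagonalize at short lengths in deterministic exponential time against sub-polynomial advice \emph{at those same lengths} (where the count $2^{n^{o(1)}}$ is harmlessly subexponential), exactly as the paper's $K$ does, and let polynomial padding plus $\NE$-completeness do the rest. Two smaller issues: your parenthetical ``nondeterministic, hence co-nondeterministic-check-free'' is backwards — diagonalizing against a nondeterministic machine requires determining when it \emph{rejects}, which is why the diagonalizer lives in deterministic exponential time, not $\NEXP$ directly; and the ``chaining using witnesses'' device belongs to the proof of Theorem~\ref{ntime}, which is a separate argument and is not used in Theorem~\ref{NexpHier}.
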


In the purely uniform setting, we now show a stronger version
of Theorem \ref{NexpHier} in the form of a significant
hierarchy for non-deterministic polynomial time.

\begin{theorem}
\label{ntime} If $t_1$ and $t_2$ are time-constructible
functions such that
\begin{itemize}
\item $t_1(n)=o(t_2(n))$, and
\item $n\leq t_1(n)\leq n^c$ for some constant $c$
\end{itemize}
then $\NTIME(t_2(n))\not\subseteq \ro\NTIME(t_1(n))$.
\end{theorem}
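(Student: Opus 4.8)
The plan is to prove the statement by a witness-chaining diagonalization, carefully arranged so that the chain has only polynomial length rather than the exponential length of the classical Cook/Zak argument; this polynomial bound is exactly what lets the contradiction survive against a robustly-often simulation. First I would set up a diagonalizing machine $D$ running in $\NTIME(t_2(n))$. The subtlety, familiar from the nondeterministic hierarchy, is that $D$ cannot simply complement a $t_1$-time machine $M_i$ on the same input length, since complementation blows up the time. The standard fix is a ``lazy diagonalization'' or ``delayed diagonalization'' scheme: on inputs of a block of consecutive lengths $[\ell_i, L_i]$, $D$ copies the behaviour of $M_i$ by guessing-and-verifying a computation of $M_i$ on the next-larger length, and only at the top of the block, length $L_i$, does $D$ actually flip the answer by brute force (which is affordable because the relevant input at the bottom of the block is exponentially smaller — no, here is the key change). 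The new ingredient, flagged in the introduction as ``chaining using witnesses,'' is that instead of copying the raw accept/reject bit up a chain of exponentially growing input lengths, $D$ guesses and passes along an actual accepting \emph{witness} for $M_i$, so that the top of the block only needs to be polynomially larger than the bottom. I would define the block lengths so that $L_i \le \ell_i^{c'}$ for a constant $c'$ depending only on $c$ (the exponent bounding $t_1$), and so that the blocks are disjoint and their left endpoints $\ell_i$ go to infinity.

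The second step is to argue correctness of the diagonalization on a single block. Suppose for contradiction that $L(D) \in \ro \NTIME(t_1(n))$, witnessed by a machine $M_i$ running in time $t_1$ and a robust set $S$; by Definition \ref{robustness}, for the constant $k = c'$ there is an $m \ge 2$ with $[m, m^{c'}] \subseteq S$. I would choose the index $i$ so that the block $[\ell_i, L_i]$ for the $i$-th appearance of $M_i$ in the enumeration lands inside $[m, m^{c'}]$ — this is where robustness is used, and it is used \emph{once}, so there is no composition-of-simulations issue. On this block, $M_i$ agrees with $D$ at every length. Then the witness-copying invariant forces: $D$'s behaviour on length $\ell_i$ equals $M_i$'s on length $\ell_i + 1$ (or the next block length) equals \dots equals $M_i$'s on length $L_i$; but at length $L_i$, $D$ was defined to differ from $M_i$ by brute-force complementation. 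Hence $D$ and $M_i$ disagree somewhere on the block, contradicting that the block lies in $S$.

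The third step is the bookkeeping that makes the brute-force step at the top affordable within $\NTIME(t_2)$. At length $L_i$, $D$ must decide the complement of $M_i$ on an input of length $\ell_i$; doing this deterministically costs roughly $2^{t_1(\ell_i)} \le 2^{\ell_i^{c}}$ time, and we need this to be $o(t_2(L_i))$. Since $t_2$ can be arbitrarily slow-growing (only $t_1 = o(t_2)$ is assumed), this forces $L_i$ to be \emph{exponentially} larger than $\ell_i$, not polynomially — which would break the whole point. So the real design constraint is that the complementation at the top of a block must itself be done nondeterministically and cheaply; the resolution is the classical trick of only partially complementing (flipping on a single carefully chosen input, using the fact that a nondeterministic machine can guess the unique place where the chain ``wraps around''), combined here with the witness-passing so that verifying a link costs only $t_1$ plus polynomial overhead rather than $2^{t_1}$. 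I expect this to be the main obstacle: reconciling ``the chain is only polynomially long'' with ``the machine at the top can still detect a violation of the chain invariant in $\NTIME(t_2)$ with $t_2$ only negligibly above $t_1$.'' The way I would handle it is to have $D$, on input of length $n$ in block $i$, guess a witness $w$ for $M_i$ on input $0^{n+1}$ (say), verify it in time $t_1(n+1) \le t_1(L_i) \le t_1(\ell_i^{c'})$ which is polynomial in $t_1(\ell_i)$ and hence $\le t_2(n)$ for $n$ large since $t_1$ is polynomially bounded and $t_2/t_1 \to \infty$; accept iff $w$ is a valid witness, \emph{except} at $n = L_i$ where $D$ instead simulates $M_i$ on $0^{\ell_i}$ directly in nondeterministic time $t_1(\ell_i) \le t_2(L_i)$ and flips. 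Then on a good block every length-to-length link is a genuine equality of languages, the top link is a genuine inequality, and the $t_2$ time bound holds at every length because every computation $D$ performs is a single $t_1$-time simulation plus polynomial overhead. Finally I would note the whole argument relativizes, since it only uses a fixed enumeration of clocked oracle machines and simulation with polynomial overhead, matching the claim made before Theorem \ref{NexpHier}.
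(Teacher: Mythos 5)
Your high-level reading of the paper's intent is right --- the point is to shorten the chain of the classical Cook/Seiferas--Fischer--Meyer/Zak argument from exponential to polynomial length, so that a single application of robustness pins the whole chain inside one good window. You also correctly identify the single hard step: how the diagonalizer can ``flip'' at the top of the chain within an $\NTIME(t_2)$ budget when $t_2$ is only a shade above $t_1$. But your resolution of that step does not work, and it is precisely there that the paper's argument departs from yours.

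The problem is the final move in your construction: ``$D$ instead simulates $M_i$ on $0^{\ell_i}$ directly in nondeterministic time $t_1(\ell_i)\le t_2(L_i)$ and flips.'' Flipping a nondeterministic acceptance is complementation; it cannot be done in $\NTIME(t_2(L_i))$ unless either $t_2(L_i)\ge 2^{\Theta(t_1(\ell_i))}$ (which forces $L_i$ exponentially larger than $\ell_i$, killing the polynomial-block structure you need) or one invokes the classical ``unique wrap-around'' trick, which itself requires a chain with exponentially many links so that there are more links than inputs at the base length. With only polynomially many links neither escape is available, so the argument does not close as written. The paper sidesteps this entirely by replacing the linear chain with a \emph{binary tree} of depth $t_1(i+m+2)$: on $1^i01^m0w$ with $|w|<t_1(i+m+2)$, the diagonalizer $M$ accepts iff \emph{both} one-bit extensions are accepted by $M_i$, and on $|w|\ge t_1(i+m+2)$ it accepts iff $w$, read as a computation path, is a \emph{rejecting} path of $M_i(1^i01^m0)$ --- a fast deterministic check. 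If $M_i$ agrees with $M$ throughout the (polynomially wide) block, unwinding the AND-tree gives: $M_i(1^i01^m0)$ accepts iff every leaf accepts iff every computation path of $M_i(1^i01^m0)$ rejects iff $M_i(1^i01^m0)$ rejects, a contradiction that never needs a complementation step. So ``chaining using witnesses'' here means the internal levels of the tree build up a candidate nondeterministic \emph{path} bit by bit and the leaf tests whether that path rejects, not that an accepting witness is relayed up a linear chain. Without this tree-with-path-leaves device, the polynomially short block cannot be made to self-refute, and that is the missing idea in your proposal.
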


\begin{corollary}
For any reals $1\leq r<s$, $\NTIME(n^s)\not\subseteq
\ro\NTIME(n^r)$.
\end{corollary}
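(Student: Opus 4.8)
The plan is to prove Theorem~\ref{ntime} by a diagonalization argument that, unlike the classical chaining proofs of Cook, Seiferas--Fischer--Meyer and Zak, avoids building an exponentially long chain of equalities, since such a chain could never yield a lower bound against $\ro$-simulations. The key new idea is to chain \emph{through witnesses} rather than through input lengths: instead of forcing a long sequence of input lengths on which the putative fast nondeterministic simulation must be correct, we set things up so that a contradiction can be extracted from a single robust interval $[m,m^k]$ of input lengths, whose length (polynomial in $\log$-scale) we control by choosing $k$ appropriately against the constant $c$ bounding $t_1$.

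First I would fix an enumeration $M_1, M_2, \dots$ of nondeterministic machines running in time $t_1(n)$ (with the usual clocking so that the enumeration is effective and each machine appears infinitely often), and design a nondeterministic machine $D$ running in time $t_2(n)$ that, on inputs encoding a machine index $i$ together with a counter, simulates $M_i$ on a slightly longer input and ``delays'' its answer, using the gap $t_1(n) = o(t_2(n))$ to afford one extra step of simulation. The standard lazy-diagonalization template assigns to each $M_i$ a block of consecutive input lengths $[\ell_i, L_i]$ on which $D$ copies $M_i$'s behavior on the next length up, except at the very top length $L_i$ where $D$ flips the answer of $M_i$ on a fixed small input; if $M_i$ were a correct $t_1$-time simulation of $L(D)$ on \emph{all} of $[\ell_i, L_i]$ we would get a contradiction. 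The twist for the robust setting is that we cannot afford $[\ell_i, L_i]$ to be exponentially long (that is exactly the classical chain), so instead I would have $D$ carry along a guessed \emph{witness string} for $M_i$'s computation and propagate it downward within the block, so that the block need only have polynomially-bounded multiplicative length $L_i \le \ell_i^{O(1)}$ rather than $L_i \approx 2^{\ell_i}$.

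The main steps in order: (1) define $D$ and its time bound, checking $t_2$-constructibility and that the witness-guessing fits within $t_2(n)$ (this is where $t_1(n)=o(t_2(n))$ and $t_1(n)\le n^c$ are both used); (2) suppose for contradiction $L(D) \in \ro\NTIME(t_1)$, so there is $M_i$ and a \emph{robust} set $S$ with $L(D) = L(M_i)$ on $S$; (3) use robustness to pick, for a suitable $k$ depending only on $c$ and the polynomial blow-up in the block construction, an $m$ with $[m, m^k] \subseteq S$, and argue that this single interval already contains a full diagonalization block $[\ell_i, L_i]$ for $M_i$ (since $L_i \le \ell_i^{k}$ by design, choosing $m \le \ell_i$ appropriately, using that $M_i$ recurs in the enumeration at arbitrarily large indices hence arbitrarily large $\ell_i$); (4) conclude that on this block $D$ and $M_i$ agree everywhere, chase the agreement up the block via the witness-propagation, and derive $M_i$'s answer on the top input contradicts $D$'s flipped answer there. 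The corollary then follows by taking $t_1(n)=n^r$, $t_2(n)=n^s$, which are time-constructible and satisfy both hypotheses.

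The hard part will be step (3) together with the witness-propagation mechanics in step (1): making the downward-copying of a guessed witness actually certify $M_i$'s acceptance at the length above within nondeterministic time $t_2$ at the current length, so that a \emph{polynomial}-length block suffices. Concretely, $D$ at length $n$ in the block guesses both $M_i$'s computation path at length $n{+}1$ (or the next length in the block) \emph{and}, recursively unwound, enough of the block's structure to reach $L_i$, but it must do this without re-incurring exponential cost; the resolution is that $D$ only needs to guess one level at a time and rely on the \emph{assumed} equality $L(D)=L(M_i)$ on $S$ to collapse the recursion, so the argument is non-constructive in exactly the way that lets the chain be short. I would need to verify carefully that the counter/index encoding keeps every quantity polynomially bounded so that the final choice of $k=k(c)$ is legitimate, and that the whole construction relativizes (it should, as it only uses simulation and counting), matching the claim in the text that the proof is relativizing.
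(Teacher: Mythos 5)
Your high-level plan points in the right direction: you correctly identify that one needs a diagonalizer $D$ whose block of relevant input lengths spans only a polynomial multiplicative range, so that a single robust interval $[m,m^k]$ already contains a full block, and you correctly set up the bookkeeping (enumeration with each machine recurring, robustness used to place the block inside $S$, the corollary following from the theorem with $t_1(n)=n^r$, $t_2(n)=n^s$). But as written, your construction inherits exactly the bottleneck you say you want to avoid. You retain the classical top-of-block step, namely that $D$ at length $L_i$ ``flips the answer of $M_i$ on a fixed small input.'' Flipping $M_i$'s answer means \emph{complementing} a nondeterministic time-$t_1$ computation, which costs time $2^{O(t_1(\ell_i))}$; fitting that into $t_2(L_i)$ forces $L_i$ to be exponential in $\ell_i$, and the block is exponentially long again. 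Your ``witness propagation'' idea is not pinned down in a way that removes this cost: guessing $M_i$'s accepting path one level at a time and relying on $L(D)=L(M_i)$ on $S$ to collapse the recursion is fine for the \emph{intermediate} levels (and this part does match the paper), but it does nothing for the final complementation, and nondeterministically guessing a rejecting path of $M_i$ does not certify that $M_i$ rejects.

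The paper resolves exactly this point, and the ``witness'' enters differently from how you describe: the string $w$ is not guessed, it is \emph{part of the input}. On input $1^i01^m0w$, if $|w| < t_1(i+m+2)$ the diagonalizer $M$ accepts iff $M_i$ accepts on \emph{both} one-bit extensions $1^i01^m0w0$ and $1^i01^m0w1$ (a conjunction of two nondeterministic simulations, still within $O(t_1(n+1)) = o(t_2(n))$); if $|w|\geq t_1(i+m+2)$, $M$ accepts iff $M_i(1^i01^m0)$ \emph{rejects along the particular computation path specified by the bits of $w$}, a deterministic time-$t_1$ check involving no complementation. Unwinding the conjunctions from $w=\varepsilon$ up to $|w|=t_1(i+m+2)$ (using $L(M)=L(M_i)$ at each length in between) shows $M_i(1^i01^{n_0}0)$ accepts iff $M_i(1^i01^{n_0}0)$ rejects on every path, i.e.\ rejects, a contradiction. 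The block spans lengths $i+m+2$ through $i+m+2+t_1(i+m+2)$, which is polynomially bounded because $t_1(n)\leq n^c$; a robust interval of exponent roughly $c+1$ suffices. The conjunction over the two extensions is the missing ingredient in your sketch: it builds the universal quantifier over all $2^{t_1}$ computation paths using only $t_1$-many length increments, which is what lets the block be short without any brute-force complementation at the top.
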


\begin{proof}[Proof of Theorem~\ref{ntime}]
Let $M_1,M_2,\ldots$ be an enumeration of multitape
nondeterministic machines that run in time $t_1(n)$.

Define a nondeterministic Turing machine $M$ that on input
$1^i01^m0w$ does as follows:
\begin{itemize}
\item If $|w|<t_1(i+m+2)$ accept if both $M_i(1^i01^m0w0)$
    and $M_i(1^i01^m0w1)$ accepts.
\item If $|w|\geq t_1(i+m+2)$ accept if $M_i(1^i01^m0)$
    rejects on the path specified by the bits of $w$.
\end{itemize}
Since we can universally simulate $t(n)$-time nondeterministic
multitape Turing machines on an $O(t(n))$-time 2-tape
nondeterministic Turing machine,
$L(M)\in\NTIME(O(t_1(n+1)))\subseteq\NTIME(t_2(n))$. Note
$(n+1)^c=O(n^c)$ for any $c$.

Suppose $\NTIME(t_2(n))\subseteq\ro\NTIME(t_1(n))$. Pick a $c$
such that $t_1(n)\ll n^c$. By the definition of $\ro$ there is
some $n_0$ and a language $L\in\NTIME(t_1(n))$ such that
$L(M)=L$ on all inputs of length between $n_0$ and $n_0^c$. Fix
$i$ such that $L=L(M_i)$. Then $z\in L(M_i)\Leftrightarrow z\in
L(M)$ for all $z=1^i01^{n_0}0w$ for $w\leq t_1(i+n_0+2)$.

By induction we have $M_i(1^i01^{n_0}0)$ accepts if
$M_i(1^i01^{n_0}0w)$ accepts for all $w\leq t_1(i+n_0+2)$. So
$M_i(1^i01^{n_0}0)$ accepts if and only $M_i(1^i01^{n_0}0)$
rejects on every computation path, contradicting the definition
of nondeterministic time.
\end{proof}

\subsection{Circuit Lower Bounds}

Kannan \cite{Kannan82} showed how to diagonalize in
$\Sigma^{p}_{4}$ against circuit size $n^k$ almost everywhere,
for any constant $k$. He combined this result with the
Karp-Lipton theorem to obtain fixed polynomial circuit size
lower bounds in $\Sigma^{p}_2$. However, the lower bound for
$\Sigma^{p}_2$ no longer holds almost everywhere, as a
consequence of the proof strategy used by Kannan. It has been
open since then to derive an almost everywhere separation in
this context. We manage to obtain a significant separation. We
need a quantitative form of Kannan's original diagonalization
result. Kannan's paper contains a slightly weaker form of this
result, but his proof does yield the result below.

\begin{theorem} \cite{Kannan82}
\label{Sigma4LowerBound} For any constant $k \geq 1$,
$\Sigma_4-\TIME(n^{3k}) \not \subseteq \io\SIZE(n^k)$
\end{theorem}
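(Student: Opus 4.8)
The plan is to run the standard Kannan-style diagonalization, but bookkeep the parameters carefully so that the diagonalizing machine lives in $\Sigma_4$-$\TIME(n^{3k})$ rather than merely in $\Sigma_4\P$. First I would fix an input length $n$ and consider the lexicographically first truth table $T \in \{0,1\}^{2^n}$ — viewed as a function on $n$-bit inputs — that has no circuit of size $n^k$. Such a $T$ exists for all large $n$ because the number of circuits of size $O(n^k)$ on $n$ inputs is $2^{O(n^k \log n)} \ll 2^{2^n}$ by a simple counting argument; this is the source of the ``almost everywhere'' conclusion. The language $L$ we diagonalize with is: on input $x$ of length $n$, accept iff $T(x) = 1$, where $T$ is the lexicographically least $n$-ary function with no size-$n^k$ circuit. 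By construction $L_n$ differs from every size-$n^k$ circuit at some point, so $L \notin \io\SIZE(n^k)$; in fact $L$ disagrees with every such circuit at \emph{every} length $n$, which is stronger than needed.

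The heart of the matter is to express membership ``$x \in L$'' by a $\Sigma_4$ predicate running in time $n^{3k}$. The natural formulation is: $x \in L$ iff there \emph{exists} a truth table $T$ (guessed implicitly — but note $T$ has $2^n$ bits, so we cannot write it down) with $T(x)=1$, such that $T$ has no size-$n^k$ circuit, and such that every lexicographically smaller $T'$ \emph{does} have a size-$n^k$ circuit. To keep the object sizes polynomial we do not guess $T$ itself; instead we guess a circuit $C$ of size $n^{k+1}$ (say) that \emph{purports} to compute $T$ — but that is circular since no small circuit for $T$ exists. So the correct approach, which is what Kannan's proof does, is to guess the \emph{index} into the enumeration differently: one uses the fact that ``$T$ has no size-$n^k$ circuit'' is a $\Pi_2$ statement over objects of size $\poly(n^k)$ (for all size-$n^k$ circuits $C$, there exists an input $y$ with $C(y) \neq T(y)$ — but evaluating $T(y)$ is the problem). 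The standard fix: the predicate ``the lexicographically least hard $T$ has $T(x)=1$'' is computed by quantifying over the finitely many inputs $y \le x$ that define the lex-order constraints, and over circuits, with each such circuit having size $n^k$ and hence evaluable in time $\tilde O(n^k)$. Carefully, the predicate is: $\exists$ (a partial assignment describing $T$ on the relevant inputs) $\forall$ (circuit $C$ of size $n^k$) $\exists$ (input $y$) $\forall$ (circuit $C'$ of size $n^k$ for lex-smaller candidates) $[\ldots]$, which is a $\Sigma_4$ quantifier structure; each quantified string has length $O(n^k \log n)$ and the matrix — circuit evaluation plus comparisons — runs in time $O(n^{k} \polylog n) \le n^{3k}$ for $k \ge 1$ and large $n$. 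This gives $L \in \Sigma_4$-$\TIME(n^{3k})$.

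The main obstacle is precisely the one flagged above: the hard function $T$ is an exponential-size object, so the $\Sigma_4$ machine cannot name it, and one must restructure the quantifiers so that every guessed string is of polynomial length while still pinning down $T(x)$ via the lexicographic-minimality condition. I expect the careful accounting to show the running time is dominated by simulating a size-$n^k$ circuit a polynomial number of times, comfortably inside $n^{3k}$ — the factor $3$ is the slack Kannan's argument leaves, and matching it exactly (rather than some larger constant multiple of $k$) is the only quantitative point requiring attention. Since all quantifiers range over strings of length polynomial in $n$ and all computations are deterministic polynomial-time circuit evaluations, the whole argument relativizes, and the separation holds at every sufficiently large $n$, hence is a.e. and in particular not-$\io$.
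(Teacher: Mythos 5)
Your overall plan --- diagonalize by taking the lexicographically least ``hard'' function at each length and showing that membership is a $\Sigma_4$ predicate --- is exactly Kannan's idea, and the counting argument for the existence of the hard function is correct. But there is a genuine gap in the central step: you identify the obstacle yourself, and the repair you sketch does not close it.

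The obstacle: you begin with $T \in \{0,1\}^{2^n}$, a full $n$-ary truth table, which no $\Sigma_4$ machine running in time $n^{3k}$ can write down. Your proposed fix --- guessing ``a partial assignment describing $T$ on the relevant inputs,'' with quantifier block $\exists(\text{partial assignment})\,\forall C\,\exists y\,\forall C'$ --- does not pin down $T$. Lex-minimality requires a \emph{universal} quantifier over lex-smaller candidate truth tables $T'$, followed by an existential over a circuit $C'$ for $T'$; in the block you wrote there is no $T'$ at all, so the matrix cannot express ``every lex-smaller truth table has a small circuit'' and the predicate does not define a unique language. Moreover there is no polynomial-size set of ``relevant inputs'': the hardness witnesses $y$ for $T$ and the agreement checks for lex-smaller $T'$ range over the entire $2^n$-size domain.

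The missing ingredient is to shrink the domain \emph{before} diagonalizing. Restrict attention to functions on $n$-bit inputs that depend only on the first $m = \lceil (k+1)\log n\rceil$ bits, regarded as truth tables $T \in \{0,1\}^{2^m}$ with $2^m = O(n^{k+1})$, now a polynomial number of bits. The counting argument still yields a function in this family with no size-$n^k$ circuit, since the family has $2^{2^m}$ members while there are only $2^{O(n^k\log n)}$ circuits of size $O(n^k)$. Now $T$ and every lex-smaller candidate $T'$ are polynomial-size strings that quantifiers can legitimately range over. Membership at length $n$ becomes
\[
\exists T\ \Bigl[\ T(x_1\cdots x_m)=1 \ \wedge\ \bigl(\forall C\ \exists y:\ C(y)\ne T(y)\bigr)\ \wedge\ \bigl(\forall T'<_{\mathrm{lex}}T\ \exists C'\ \forall y':\ C'(y')=T'(y')\bigr)\ \Bigr],
\]
and after merging adjacent same-polarity quantifiers this is $\exists\,\forall\,\exists\,\forall$, i.e.\ $\Sigma_4$. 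Every quantified string has length $O(n^{k+1})$, and the matrix costs $O(n^{k+1}\polylog n)$ (a truth-table lookup, one size-$n^k$ circuit evaluation, one lex comparison), well within $n^{3k}$ for $k\ge 1$ and $n$ large. The resulting language disagrees with every size-$n^k$ circuit at every sufficiently large length, hence lies outside $\io\SIZE(n^k)$. With this repair your argument goes through and coincides with the standard proof of Kannan's bound that the paper cites.
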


\begin{theorem}
\label{CktLowerBound} For each integer $k > 1$, $\Sigma^{p}_2
\not\subseteq\ro\SIZE(n^k)$.
\end{theorem}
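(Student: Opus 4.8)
The plan is to mimic Kannan's classical derivation of fixed-polynomial circuit lower bounds for $\Sigma^p_2$ but to carefully track a single robust set through the whole argument, so that both uses of the case analysis happen on one common robust refinement. Suppose for contradiction that $\Sigma^p_2 \subseteq \ro\SIZE(n^k)$ for some fixed $k>1$. In particular $\SAT \in \ro\SIZE(n^k)$, so there is a robust set $S$ with $\SAT \in \SIZE(n^k)$ on $S$. The key point is that we do a case split on whether $\NP$ — equivalently $\SAT$ — has small circuits, but phrased \emph{relative to this fixed $S$}.

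First I would invoke the robust Karp--Lipton lemma, Lemma~\ref{KLRobust}: from $\SAT \in \SIZE(n^k)$ on $S$ we obtain a robust refinement $S'$ of $S$ with $\Pi_2\SAT \in \Sigma_2\text{-}\TIME(n^{k+1+o(1)})$ on $S'$. Iterating the collapse (using Corollary~\ref{KLcorollary}, or directly padding the $\Sigma_2$-simulation and composing once more on a further refinement, taking care to use only fixed-polynomial overhead so the robust-refinement composition trick applies) collapses $\Sigma^p_3$ and $\Sigma^p_4$ into $\Sigma^p_2$ with only a fixed-polynomial blow-up in running time, on a robust refinement $S''$ of $S'$; concretely one gets $\Sigma_4\text{-}\TIME(n^{3k}) \subseteq \Sigma_2\text{-}\TIME(n^{k'})$ on $S''$ for some constant $k'$ depending only on $k$. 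On the other hand, because we are assuming $\Sigma^p_2 \subseteq \ro\SIZE(n^k)$, and $\Sigma_2\text{-}\TIME(n^{k'})$ is contained in $\Sigma^p_2$, we can apply the hypothesis once more to get a language in $\Sigma_2\text{-}\TIME(n^{k'})$ agreeing with its $\SIZE(n^k)$-simulation on yet another robust set; intersecting with $S''$ (robust sets are closed under the canonical-refinement construction, and a robust refinement of a robust set is robust) yields a single robust set $R$ on which $\Sigma_4\text{-}\TIME(n^{3k}) \subseteq \SIZE(n^k)$. This is exactly an i.o.\ (indeed robust) violation of Kannan's quantitative diagonalization theorem, Theorem~\ref{Sigma4LowerBound}, which says $\Sigma_4\text{-}\TIME(n^{3k}) \not\subseteq \io\SIZE(n^k)$. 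Since any robust set is infinite, containment on $R$ contradicts the i.o.\ lower bound, completing the proof.

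The main obstacle — and the reason this is listed among the more involved results — is getting \emph{all} the steps to hang off the \emph{single} robust assumption so that the refinements compose. Each individual implication (Karp--Lipton collapse of $\Sigma_3,\Sigma_4$ into $\Sigma_2$, and the final circuit simulation) would normally be applied with its own independently-quantified infinitely-often set, and those do not intersect. The fix is the refinement discipline emphasized in Section~2.1: every implication used here only costs fixed-polynomial overhead (Propositions~\ref{PolyPadding} and \ref{PolyAdvice}, Lemma~\ref{KLRobust}), so starting from the robust set $S$ witnessing $\SAT \in \SIZE(n^k)$ we can peel off finitely many canonical refinements $S \supseteq S' \supseteq S'' \supseteq \cdots$, one per step, and on the final refinement all the containments hold simultaneously. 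I would want to be careful that the exponents line up — that the $\Sigma_2$-time simulation of $\Sigma_4\text{-}\TIME(n^{3k})$ one obtains really does collapse into a class still covered by the $\SIZE(n^k)$ hypothesis (this is why the $3k$ in Theorem~\ref{Sigma4LowerBound} is the right target), and that the $o(1)$ loss in Lemma~\ref{KLRobust} is harmless since Kannan's bound has slack. Modulo this bookkeeping the argument is Kannan's, run on a robust set instead of on all input lengths.
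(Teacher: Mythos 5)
Your proposal correctly identifies the shape of the argument (robust Karp--Lipton collapse, contradiction with Kannan's quantitative $\Sigma_4$-diagonalization) and you even diagnose the central obstacle in your closing paragraph: everything must hang off a \emph{single} application of the $\ro$ hypothesis so that the refinements compose. But your actual construction does not implement that fix --- it has a genuine gap at the step where you close the circle.

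You invoke the hypothesis $\Sigma^p_2 \subseteq \ro\SIZE(n^k)$ twice: once for $\SAT$ to get the robust set $S$, and then ``once more'' for a complete language of $\Sigma_2\text{-}\TIME(n^{k'})$ to get a second robust set, which you then propose to intersect with $S''$. The parenthetical you offer (``robust sets are closed under the canonical-refinement construction, and a robust refinement of a robust set is robust'') does not apply here, because the second robust set is \emph{not} a refinement of $S''$ or of $S$; it is an independently quantified witness produced by a fresh use of the assumption. Two independently chosen robust sets need not have robust intersection --- indeed, one can arrange disjoint unions of intervals $[a_k,a_k^k]$ and $[b_k,b_k^k]$ so that both sets are robust but their intersection is empty. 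So the final containment $\Sigma_4\text{-}\TIME(n^{3k}) \subseteq \SIZE(n^k)$ ``on $R$'' may hold on no input lengths at all, and the contradiction with Theorem~\ref{Sigma4LowerBound} evaporates.

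The paper's proof avoids this by choosing the robust set exactly once, for a language $L$ that is complete (under linear-time reductions) for $\Sigma_2\text{-}\TIME(n^{4k^4})$ --- i.e., for a $\Sigma_2$ time class large enough to contain the eventual $\Sigma_2\text{-}\TIME(n^{3k^4+o(1)})$ simulation of $\Sigma_4\text{-}\TIME(n^{3k})$. From $L \in \SIZE(n^k)$ on $S$, one derives $\SAT \in \SIZE(n^k)$ on $S$, then the Karp--Lipton collapse on a refinement $S'$, then $\Sigma_4\text{-}\TIME(n^{3k}) \subseteq \Sigma_2\text{-}\TIME(n^{3k^4+o(1)})$ on a refinement $S'''$, and finally --- \emph{via completeness of $L$, not via a second appeal to the $\ro$ hypothesis} --- $\Sigma_2\text{-}\TIME(n^{3k^4+o(1)}) \subseteq \SIZE(n^k)$ on a refinement $S''''$. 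All containments then coexist on the same robust set $S''''$. To repair your proposal you would replace ``pick $S$ for $\SAT$, then invoke the hypothesis again at the end'' with ``pick $S$ for a complete language of a large enough fixed-exponent $\Sigma_2$ time class up front, and route every subsequent containment through that one language by reductions.''
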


Proof in the appendix.

Using similar ideas we can get robustly often analogues of the
lower bound of Cai and Sengupta \cite{Cai01} for $\StP$ and
Vinodchandran \cite{Vinodchandran05} for $\PP$:

\begin{theorem}
\label{S2PLowerBound} For any $k > 0$, $\StP \not \subseteq \ro
\SIZE(n^k)$.
\end{theorem}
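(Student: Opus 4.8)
\textbf{Proof plan for Theorem~\ref{S2PLowerBound}.}
The strategy mirrors the one used for Theorem~\ref{CktLowerBound}, replacing the Karp--Lipton-style collapse with the Cai--Sengupta collapse $\PH \subseteq \StP$ under the hypothesis $\NP \subseteq \SIZE(\mathrm{poly})$. Suppose for contradiction that $\StP \subseteq \ro \SIZE(n^k)$ for some $k > 0$. The first step is to observe that $\SAT \in \StP$ (since $\NP \subseteq \StP$), so the robustly-often hypothesis already gives a constant $k'$ and a robust set $S$ on which $\SAT \in \SIZE(n^{k'})$. Since $\StP$ is closed under the appropriate reductions, a single robust refinement handles all the padding needed, and by Proposition~\ref{PolyAdvice} and Proposition~\ref{PolyPadding} (applied on refinements of $S$) we may assume $\StP \subseteq \SIZE(n^{k''})$ holds on a robust set for a possibly larger constant.

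The key conceptual point is that, because $\StP$ is self-low (i.e. $\StP^{\StP} = \StP$, which is exactly the feature Cai and Sengupta exploit), the collapse of the polynomial hierarchy into $\StP$ under the assumption $\NP \subseteq \SIZE(\mathrm{poly})$ can be made to go through on a \emph{single} robust refinement: the assumption ``$\SAT \in \SIZE(n^{k'})$ on $S$'' is used finitely many times, each time costing only a fixed-polynomial blow-up in the input length and a fixed-polynomial amount of advice, so by passing to canonical refinements $S_d$ for a suitable constant $d$ (determined by the depth of the hierarchy collapse and the reductions involved) we obtain a robust set $R$ on which $\Sigma_4\text{-}\TIME(n^{3m}) \subseteq \StP\text{-}\TIME(n^{c})$ for an appropriate $m$ and constant $c$. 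Composing with the assumed $\StP \subseteq \SIZE(n^{k''})$ on (a refinement of) $R$, we get $\Sigma_4\text{-}\TIME(n^{3m}) \subseteq \SIZE(n^{m})$ on a robust set, for $m$ chosen large enough relative to $k$, $k''$, $c$. This contradicts Theorem~\ref{Sigma4LowerBound}, which gives $\Sigma_4\text{-}\TIME(n^{3m}) \not\subseteq \io \SIZE(n^{m})$ and hence certainly $\not\subseteq \ro\SIZE(n^m)$ (since robust sets are infinite).

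The main obstacle is the same bookkeeping difficulty as in Theorem~\ref{CktLowerBound}: one must verify that every step of the Cai--Sengupta argument --- the Karp--Lipton-type witness-guessing, the use of self-lowness to absorb the oracle, and the translation of the $\SIZE(n^{k''})$ bound through the $\StP$ machine --- introduces only fixed-polynomial overhead in input length and fixed-polynomial advice, never exponential padding, so that a single finite-level canonical refinement suffices for all of them simultaneously. One also has to be slightly careful that $\StP$ is a syntactic-enough class (or handle it as a promise class via the conventions of Section~\ref{prelims}) for Proposition~\ref{PolyAdvice} to apply; since the lower bound is ultimately against $\SIZE(n^k)$, advice and promise versions are harmless here. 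Once the overhead constants are tracked, choosing $m$ large enough as a function of $k$ closes the contradiction and proves $\StP \not\subseteq \ro\SIZE(n^k)$ for every $k$.
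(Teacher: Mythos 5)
Your proposal matches the paper's intended approach. The paper does not write out a proof of Theorem~\ref{S2PLowerBound}; it states only that the result follows ``using similar ideas'' to Theorem~\ref{CktLowerBound}, and your plan is exactly that adaptation: replace the Karp--Lipton collapse (Lemma~\ref{KLRobust}) with the Cai--Sengupta collapse of $\PH$ into $\StP$ under $\NP\subseteq\SIZE(\mathrm{poly})$, track that every step is a fixed-polynomial reduction or translation so that a single chain of canonical refinements carries everything, and close against Kannan's $\Sigma_4$ diagonalization (Theorem~\ref{Sigma4LowerBound}). You also correctly flag the one genuine subtlety the paper leaves implicit, namely that $\StP$ is a semantic class, so the intermediate $\StP$ ``machine'' produced by the Cai--Sengupta argument is only a valid $\StP$ machine on input lengths where $\SAT$ really has small circuits; to apply the $\ro$ hypothesis to it one either needs to work through a $\StP$-complete language (fixing the time bound up front, as the paper does with the $\Sigma_2\text{-}\TIME(n^{4k^4})$-complete language $L$ in Theorem~\ref{CktLowerBound}) or treat the intermediate object as a promise problem. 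Spelling that step out carefully is the only place your sketch is thinner than it should be, but it is the same level of detail the paper itself elides.
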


\begin{theorem}
\label{PPLowerBound} For any $k > 0$, $\PP \not \subseteq \ro
\SIZE(n^k)$.
\end{theorem}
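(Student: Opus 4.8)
The plan is to robustify Vinodchandran's proof that $\PP\not\subseteq\SIZE(n^k)$, following the same template as the proof of Theorem~\ref{CktLowerBound}, with Theorem~\ref{Sigma4LowerBound} as the ultimate source of hardness: being an almost-everywhere separation it is automatically significant, so the only work is to propagate it downward. Recall the shape of the classical argument: if $\PP\subseteq\SIZE(n^k)$ then $\PP\subseteq\SIZE(poly)$, and combining Toda's theorem ($\PH\subseteq\P^{\PP}$, which relativizes) with the closure properties of $\PP$ (closure under complement, and the Beigel--Reingold--Spielman / Fortnow--Reingold results on intersection and nonadaptive reductions) one pushes the relevant slice of the polynomial hierarchy --- enough to contain the hard language of Theorem~\ref{Sigma4LowerBound} at the right parameter --- into a class that, under the hypothesis, has $n^k$-size circuits, a contradiction. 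My goal is to replay each of these implications on one common robust set.

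Concretely, I would assume for contradiction that $\PP\subseteq\ro\SIZE(n^k)$. Since $\PP$ has a language complete under honest polynomial-time m-reductions (e.g.\ $\mathrm{MajSAT}$), Lemma~\ref{ComplRef} supplies a single robust set $S$ on which $\PP$ collapses into $\SIZE(poly)$, with each $\PP$-time-$n^d$ language handled on a canonical refinement $S_{c(d)}$ for a fixed, monotone $c(\cdot)$; this is the lone hypothesis from which every subsequent step is forced to follow. Working through canonical refinements of $S$ and using the single-assumption composition idea of the paper, I would then establish robustly-often analogues of: (i) Toda's embedding, together with the replacement of $\PP$-oracle queries by the circuits guaranteed on $S$ --- a fixed-polynomial blow-up, hence well-behaved on canonical refinements in the manner of Propositions~\ref{PolyPadding} and~\ref{PolyAdvice}; (ii) the counting-hierarchy collapse underlying Vinodchandran's argument, noting that each closure property of $\PP$ in play is a fixed-polynomial m- or truth-table reduction and therefore survives passage to a canonical refinement; and (iii) a Karp--Lipton-style step as in Lemma~\ref{KLRobust} where needed. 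Composing all of these on a single robust refinement $S^{*}$ of $S$ places a suitably parametrized hard language from Theorem~\ref{Sigma4LowerBound} in fixed polynomial $\PP$-time on $S^{*}$, hence in $\SIZE(n^k)$ on a further canonical refinement, contradicting the fact that that language avoids $\io\SIZE(n^k)$.

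The main obstacle is exactly the one the authors flag for Theorems~\ref{IKWro} and~\ref{CktLowerBound}: robustly-often simulations do not compose, so the chain Toda $\to$ counting collapse $\to$ circuit replacement cannot be run step by step with independently chosen robust sets, and the argument must be reorganized so that all of (i)--(iii) are consequences of the single statement ``the $\PP$-complete language has $n^k$-size circuits on $S$'' and are applied on a common canonical refinement. A secondary difficulty is parameter bookkeeping: the collapses in question (Toda plus $\PP$-closure for Theorem~\ref{PPLowerBound}, the Cai--Sengupta collapse for Theorem~\ref{S2PLowerBound}) incur fixed-polynomial overheads, so the instance of Theorem~\ref{Sigma4LowerBound} has to be invoked at the parameter these overheads leave matched against circuit size $n^{k}$ --- which, as for Theorem~\ref{CktLowerBound}, is why the reformulation is noticeably more delicate than Vinodchandran's original proof. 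Everything here relativizes and uses only fixed-polynomial reductions, so no ingredient beyond the robust machinery of Section~\ref{prelims} and Theorems~\ref{CktLowerBound} and~\ref{Sigma4LowerBound} should be needed; the $\StP$ case (Theorem~\ref{S2PLowerBound}) is handled identically with the Cai--Sengupta collapse in place of the $\PP$-specific one.
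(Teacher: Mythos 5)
The paper omits the proof of Theorem \ref{PPLowerBound}, saying only that it follows by the ideas of Theorem \ref{CktLowerBound}, and your plan is precisely that intended argument: derive every conditional step of Vinodchandran's proof (Toda together with the LFKN-based Karp--Lipton collapse of $\P^{\#\P}$ into $\MA \subseteq \PP$) from the single assumption that a $\PP$-complete language has $n^k$-size circuits on one robust set, compose the steps on a common canonical refinement with the time parameter of the complete language chosen large enough to absorb the fixed-polynomial overheads, and contradict the almost-everywhere bound of Theorem \ref{Sigma4LowerBound}. This is correct and essentially the same approach as the paper's; the only slight misdirection is your emphasis on the Beigel--Reingold--Spielman closure properties, since the load-bearing conditional ingredient is the interactive-proof-based collapse that you already subsume under your ``Karp--Lipton-style step as in Lemma \ref{KLRobust}''.
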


The following strong separation can be shown by direct
diagonalization.

\begin{proposition}
\label{DiagCirc} For any integer $k \geq 1$,
$\DTIME(2^{n^{2k}}) \not \subseteq \io\SIZE(n^k)$.
\end{proposition}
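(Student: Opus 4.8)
\textbf{Proof proposal for Proposition~\ref{DiagCirc}.}

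The plan is to build a language $L \in \DTIME(2^{n^{2k}})$ that disagrees with every $n^k$-size circuit family on \emph{every} sufficiently large input length, by brute-force diagonalization carried out one input length at a time. At length $n$, there is a canonical enumeration of all Boolean circuits on $n$ inputs of size at most $c\,n^k$ for the constant $c$ hidden in $\SIZE(n^k)$; the number of such circuits is at most $2^{O(n^k \log n)}$, which is comfortably below $2^{n^{2k}}$ for large $n$. The idea is to pick, at length $n$, the lexicographically first truth table of an $n$-input function that differs from each of these circuits, and let $L_n$ be the language with that truth table. Since a function on $n$ bits has $2^n$ points of freedom while we only need to defeat $2^{O(n^k\log n)}$ circuits, and each circuit can be killed by fixing its value on one input, such a truth table exists as long as $2^n > 2^{O(n^k \log n)}$ — wait, that inequality fails, so the counting has to be done more carefully: we do not need the truth table to be ``generic,'' only to differ from each small circuit at \emph{some} point, and a standard union-bound/greedy argument shows a function avoiding all of them exists because the set of functions agreeing with a \emph{fixed} small circuit everywhere has size $1$, so at most $2^{O(n^k\log n)}$ of the $2^{2^n}$ truth tables are ``bad'' in the sense of equalling some small circuit — but equalling everywhere is too strong. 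The correct formulation: order the $\le 2^{O(n^k \log n)}$ circuits as $C_1,\dots,C_t$; process inputs $x \in \{0,1\}^n$ in lexicographic order, and when processing $x$, set $L(x)$ to disagree with the first not-yet-killed circuit at $x$ (breaking ties arbitrarily if none remain). Since $t \le 2^{O(n^k\log n)} < 2^n$ for large $n$, we run out of circuits before we run out of inputs, so every $C_j$ gets killed: $L_n$ differs from every size-$n^k$ circuit at length $n$.

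For the complexity bound I would argue as follows. On input $x$ of length $n$, to decide $x \in L$ we replay the greedy construction at length $n$: enumerate all $\le 2^{O(n^k \log n)}$ circuits of size $\le c\,n^k$, and simulate the greedy assignment process over inputs $y \in \{0,1\}^n$ in lex order up to and including $y = x$, maintaining the set of circuits not yet killed. Each step evaluates the current ``first surviving'' circuit on the current input $y$, which costs $\mathrm{poly}(n^k)$ per circuit-evaluation; there are at most $2^n$ inputs to process and at most $2^{O(n^k\log n)}$ circuits to track, so the total running time is bounded by $2^n \cdot 2^{O(n^k\log n)} \cdot \mathrm{poly}(n) = 2^{O(n^k \log n)}$, which is $o(2^{n^{2k}})$ for $k \ge 1$ (indeed $n^k \log n = o(n^{2k})$). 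Hence $L \in \DTIME(2^{n^{2k}})$. Note that no robustness or refinement machinery is needed here: the diagonalization defeats every circuit family at \emph{every} large input length, so in fact $\DTIME(2^{n^{2k}}) \not\subseteq \io\SIZE(n^k)$, which is the stated (stronger, a.e.-flavored) conclusion.

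The only genuinely delicate point is pinning down the constant in ``$\SIZE(n^k)$'': the class is defined with a hidden $O(\cdot)$, so ``$L \in \io\SIZE(n^k)$'' means there is \emph{some} constant $c$ and a circuit family of size $\le c\,n^k$ agreeing with $L$ on infinitely many lengths. To defeat all such families uniformly, I would have the construction at length $n$ diagonalize against all circuits of size $\le (\log n)\cdot n^k$, say — a bound that dominates $c\,n^k$ for every fixed $c$ once $n$ is large. The circuit count is then $2^{O(n^k (\log n)^2)}$, still $o(2^{n^{2k}})$ and still $< 2^n$ for large $n$, so the greedy argument and the time analysis go through unchanged, and now every fixed-polynomial-with-constant $O(n^k)$ circuit family is beaten at all but finitely many lengths. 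This is the step I expect to require the most care in the write-up; everything else is routine counting and simulation.
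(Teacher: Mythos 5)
Your greedy construction, as written, has a counting error that invalidates the proof. You claim that the number of circuits $t \le 2^{O(n^k \log n)}$ satisfies $t < 2^n$ for large $n$, so that killing one surviving circuit per input eliminates all of them before running out of the $2^n$ inputs. But $2^{O(n^k\log n)} < 2^n$ is false for every $k \ge 1$: already at $k=1$ one has $2^{\Theta(n\log n)} \gg 2^n$, and the gap only widens with $k$. Your greedy pass therefore terminates with the overwhelming majority of circuits still alive, and the slice $L_n$ you build is not guaranteed to differ from all of them. The same false inequality recurs in your last paragraph, where you assert $2^{O(n^k(\log n)^2)} < 2^n$.

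The irony is that a few lines earlier you had the correct argument and then talked yourself out of it. ``Equalling everywhere'' is not too strong --- it is precisely what it means for a circuit to compute the slice $L_n$ --- so the union bound you wrote down is the whole story: at most $2^{O(n^k\log n)}$ of the $2^{2^n}$ truth tables at length $n$ are computed by some circuit of size $O(n^k)$, and since $n^k\log n = o(2^n)$, almost all truth tables are hard. The lexicographically first hard truth table has lex rank at most $2^{O(n^k\log n)}+1$ and can be found by brute-force enumeration of candidate tables against candidate circuits, evaluating each circuit on all $2^n$ inputs, in total time $2^{O(n^k\log^2 n)} = o(2^{n^{2k}})$ (with your $(\log n)\cdot n^k$ size bound to absorb the hidden constant, the exponent is $O(n^k\log^2 n)$, still $o(n^{2k})$ for $k\ge 1$). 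Alternatively, if you insist on an input-by-input process, set $L(x)$ to the \emph{minority} value among the surviving circuits' outputs at $x$ rather than merely disagreeing with the first survivor: that halves the survivor set at each step, so all circuits die after only $O(n^k\log n) < 2^n$ inputs, and the rest of $L_n$ can be set arbitrarily. The paper itself gives no proof of this proposition (it is asserted as provable by ``direct diagonalization''), so there is nothing to compare against; either of these repairs gives the standard argument, but the greedy version you settled on does not work.
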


We use Proposition \ref{DiagCirc} to give an analogue of the
result of Burhman, Fortnow and Thierauf
\cite{Buhrman-Fortnow-Thierauf98} that $\MAEXP \not \subseteq
\SIZE(\poly)$ in the robustly often  setting. We give a circuit
lower bound for promise problems in $\MAEXP$ rather than
languages. Intuitively, the fact that $\MATIME$ is a semantic
measure causes problems in extending the proof of Buhrman,
Fortnow and Thierauf to the robustly often  setting; however,
these problems can be circumvented if we consider promise
problems.

\begin{theorem}
\label{MAEXP} $Promise-\MAEXP\not\subseteq\ro\SIZE(poly)$.
\end{theorem}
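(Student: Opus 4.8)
The plan is to adapt the classical argument of Buhrman, Fortnow and Thierauf that $\MAEXP \not\subseteq \SIZE(\poly)$ to the robustly-often setting, using the robust analogues we have already developed. The classical proof is a case analysis: either $\PSPACE \not\subseteq \SIZE(\poly)$, in which case padding gives $\MAEXP \supseteq \EXPSPACE \not\subseteq \SIZE(\poly)$ and we are done; or $\PSPACE \subseteq \SIZE(\poly)$, in which case $\PSPACE = \MA$ (by the interactive-proof characterization $\PSPACE = \IP$ together with the fact that a polynomial-size circuit for the PSPACE-complete prover can be guessed by Merlin and verified by Arthur), and then by padding $\EXPSPACE = \MAEXP$, so $\MAEXP = \EXPSPACE \supseteq \DTIME(2^{n^{2k}})$ for every $k$, which by Proposition~\ref{DiagCirc} is not in $\io\SIZE(n^k)$ and hence not in $\SIZE(\poly)$. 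The reason we must pass to promise problems is exactly the point flagged before the statement: in the second branch the equality $\PSPACE = \MA$ is obtained through a semantic simulation (Arthur's machine only meets the bounded-error promise when Merlin sends a correct circuit), and a robustly-often semantic simulation need not define a language on the bad input lengths; working with the promise version of $\MAEXP$ lets us ignore that issue.

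Concretely, I would argue by contradiction: assume $\mathit{Promise}\text{-}\MAEXP \subseteq \ro\SIZE(\poly)$, so fix a robust set $S$ witnessing this for a suitable complete promise problem. First branch: if $\PSPACE \not\subseteq \ro\SIZE(\poly)$ — more precisely, if for the robust set $S$ we cannot get $\PSPACE \subseteq \SIZE(\poly)$ on any robust refinement of $S$ — then standard exponential padding inside $\MAEXP$ together with Proposition~\ref{DiagCirc} already contradicts the assumption on a robust refinement (here one has to be a little careful, since exponential padding does not in general preserve robustly-often simulations, but Proposition~\ref{DiagCirc} is an $\io$ separation and only needs the simulation to hold on an infinite set, so the robust refinement of $S$ certainly supplies that). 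Second branch: if there is a robust refinement $S'$ of $S$ on which $\PSPACE \subseteq \SIZE(n^c)$ for some fixed $c$, then — mimicking the IP-based argument at a fixed polynomial level, and invoking Proposition~\ref{PolyPadding}-style padding to push the collapse up to exponential input lengths on a further canonical refinement $S''$ of $S'$ — we obtain that a padded version of $\DTIME(2^{n^{2k}})$ sits inside $\mathit{Promise}\text{-}\MAEXP$ and is simulated by polynomial-size circuits on $S''$; taking $k$ large relative to the circuit-size exponent and unpadding gives $\DTIME(2^{n^{2k}}) \in \io\SIZE(n^k)$, contradicting Proposition~\ref{DiagCirc}.

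The main obstacle is the second branch, specifically making the chain of implications "$\PSPACE \subseteq \SIZE(\poly)$ on $S'$ $\Rightarrow$ $\PSPACE = \MA$ on a refinement $\Rightarrow$ $\EXPSPACE = \MAEXP$ on a further refinement" go through with all steps keyed to the single robust set $S'$, since the $\PSPACE = \IP = \MA$ step uses the circuit hypothesis and the scaling-up step uses exponential padding — and as emphasized in the introduction, exponential padding is precisely the operation that does not cooperate with $\ro$. The way around this is to keep everything at a fixed polynomial circuit-size bound $n^c$, observe that the Shamir/LFKN protocol relativizes and runs in fixed polynomial time relative to the PSPACE-complete oracle, and then note that the final contradiction only requires the simulation to survive on an infinite set (because Proposition~\ref{DiagCirc} is merely an $\io$ lower bound), so a robust refinement of $S'$ — which is in particular infinite — is all we need even after the exponential scale-up. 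One also must verify that the promise-$\MAEXP$-complete problem chosen in the first paragraph is such that the padded $\DTIME(2^{n^{2k}})$ languages and the padded $\EXPSPACE$ computations all reduce to it appropriately, which is routine given that $\EXPSPACE$ has complete problems under polynomial-time reductions and $\DTIME(2^{n^{2k}}) \subseteq \EXPSPACE$.
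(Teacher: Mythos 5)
Your proposal takes a genuinely different route from the paper and, as written, has a gap that is not cosmetic. The paper does \emph{not} argue via $\PSPACE = \IP$ at all; it uses Lemma~\ref{KLMNro}, the robustly-often form of the Babai--Fortnow--Lund/Nisan result that $\E \subseteq \SIZE(\poly)$ implies $\E \subseteq \MATIME(\poly)$. Starting from a $\Promise$-$\MAE$-complete promise problem $Q$ with $Q \in \SIZE(n^k)$ on a robust $S$, the paper passes to an $\E$-complete $L$ (linear-time reduction to $Q$), applies Lemma~\ref{KLMNro} to get $L \in \MATIME(n^{2k})$ on a robust refinement, and then uses \emph{polynomial} padding to put all of $\DTIME(2^{n^{2k}})$ in $\MATIME(n^{4k^2})$ on a further robust refinement; one final use of the completeness of $Q$ puts $\DTIME(2^{n^{2k}})$ into $\SIZE(n^k)$ on an infinite set, contradicting Proposition~\ref{DiagCirc}. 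Crucially every step involves polynomial-stretch translations, so every set in the chain is a robust refinement of $S$, and the final contradiction lands on a set guaranteed to be infinite.

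Your second branch does not have this property, and this is where the proposal breaks down. You want to pass from $\PSPACE \subseteq \MA$ on a robust refinement $T$ of $S$ to $\EXPSPACE \subseteq \MAEXP$, and you correctly flag that this step requires exponential padding. But the consequence is worse than ``we only get an $\io$ inclusion'': exponential padding carries a statement that holds on a robust set $T$ of input lengths to a statement that holds on $\{n : 2^n \in T\}$ (or $\{n : n^{\Theta(\log n)} \in T\}$, depending on the exact padding), and this image set has \emph{no guaranteed intersection with $S$ at all}. Your final contradiction requires that the set on which $\DTIME(2^{n^{2k}}) \subseteq \MAEXP$ holds overlaps infinitely often with the set $S$ on which $\Promise$-$\MAEXP \subseteq \SIZE(n^k)$ holds; after an exponential blow-up these two sets live at completely different scales, and nothing forces them to meet. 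Saying ``Proposition~\ref{DiagCirc} is merely an $\io$ lower bound, so an infinite set suffices'' misses the point: you do not get the two hypotheses of the contradiction to hold simultaneously on a common infinite set. This is exactly the phenomenon the paper flags (see the remark before Proposition~\ref{NEXPro}) and is precisely why the paper works with $\E/\MA$ rather than $\PSPACE/\IP$ --- the scale-up from $\E$ to $\DTIME(2^{n^{2k}})$ is only polynomial, hence compatible with robust refinements. Your first branch is also shaky as stated: the inclusion $\EXPSPACE \subseteq \MAEXP$ is not unconditional, so ``padding gives $\MAEXP \supseteq \EXPSPACE$'' is not a valid step there; what you actually want in that branch is the trivial $\PSPACE \subseteq \EXP \subseteq \MAEXP$, but more fundamentally the dichotomy itself is awkward to formalize relative to the single robust set $S$ supplied by the contradiction hypothesis, which is why the paper avoids a case split entirely.
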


\begin{proof}
Let $Q$ be a promise problem complete for $\Promise-\MAE$ under
linear-time reductions. Assume, for the purpose of
contradiction, that $\Promise-\MAEXP \subseteq \ro\SIZE(poly)$.
Then there is a robust set $S$ and an integer $k > 0$ such that
$Q \in \SIZE(n^k)$ on $S$. Let $L$ be a language complete for
$\E$ under linear-time reductions. Since there is a linear-time
reduction from $L$ to $Q$, it follows that there is a robust
refinement $S'$ of $S$ such that $L \in \SIZE(n^k)$ on $S$.
Using Lemma, there is a robust refinement $S''$ of $S'$ such
that $L \in \MATIME(n^{2k})$ on $S''$. Using a simple padding
trick, there is a robust refinement $S'''$ of $S''$ such that
for each language $L' \in \DTIME(2^{n^{2k}})$, $L' \in
\MATIME(n^{4k^2})$ on $S'''$. Using completeness of $Q$ again,
there is a robust refinement $S''''$ of $S'''$ such that $L'
\in SIZE(n^k)$ on $S''''$ for any $L' \in \DTIME(2^{n^{2k}})$,
which contradicts Proposition \ref{DiagCirc}.
\end{proof}

Our most technically involved result is that the recent lower
bound of Williams~\cite{Williams10a} that
$\NEXP\not\subseteq\ACC$ extends to the robustly often setting.
His proof uses the nondeterministic time hierarchy and the
proof of Impagliazzo, Kabanets and
Wigderson~\cite{Impagliazzo-Kabanets-Wigderson02}, neither of
which may hold in the infinitely-often setting. So to get a
robustly-often result we require variants of our
Theorems~\ref{ntime} and~\ref{IKWro}. To save space, we will
focus on the new ingredients, and abstract out what we need
from Williams' paper.

We first need the following simultaneous resource-bounded
complexity class.

\begin{definition}
\label{NTIGU} $\NTIMEGUESS(T(n), g(n))$ is the class of
languages accepted by NTMs running in time $O(T(n))$ and using
at most $O(g(n))$ non-deterministic bits.
\end{definition}

We have the following variant of Theorem~\ref{ntime}, which has
a similar proof.

\begin{lemma}
\label{ntimeguess} For any constant $k$, $\NTIME(2^n) \not
\subseteq \ro \NTIMEGUESS(2^{n}/n, n^k)$.
\end{lemma}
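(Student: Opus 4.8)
The plan is to mimic the proof of Theorem~\ref{ntime} (the significant nondeterministic time hierarchy), but carried out at the exponential scale and with careful attention to the number of nondeterministic bits used, so that the diagonalizing machine lands in $\NTIME(2^n)$ and the adversary class is $\NTIMEGUESS(2^n/n, n^k)$. First I would fix an enumeration $M_1, M_2, \ldots$ of nondeterministic multitape machines that run in time $2^n/n$ and use at most $n^k$ nondeterministic bits, where each machine appears infinitely often in the enumeration so that the index $i$ can be taken as large as we like relative to any fixed constant. The key point is that on an input of length $N = i + m + 2 + |w|$ of the form $1^i 0 1^m 0 w$, the machine $M_i$ run on the ``base'' string $1^i 0 1^m 0$ (length $i+m+2$) uses at most $(i+m+2)^k$ nondeterministic bits, which for fixed $i,k$ and large $m$ is dominated by $2^m$; hence a witness-chaining argument over strings $w$ of length up to roughly $(i+m+2)^k$ is affordable.

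Next I would define the diagonalizing machine $M$ on input $1^i 0 1^m 0 w$ to behave exactly as in Theorem~\ref{ntime}: if $|w|$ is below the threshold $\ell(i,m) := (i+m+2)^k$ (the bound on the number of guess bits of $M_i$ on the base string), then $M$ accepts iff both $M_i(1^i01^m0w0)$ and $M_i(1^i01^m0w1)$ accept; if $|w| \geq \ell(i,m)$, then $M$ interprets the first $\ell(i,m)$ bits of $w$ as a guessed computation path of $M_i$ on $1^i01^m0$ and accepts iff that path is a rejecting path. The running time of $M$ on an input of length $N$ is dominated by one simulation step of $M_i$ on an input of length at most $N+1$, i.e. $2^{N+1}/(N+1) \cdot \mathrm{poly}(N)$ after the standard universal-simulation overhead for converting multitape to $2$-tape machines; this is $O(2^N) = $, so $L(M) \in \NTIME(2^n)$. (One must check that the ``if $|w|\ge \ell(i,m)$'' branch only needs to read $\ell(i,m) \le N^k$ bits and simulate $M_i$ for $2^{i+m+2}/(i+m+2)$ steps, both well within $O(2^N)$ for $N = i+m+2+|w| \ge i+m+2+\ell(i,m)$.)

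Then I would derive the contradiction: assume $\NTIME(2^n) \subseteq \ro\NTIMEGUESS(2^n/n, n^k)$. By the definition of $\ro$, applied with the parameter that forces the agreement interval to stretch from some $n_0$ to $n_0^{c}$ for a constant $c$ large enough that $n_0 \le i + n_0 + 2 + \ell(i,n_0) \le n_0^{c}$ — this is where we need $\ell(i,n_0) = (i+n_0+2)^k = \mathrm{poly}(n_0)$, so a fixed polynomial stretch suffices — there is a machine $M_j$ in the enumeration with $L(M_j) = L(M)$ on all input lengths in $[n_0, n_0^c]$. Since each machine occurs infinitely often, we may take $j = i$ arbitrarily large; in particular large enough that $i + n_0 + 2 + \ell(i,n_0)$ still lies in $[n_0, n_0^c]$. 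Now run the witness-chaining induction over $w$ of length from $\ell(i,n_0)$ down to $0$: for $|w| < \ell(i,n_0)$ the agreement $M_i(1^i01^{n_0}0w) \Leftrightarrow M(1^i01^{n_0}0w)$ and the definition of $M$ give $M_i(1^i01^{n_0}0w)$ accepts iff both one-bit extensions accept; telescoping down to $w = \epsilon$ shows $M_i(1^i01^{n_0}0)$ accepts iff $M_i$ accepts on every path of length $\ell(i,n_0)$, i.e. iff $M_i(1^i01^{n_0}0)$ has no rejecting path among its $\le \ell(i,n_0)$-bit paths — but by the definition of $M$ on long $w$ together with the agreement at those longer lengths, $M_i(1^i01^{n_0}0)$ accepts iff it \emph{does} have such a rejecting path. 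This contradicts the fact that a nondeterministic machine either accepts on some path or rejects on all of them.

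The main obstacle I anticipate is the bookkeeping that makes the witness-chaining fit inside the guess-bit budget: one must ensure that the threshold $\ell(i,m)$ is simultaneously (i) an upper bound on the number of nondeterministic bits $M_i$ uses on the base string $1^i01^m0$ (so that a path is fully described by $\ell(i,m)$ bits), (ii) small enough — polynomial in $m$ — that the agreement interval $[n_0, n_0^c]$ granted by the $\ro$ hypothesis actually contains the length $i + n_0 + 2 + \ell(i,n_0)$, and (iii) compatible with $M$ itself running in time $O(2^n)$ and (crucially) $M$ does \emph{not} need to be in $\NTIMEGUESS$ — only in $\NTIME(2^n)$ — so $M$ is free to use many nondeterministic bits in the long-$w$ branch. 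Point (ii) is exactly why $\NTIMEGUESS(2^n/n, n^k)$ with its polynomial guess bound, rather than $\NTIME(2^n/n)$ with an exponential guess budget, is the right adversary class for this statement. The rest is a routine adaptation of the self-contained proof of Theorem~\ref{ntime}.
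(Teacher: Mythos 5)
Your proposal captures exactly the intended modification of the proof of Theorem~\ref{ntime}: replace the chaining threshold $t_1(i+m+2)$ by the nondeterminism budget $\ell(i,m)=(i+m+2)^k$, which is the point that makes the chain span only polynomially many input lengths (rather than $\sim 2^n$ of them) and hence fit inside a single robust window $[n_0,n_0^c]$. That is the one new idea needed, and you have it; the rest of the construction, the time analysis of $M$, and the telescoping induction are the same as in the paper's Theorem~\ref{ntime}, which is all the paper claims (``similar proof'').

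One small logical slip worth fixing: you write ``we may take $j=i$ arbitrarily large; in particular large enough that $i+n_0+2+\ell(i,n_0)$ still lies in $[n_0,n_0^c]$.'' Increasing $i$ only \emph{lengthens} the chain, so it cannot help it fit. The correct quantifier order is the one already used in Theorem~\ref{ntime}: the $\ro$-simulation of $L(M)$ is a fixed language $L'\in\NTIMEGUESS(2^n/n,n^k)$, so the index $i$ with $L(M_i)=L'$ is a fixed constant; then choose $c>k$ (say $c=k+2$), and use robustness of the agreement set $S$ to obtain $n_0$ arbitrarily large with $[n_0,n_0^c]\subseteq S$, which for $n_0$ large relative to the constant $i$ gives $i+n_0+2+(i+n_0+2)^k\leq n_0^c$. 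The ``each machine appears infinitely often'' device is not needed here at all. With that ordering corrected, the argument is exactly the paper's.
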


We also need the following robustly often  analogue of a
theorem of Williams \cite{Williams10}, which uses the proof
idea of Theorem \ref{IKWro}. The problem $\StSAT$ is complete
for $\NEXP$ under polynomial-time m-reductions.

\begin{lemma}
\label{univro} If $\NE \subseteq \ro \ACC$ on $S$ for some
robust set $S$, then there is a constant $c$ and a refinement
$S'$ of $S$ such that $\StSAT$ has succinct satisfying
assignments that are $\ACC$ circuits of size $n^c$ on $S'$.
\end{lemma}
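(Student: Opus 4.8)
The plan is to mimic the structure of the proof of Theorem~\ref{IKWro}, adapted to $\ACC$ circuits and to Williams' specific machinery. Recall that $\StSAT$ (succinct-$\StSAT$, the standard $\NEXP$-complete problem where a circuit $C$ encodes an exponentially long 3-CNF) has \emph{succinct satisfying assignments} if whenever the encoded formula is satisfiable, there is a small circuit whose truth table is a satisfying assignment. Williams~\cite{Williams10} showed, under $\NEXP \subseteq \Ppoly$, that $\StSAT$ has succinct satisfying assignments computed by polynomial-size circuits; the refinement in \cite{Williams10a} gives that when the upper bound is $\ACC$, the witnessing circuits may be taken in $\ACC$. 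Our job is to recover this conclusion on a robust refinement from the hypothesis that $\NE \subseteq \ACC$ only on a robust set $S$.

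First I would unwind what the hypothesis buys us. $\NE \subseteq \ACC$ on $S$ means there is a single $\ACC$-circuit family $\{D_n\}$ of size $n^{c_0}$ for some constant $c_0$ that correctly decides a fixed $\NE$-complete language $L$ on all input lengths in $S$. As in Theorem~\ref{IKWro}, the key point is that Williams' argument applies the $\NEXP \subseteq \ACC$ assumption only a bounded number of times, and each application is through a \emph{fixed polynomial-time reduction} (or fixed polynomial padding), never through exponential padding. Concretely, the argument has two ingredients: (i) an ``easy witness'' style step --- either $\StSAT$ instances of a given size have succinct $\ACC$ satisfying assignments, or one can derandomize and diagonalize; and (ii) a self-reduction/search-to-decision step that converts a decision circuit for an $\NE$-complete language into a circuit producing witnesses, via the standard prefix-search, each step of which is a fixed polynomial blow-up in the input length. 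I would make each such step explicit as passing from a robust refinement $R$ to its canonical refinement $R_{d}$ at a level $d$ depending only on the polynomial degree of that step (using Definition~\ref{canonical} and the fact that $R_d$ is robust whenever $R$ is). Because there are only constantly many steps, and each costs only a fixed polynomial in the length parameter, the composition lands on a single canonical refinement $S' = S_{d^*}$ for a constant $d^*$ absorbing all the degrees. On $S'$, every intermediate simulation is simultaneously valid, so Williams' chain of implications goes through verbatim and yields: for every $\NEXP$ computation of the relevant form, the corresponding $\StSAT$ instance of size $n$ (when satisfiable) has a satisfying assignment encoded by an $\ACC$ circuit of size $n^c$, for a constant $c$ depending only on $c_0$ and $d^*$. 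This is exactly the conclusion.

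The main obstacle, as in Theorem~\ref{IKWro}, is bookkeeping the reductions so that they all live on \emph{one} robust refinement rather than on refinements that drift apart: an infinitely-often version of the same argument would fail because the two or three uses of the hypothesis could be witnessed on disjoint sets of lengths. The point that makes it work is precisely the ``single robustly-often assumption'' idea from the introduction --- since all the upper-bound uses descend from the one family $\{D_n\}$ good on $S$, we never compose two \emph{independent} $\ro$-simulations, only restrict one to a shorter-but-still-robust window. A secondary technical care point is that the search-to-decision step must be carried out with honest, length-non-decreasing reductions so that ``correct on lengths in $S'$'' is preserved under composition (this is why Lemma~\ref{ComplRef} insisted on \emph{honest} m-reductions); padding the $\StSAT$ instance up to a canonical length in $S'$ before querying $\{D_n\}$ handles this, at the cost of one more fixed polynomial, already absorbed into $d^*$. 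I would also note that $\NE$ versus $\NEXP$ and $\ACC$ versus $\ACC$-of-size-$n^c$ are reconciled exactly as in Williams' original argument via padding, which again is a fixed polynomial and hence harmless in the robust setting.
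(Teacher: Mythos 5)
Your sketch captures the right high-level philosophy (keep every step as a canonical refinement of a single robust set $S$, avoid composing independent $\ro$-simulations, note that each use of the hypothesis is a fixed-polynomial reduction), but it misses the paper's specific, much more modular decomposition and hand-waves at precisely the point where the ACC-ness of the witness circuits has to come from. The paper does not re-run Williams' full ACC argument in the robust setting. Instead it splits the claim into two already-available ingredients: (1) it observes that the proof of Theorem~\ref{IKWro}, as already carried out robustly in the appendix, directly yields that on some robust refinement $R$ of $S$, $\StSAT$ has succinct satisfying assignments that are \emph{general} Boolean circuits of size $n^d$; (2) it then converts general circuits to $\ACC$ circuits by noting $\P \subseteq \NE \subseteq \ACC$ on $S$ and invoking Proposition~\ref{PolyAdvice} (the robust translation with polynomial advice), which lets one evaluate the size-$n^d$ witness circuit (treated as advice) inside $\ACC$, giving an $\ACC$ witness circuit of size $n^c$ on a further refinement $S'$. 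Your proposal never articulates step (2) --- you just assert that ``Williams' refinement gives that the witnessing circuits may be taken in $\ACC$'' and that ``Williams' chain of implications goes through verbatim.'' That is exactly the step that needs an argument in the robust setting, and the paper supplies it with the $\P \subseteq \ACC$ plus Proposition~\ref{PolyAdvice} trick rather than by re-verifying Williams' argument line by line.

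To be clear about what this costs: if you did carefully re-run Williams' ACC-specific argument with explicit canonical refinements, you would likely reach the same conclusion, so your plan is not doomed. But it is strictly more work than what the paper does, it redoes robust bookkeeping that Theorem~\ref{IKWro} has already paid for, and as written it has a genuine gap: the mechanism by which a general succinct witness becomes an $\ACC$ succinct witness is never identified, yet that is the one place where the $\ACC$ hypothesis has to be used a second time and hence where the ``single robust assumption'' discipline you correctly emphasize actually has bite.
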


\begin{proof}
The proof of Theorem \ref{IKWro} gives that if $\NE \subseteq
\ACC$ on $S$, then there is a constant $d$ and a robust
refinement $R$ of $S$ such that $\StSAT$ has succinct
satisfying assignments that are circuits of size $n^d$ on $R$.
Since $\P \subseteq \ACC$ on $S$ and using Proposition
\ref{PolyAdvice}, we get that there is a constant $c$ and a
robust refinement $S'$ of $R$ such that $\StSAT$ has succinct
satisfying assignments that are $\ACC$ circuits of size $n^c$
on $S'$.
\end{proof}

Now we are ready to prove the robustly often  analogue of
Williams' main result \cite{Williams10a}.

\begin{theorem}
\label{Williamsro} $\NEXP \not \subseteq \ro \ACC$.
\end{theorem}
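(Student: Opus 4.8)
The plan is to follow the structure of Williams' original argument that $\NEXP \not\subseteq \ACC$, but to thread a single robust set through all the uses of the hypothesis, refining it finitely many times so that every ingredient holds simultaneously on the final refinement. Assume for contradiction that $\NEXP \subseteq \ro\ACC$. Then there is a robust set $S$ and a constant (bounding the $\ACC$ circuit size) such that $\NE \subseteq \ACC$ on $S$; by Proposition~\ref{PolyPadding}-style padding it suffices to argue at the $\NE$ level. First I would apply Lemma~\ref{univro} to obtain a constant $c$ and a robust refinement $S' \subseteq S$ on which $\StSAT$ has succinct satisfying assignments given by $\ACC$ circuits of size $n^c$. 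This is the ``easy witness'' half of Williams' argument and it is exactly what Lemma~\ref{univro} packages for us in the robustly-often setting.

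The second ingredient is the algorithmic one: Williams shows that the existence of a nontrivial (slightly subexponential) satisfiability algorithm for $\ACC$ circuits, combined with succinct $\ACC$ witnesses, yields a faster-than-brute-force nondeterministic algorithm for problems that provably require brute force by a time hierarchy theorem. Concretely, I would abstract out of \cite{Williams10a} the statement that there is a deterministic algorithm deciding satisfiability of $\ACC$ circuits of size $s$ on $n$ inputs in time $2^{n}/n^{\omega(1)}$ (more precisely $2^{n - \Omega(n^\delta)}$ for some $\delta$ depending on the depth and modulus). Given the succinct $\ACC$ witnesses on $S'$, one guesses the $n^c$-size $\ACC$ witness circuit for a padded $\StSAT$ instance and then \emph{verifies} it by calling the $\ACC$-SAT algorithm on an appropriate circuit of polynomial size; this gives an $\NTIMEGUESS(2^n/n, n^{O(c)})$ algorithm — running in time $2^n/n$ and using only $n^{O(c)}$ nondeterministic bits (the bits of the witness circuit) — for a language in $\NTIME(2^n)$, on the robust set $S'$. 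That is precisely the configuration ruled out by Lemma~\ref{ntimeguess}, which says $\NTIME(2^n) \not\subseteq \ro\NTIMEGUESS(2^n/n, n^k)$ for every constant $k$. Taking $k$ larger than the $O(c)$ produced above yields the contradiction and completes the proof.

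The main obstacle, and the reason this is ``our most technically involved result,'' is making the two halves live on the \emph{same} robust set. Lemma~\ref{univro} gives witnesses on $S'$, but the verification step must invoke the $\ACC$-SAT algorithm, which is unconditional and hence poses no refinement issue; the delicate point is rather that the reduction producing the hard $\NTIME(2^n)$ language, the padding to make input lengths match, and the guess-and-verify simulation must all be arranged so that correctness at input length $n$ depends only on the behavior of the $\ACC$ simulation at polynomially related lengths — so that a single canonical refinement $S'_d$ (for a suitable constant $d$ absorbing all the polynomial blowups) makes the whole chain go through. I would set this up exactly as in the proofs of Theorem~\ref{IKWro} and Theorem~\ref{ntime}: define a sequence of robust refinements $S \supseteq S' \supseteq S''$, track the polynomial stretch consumed at each step, and verify at the end that the composite stretch is still a fixed polynomial, so robustness is preserved. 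A secondary technical point is that $\NTIMEGUESS$ is a nondeterministic-bit-bounded class, so I must be careful that the only nondeterminism used in the simulation is the guessing of the succinct witness circuit (of size $n^{O(c)}$), with the rest of the computation — in particular the $\ACC$-SAT call — being deterministic; this is what lets us land inside $\NTIMEGUESS(2^n/n, n^k)$ rather than merely inside $\NTIME(2^n/n)$, and it is what makes Lemma~\ref{ntimeguess} applicable.
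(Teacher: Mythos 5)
Your proposal correctly identifies the overall architecture (use Lemma~\ref{univro} to get succinct $\ACC$ witnesses on a robust refinement, then use Williams' $\ACC$-SAT algorithm to guess-and-verify those witnesses and land in $\NTIMEGUESS(2^n/n, n^{O(1)})$, contradicting Lemma~\ref{ntimeguess}), but it omits a step that is both necessary for the verification to make sense and is a second independent use of the $\ro\ACC$ hypothesis, and hence a second robust refinement. Concretely: after you guess an $\ACC$ witness circuit $W$, the object you need to feed to the $\ACC$-SAT algorithm is (essentially) the circuit that, on input an index $i$, computes the $i$-th clause of the exponentially long formula via the succinct formula-description circuit $D$ produced by the reduction to $\StSAT$, plugs in $W$, and checks satisfaction. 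But $D$ is only guaranteed to be a general polynomial-size circuit, not an $\ACC$ circuit, so this composite is not an $\ACC$ circuit and $\ACC$-SAT cannot be called on it. The paper's proof therefore has the algorithm first guess a small $\ACC$ circuit $D'$ equivalent to $D$ and verify the equivalence $D' \equiv D$ using $\ACC$-SAT; the \emph{existence} of such a $D'$ relies on $\P \subseteq \ACC$, which in the robustly-often world only holds on the robust set $S$, and the \emph{verification} again uses $\ACC$-SAT. Only after replacing $D$ by $D'$ does your second, witness-verifying $\ACC$-SAT call become legitimate.

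This is not just a bookkeeping nuance. It means the hypothesis is used at two distinct points, each introducing its own polynomial stretch and requiring its own robust refinement (the paper moves from $S$ to a first refinement for the $D'$-step and then to $S''$ for the witness step), and it also contributes additional nondeterministic bits (the description of $D'$) that must be counted when choosing $k$ in Lemma~\ref{ntimeguess}. Your write-up explicitly claims ``the $\ACC$-SAT call is unconditional and hence poses no refinement issue'' and that the only nondeterminism is the guessing of the witness circuit; both claims are false once the missing compression step is included. The rest of your plan — abstracting the $\ACC$-SAT algorithm, threading a single robust set through with canonical refinements, and invoking Lemma~\ref{ntimeguess} at the end — matches the paper's proof.
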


\begin{proofsketch}
Assume, to the contrary, that $\StSAT \in \ACC$ on $R$ for some
robust $R$. By completeness of $\StSAT$, it follows that there
is a robust refinement $S$ of $R$ and a constant $k'
> 1$ such that $\NE$ has $\ACC$ circuits of size $n^{k'}$. Let
$L \in \NTIME(2^n)$ but not in $\ro \NTIMEGUESS(2^n/n, n^k)$,
where $k$ will be chosen large enough as a function of $k'$.
Existence of $L$ is guaranteed by Lemma \ref{ntimeguess}. We
will show $L \in \ro \NTIMEGUESS(2^n/n, n^k)$ and obtain a
contradiction.

The proof of Theorem 3.2 in Williams' paper gives an algorithm
for determinining if $x \in L$. The algorithm
non-deterministically guesses and verifies a "small" (size
$n^{O(c)}$) $\ACC$ circuit which is equivalent to the $\StSAT$
instance to which $x$ reduces, within time $2^n/\omega(n)$ by
using Williams' new algorithm for $\ACC$-SAT together with the
assumption that $\NEXP$ and hence $\P$ in $\ACC$ on $S$. This
guess-and-verification procedure works correctly on some robust
refinement of $S$.  Then, the algorithm uses the existence
guarantee of Lemma \ref{univro} to guess and verify a succint
witness, again using Williams' algorithm for $\ACC$-SAT. This
further guess-and-verification procedure works correctly on
some further robust refinement $S''$ of $S$. In total, the
algorithm uses at most $n^{dk'}$ non-deterministic bits for
some constant $d$, runs in time at most $2^n/n$ and decides $L$
correctly on $S''$. By choosing $k
> dk'$, we get the desired contradiction.
\end{proofsketch}

Williams' work still leaves open whether $\NEXP \subseteq
\SIZE(poly)$. Using the same ideas as in the proof of Theorem
\ref{Williamsro}, we can show that an algorithm for
$\CircuitSAT$ that improves slightly on brute force search
robustly often would suffice to get such a separation.

\begin{theorem}
\label{AlgFromRoLb} If for each polynomial $p$, $\CircuitSAT$
can be solved in time $2^{n - \omega(\log(n))}$ robustly often
on instances where the circuit size is at most $p(n)$, then
$\NEXP \not \subseteq \SIZE(poly)$.
\end{theorem}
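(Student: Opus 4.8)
The plan is to run the contrapositive-flavoured argument of Theorem~\ref{Williamsro} ``in reverse'': instead of deriving a fast $\CircuitSAT$-type algorithm from a circuit-upper-bound assumption and contradicting a nondeterministic hierarchy, I would \emph{assume} $\NEXP \subseteq \SIZE(\poly)$ together with the hypothesised robustly-often $\CircuitSAT$ algorithm, and use the two together to collapse $\NTIME(2^n)$ into $\ro\,\NTIMEGUESS(2^n/n, n^k)$ for a suitable $k$, contradicting Lemma~\ref{ntimeguess}. Concretely: suppose $\NEXP \subseteq \SIZE(\poly)$, so $\StSAT$ has polynomial-size (general, not $\ACC$) circuits on all input lengths, and suppose for each polynomial $p$ there is an algorithm solving $\CircuitSAT$ in time $2^{n-\omega(\log n)}$ robustly often on instances whose circuit has size at most $p(n)$. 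Since $\NEXP \subseteq \SIZE(\poly)$ holds everywhere, the robust set governing the whole argument is initially all of $\mathbb{N}$, and the only place robustness is lost is where the $\CircuitSAT$ algorithm is invoked — so I would only need to track refinements through those invocations.

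The key steps, in order: (1) Apply the unconditional direction of the $\IKW$-style argument (the everywhere version, since $\NEXP \subseteq \SIZE(\poly)$ holds everywhere) to get that $\StSAT$ has succinct satisfying assignments computable by polynomial-size circuits on a robust set; this is the analogue of Lemma~\ref{univro} but with general circuits and, crucially, starting from an everywhere assumption, so the refinement is only needed for the witness-guessing step later. (2) Take $L \in \NTIME(2^n) \setminus \ro\,\NTIMEGUESS(2^n/n, n^k)$ via Lemma~\ref{ntimeguess}, with $k$ to be fixed. (3) Mimic the algorithm of Theorem~3.2 of Williams / the proof of Theorem~\ref{Williamsro}: on input $x$, reduce to a $\StSAT$ instance, nondeterministically guess a small ($n^{O(1)}$-size) circuit purporting to be equivalent to that instance, and \emph{verify} it by calling the fast $\CircuitSAT$ algorithm on an appropriate polynomial-size circuit encoding the consistency check; this runs in time $2^n/\omega(n)$ and is correct on some robust refinement $S'$ (the refinement coming exactly from where the $\CircuitSAT$ algorithm is only robustly-often correct). (4) Use the succinct-witness circuit from step (1), again guessed nondeterministically and verified via the fast $\CircuitSAT$ algorithm, to actually decide $L$; correct on a further robust refinement $S''$. (5) Count resources: the total nondeterminism is $n^{O(1)}$, say at most $n^{k}$ once $k$ is chosen larger than the relevant constant, and the running time is at most $2^n/n$, so $L \in \ro\,\NTIMEGUESS(2^n/n, n^k)$ on $S''$, contradicting the choice of $L$.

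The main obstacle I expect is step (3)–(4): verifying the guessed circuits using only a $\CircuitSAT$ algorithm (rather than Williams' specific $\ACC$-SAT algorithm) requires that the consistency/equivalence checks be expressible as $\CircuitSAT$ instances with circuits of size bounded by a \emph{fixed} polynomial in the instance length, so that the hypothesis ``for each polynomial $p$'' can be applied with a single $p$ and yields a single robust set to refine — getting the quantifier order right here is delicate, since a naive encoding might blow the circuit size up super-polynomially or force infinitely many distinct polynomials $p$ and hence infinitely many intersected robust sets (which need not be robust). A secondary subtlety is ensuring the time bound $2^n/\omega(n)$ survives: the $\CircuitSAT$ call is on a circuit with roughly $2^n$ inputs, so one must check that $2^{n-\omega(\log n)}$ on such an instance is still $2^n/\omega(n)$ in terms of the original input length $n$, exactly as in Williams' accounting. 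Given Lemma~\ref{ntimeguess} and the proof template of Theorem~\ref{Williamsro}, both obstacles are handled by the same refinement-bookkeeping already developed in the paper, so the proof should go through with the constant $k$ chosen at the end as a function of the (fixed) polynomial degrees appearing in steps (1) and (3).
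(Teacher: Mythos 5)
Your proposal correctly reconstructs what the paper intends; the paper gives no explicit proof of this theorem, only the remark that it follows ``using the same ideas as in the proof of Theorem~\ref{Williamsro},'' and your plan — assume $\NEXP \subseteq \SIZE(\poly)$ for contradiction, invoke the easy-witness lemma (now in its everywhere form, since the circuit assumption holds everywhere), and substitute the hypothesised robustly-often $\CircuitSAT$ algorithm for Williams' $\ACC$-SAT algorithm to place the hard language of Lemma~\ref{ntimeguess} into $\ro\,\NTIMEGUESS(2^n/n, n^k)$ — is exactly that argument, with the right bookkeeping observations (a single fixed polynomial $p$ suffices once the constants are pinned down; all robust refinements trace to the $\CircuitSAT$ calls). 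One small structural simplification worth noting: the step of nondeterministically guessing a small circuit equivalent to the $\StSAT$ reduction circuit is an artifact of the $\ACC$ setting, where the SAT solver only accepts $\ACC$ input; for general $\CircuitSAT$ the reduction circuit is already of polynomial size and can be handed to the solver directly, eliminating one guess-and-verify round (retaining it, as you do, is harmless for the resource count but not needed).
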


\subsection{Time-Space Tradeoffs}

\begin{proposition}
\label{nondetcohier} Let $t$ and $T$ be time-constructible
functions such that $t = o(T)$. Then $\NTIME(T) \not \subseteq
\io\coNTIME(t)$.
\end{proposition}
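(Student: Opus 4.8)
The plan is to prove the slightly stronger statement that there is a single language $L\in\NTIME(T)$ with $L\notin\io\coNTIME(t)$, by direct diagonalization. The key observation I would exploit is the type asymmetry between the two sides: every $A\in\coNTIME(t)$ has the form $\overline{L(N)}$ for a nondeterministic $t$-time machine $N$, so ``$L$ disagrees with $A$ on an input $z$'' is the same as ``$L$ agrees with $L(N)$ on $z$'', which is a purely existential condition that a nondeterministic machine can decide simply by simulating $N$ and guessing an accepting path. Hence a ``universal-flavoured'' language should diagonalize against all of $\coNTIME(t)$ at once; the only delicate point is keeping this language inside the budget $\NTIME(T)$ despite the overhead of universal simulation, which I would handle with the standard padding trick, using $t=o(T)$.

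Concretely, I would fix an enumeration $N_1,N_2,\dots$ of nondeterministic multitape machines together with the efficient universal nondeterministic simulation already invoked in the proof of Theorem~\ref{ntime} (simulating $s$ steps of $N_j$ on an input costs at most $c_j\cdot s$ steps on a $2$-tape NTM, for a machine-dependent constant $c_j$). Then I would define an NTM $M$ which, on input $w$ of length $n$: rejects unless $w$ has the form $0^j1x$; computes $t(n)$, $T(n)$ and $c_j$ (using time-constructibility of $t$ and $T$); rejects if $c_j\,t(n)>T(n)$; and otherwise nondeterministically simulates $N_j$ on $w$ for $t(n)$ steps, accepting iff the guessed path is accepting. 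First I would verify $L:=L(M)\in\NTIME(T)$: the preliminary work (parsing, constructing $t(n)$ and $T(n)$, computing and comparing $c_j\,t(n)$) runs in time $O(T(n))$, and on the branch where the simulation actually runs it costs at most $c_j\,t(n)\le T(n)$ by the guard.

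Next I would argue $L\notin\io\coNTIME(t)$. Suppose towards a contradiction that $A\in\coNTIME(t)$ satisfies $A_n=L_n$ for infinitely many $n$. Write $A=\overline{L(N_j)}$, where after clocking $N_j$ runs in at most $t(n)$ steps on length-$n$ inputs. Since $t=o(T)$, there is a threshold $n_0\ge j+1$ beyond which $c_j\,t(n)\le T(n)$. For every $n\ge n_0$ I would choose any $x$ with $|0^j1x|=n$ (which exists precisely because $n\ge j+1$) and note that $M$ accepts $0^j1x$ iff $N_j$ accepts $0^j1x$ within $t(n)$ steps, i.e.\ iff $0^j1x\in L(N_j)$, i.e.\ iff $0^j1x\notin A$; so $L$ and $A$ disagree on this input and $A_n\ne L_n$. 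As this holds for \emph{every} $n\ge n_0$, the set of $n$ with $A_n=L_n$ is finite, contradicting the choice of $A$. Hence $L\notin\io\coNTIME(t)$, so $\NTIME(T)\not\subseteq\io\coNTIME(t)$.

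I expect the only real obstacle to be routine bookkeeping around the universal simulation: ensuring the machine-dependent overhead $c_j$ is absorbed by the padding (via the $c_j\,t(n)\le T(n)$ guard) rather than breaking the $\NTIME(T)$ bound, and choosing the pairing $0^j1x$ so that for each fixed $j$ every sufficiently large input length contains a diagonalizing string — which is what yields disagreement at \emph{all} large $n$ rather than merely infinitely often. Everything else is standard; in particular the entire argument relativizes, in contrast with the $\NTIME$-versus-$\NTIME$ hierarchy, since here the diagonalizing machine never needs to complement a nondeterministic computation.
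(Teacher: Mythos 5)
The paper states Proposition~\ref{nondetcohier} without proof (it is treated as folklore), so there is no in-paper argument to compare against; I will instead assess your argument on its own terms. Your proof is correct and is the natural one. The key observation---that disagreeing with $A=\overline{L(N)}\in\coNTIME(t)$ on a string $z$ is the same as agreeing with $L(N)$ on $z$, which an $\NTIME$ machine achieves by simply simulating $N$ on $z$ and accepting when $N$ does---is exactly why the $\NTIME$-versus-$\coNTIME$ direction, unlike $\NTIME$-versus-$\NTIME$, admits an almost-everywhere separation by a relativizing diagonalization, and your $0^j1x$ prefix correctly yields a diagonalizing string at every length $n\ge j+1$.

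One spot needs a touch more care: a machine $N$ witnessing $\overline{A}\in\NTIME(t)$ runs in time $O(t(n))$, not $\le t(n)$, so simulating $N_j$ for only $t(n)$ steps may truncate its computation, and the phrase ``after clocking $N_j$ runs in at most $t(n)$ steps'' implicitly appeals to linear speedup for NTMs to find such an $N_j$ in the enumeration. A cleaner fix, which avoids this and also the need to compute $c_j$ up front, is to have $M$ simulate $N_j$ for as long as its $T(n)$ budget allows (say $\lfloor T(n)/2 \rfloor$ of $M$'s own steps); since $t=o(T)$, for each fixed $j$ this exceeds $c_j$ times $N_j$'s running time once $n$ is large enough, so the simulation completes and the argument proceeds unchanged. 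This is indeed the routine bookkeeping you anticipated, and does not affect the correctness of the approach.
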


\begin{theorem}
\label{TimeSpTrade} Let $\alpha < \sqrt{2}$ be any constant.
$\SAT \not \in \ro\DTISP(n^{\alpha}, \polylog(n))$.
\end{theorem}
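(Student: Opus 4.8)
The plan is to adapt the classical time-space tradeoff lower bound for $\SAT$ of Fortnow, Lipton, van Melkebeek and Viglas to the robustly-often setting, using the machinery of canonical refinements together with the significant nondeterministic time hierarchy (Theorem~\ref{ntime}) and its co-nondeterministic analogue (Proposition~\ref{nondetcohier}) in place of the usual infinitely-often separations. The classical proof is an indirect diagonalization: assuming $\SAT \in \DTISP(n^{\alpha}, \polylog(n))$ with $\alpha < \sqrt 2$, one first uses the fact that $\SAT$ is complete for $\NTIME(n^{1+o(1)})$ under very efficient (quasilinear-time, polylog-space) reductions to conclude that all of $\NTIME(n^{1+o(1)})$, and hence by padding $\DTISP(t, t^{o(1)})$-style classes, sit inside small simultaneous time-space. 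Then one plays off two consequences of this single assumption: a \emph{speedup} of deterministic space-bounded computation by alternating machines (a padding/divide-and-conquer argument), and a \emph{slowdown} converting alternation back into deterministic time-space via the assumed $\SAT$ algorithm. Iterating the speedup $b$ times and then collapsing with $b$ applications of the slowdown, tuned with $\alpha^2 < 2$, yields $\NTIME(n) \subseteq \coNTIME(n^{1-\varepsilon})$ or a similar containment contradicting Proposition~\ref{nondetcohier} (or the deterministic time hierarchy).

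The key steps, in order, are: (1) Assume $\SAT \in \ro\DTISP(n^{\alpha},\polylog(n))$, so fix a robust set $S$ on which the simulation holds. (2) Using Lemma~\ref{ComplRef} (honest m-reductions, and $\DTISP$ is closed under the relevant quasilinear reductions) pass to a robust refinement on which every language in $\NTIME(n^{1+o(1)})$ lies in $\DTISP(n^{\alpha+o(1)}, \polylog)$; more generally, by a padding argument phrased as in Proposition~\ref{PolyPadding}, get on a further refinement that $\DTISP(n^a, \polylog) \subseteq$ small nondeterministic time classes after the appropriate rescaling. (3) Carry out the speedup lemma: a $\DTISP(T, T^{o(1)})$ computation can be simulated in $\Sigma_k$-time roughly $T^{1/k + o(1)}$; this step is a direct combinatorial/padding argument and is ``reasonable'' in the sense of the paper, so it costs us only one more canonical refinement $S_d$ for a fixed $d$. (4) Carry out the slowdown: each alternation is removed at the cost of an $\alpha$-th power blowup using the assumed $\SAT$ algorithm, again costing one fixed-level refinement per alternation removed. (5) Compose: since \emph{all} of these refinements descend from the single robust set $S$ and each lowers the canonical level by a bounded amount ($b$ applications, $b$ a constant depending only on $\alpha$), the final set $S' = S_d$ for a suitable constant $d = d(\alpha)$ is still robust, and on $S'$ we obtain the contradictory containment. (6) Finally, observe the contradiction is with Proposition~\ref{nondetcohier} (significant separation of $\NTIME(T)$ from $\io\coNTIME(t)$), which in particular rules out even a robustly-often such containment, completing the proof.

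The main obstacle I expect is \textbf{bookkeeping the refinements through the iterated speedup/slowdown so that the final level $d$ remains a constant independent of the input length}. The classical argument iterates its speedup lemma a number of times that, while constant, must be chosen \emph{after} fixing $\alpha$; one has to be sure that the number of alternations introduced, and hence the number of canonical refinements taken, does not depend on the running variable, and that each individual speedup and slowdown step is of the ``fixed polynomial-time / fixed polynomial advice'' type for which refinements compose (as the paper emphasizes, exponential padding would break this, but here all blowups are polynomial since we are inside $\DTISP(n^{\alpha},\polylog)$ with $\alpha$ a constant). A secondary subtlety is that the speedup lemma as usually stated produces an \emph{alternating} machine whose \emph{co}-nondeterministic part must then be eliminated; one must make sure each such elimination is itself realized by an honest m-reduction to $\SAT$ (using self-reducibility / the standard encoding of an alternating computation as a $\SAT$ instance of quasilinear size), so that Lemma~\ref{ComplRef} applies at every stage. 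Once these are handled, the arithmetic $\alpha^2 < 2 \Rightarrow$ contradiction is exactly as in the non-robust proof and needs no change.
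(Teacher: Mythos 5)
Your proposal is correct and takes essentially the same approach as the paper: assume $\SAT$ lies robustly-often in $\DTISP(n^{\alpha},\polylog n)$ on a robust set $S$, push the assumption through a speedup (via $\DTISP(n^{\alpha},\polylog n)\subseteq\Pi_2\text{-}\TIME(n^{\alpha/2+o(1)})$), padding, and a slowdown (reusing the $\SAT$ algorithm to kill the inner existential quantifier), taking a fixed-level canonical refinement at each step, and finally contradicting Proposition~\ref{nondetcohier} on the resulting infinite set. The only cosmetic difference is that you describe the general $b$-round iterated speedup/slowdown recipe, whereas for $\alpha<\sqrt{2}$ the paper's proof needs only one round ($\NTIME(n^2)\subseteq\coNTIME(n^{\alpha^2+o(1)})$ on the final refinement already gives the contradiction since $\alpha^2<2$), so the refinement bookkeeping you flag as the main obstacle is in fact trivial here.
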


Proof in the appendix. Similar ideas can be used to show the
best known time-space tradeoffs for $\SAT$ - the key is that
the proofs of these tradeoffs proceed by indirect
diagonalization, using the contrary assumption and polynomial
padding a constant number of times and deriving a contradiction
to a strong hierarchy theorem.

\section{Conclusion and Open Problems}
Our paper makes significant progress in remedying one of the
main issues with proofs using indirect diagonalization - the
fact that they don't give almost everywhere separations. We
have shown that most interesting results proved using this
technique can be strengthened at least to give significant
separations.

There are still many separations which we don't know how to
make significant: eg., the main results of
\cite{Buhrman-Fortnow-Santhanam09, Santhanam07,
Buhrman-Fortnow-Thierauf98}. The reasons why we're unable to
get significant separations are different in different cases:
the inability to formulate the proof as following from a single
assumption about robustly often  simulations in the first case,
the use of superpolynomial amount of padding in the second, and
the fact that complete languages for $\MAEXP$ are unknown in
the last. It's not clear whether there are inherent limitations
to getting significant separations in these cases.

Other variants of robustly often  might be interesting to
study, such as a ``linear'' variant where the simulation is
only required to hold on arbitrarily large linear stretches of
inputs rather than polynomial stretches. Separations against
this notion of simulation would be stronger than significant
separations.

\bibliographystyle{alpha}
\bibliography{papers_Lance}

\newpage

\section*{Appendix}

Here we give proofs omitted from the body of the paper.

\begin{proof}[Proof of Lemma~\ref{ComplRef}]
Let $f$ be a polynomial-time honest m-reduction from $L'$ to
$L$. By assumption, there is a robust $S$ such that $L \in \C$
on $S$. Let $K \in C$ be a language such that $L_n = K_n$ for
all $n \in S$. We define a robust refinement $S'$ of $S$ as
follows: $n' \in S'$ iff $n' \in S$ and for all $x$ of length
$n'$, $|f(x)| \in S$. By definition, $S'$ is a refinement of
$S$; we show that it is robust and that $L' \in C$ on $S'$.

First, we show robustness of $S'$. Robustness of $S$ is
equivalent to saying that for each positive integer $k$, there
is a positive integer $m(k)$ such that $n \in S$ for all
$m(k)^{1/k} \leq n \leq m(k)^k$. Since $f$ is an honest
reduction, there is an integer $c > 1$ such that for all $x$,
$|x|^{1/c} \leq |f(x)| \leq |x|^c$. We show that for each
positive integer $k$, there is a positive integer $m'(k)$ such
that $n' \in S'$ for all $m'(k)^{1/k} \leq n' \leq m'(k)^k$.
Simply choose $m'(k) = m(ck)$. We have that for any $n'$ such
that $m(ck)^{1/k} \leq n' \leq m(ck)^k$, for all $x$ of length
$n'$, $|f(x)|$ is between $(n')^{1/c}$ and $(n')^c$ by
assumption on $f$. Hence for all $x$ of length $n'$, $|f(x)|$
is between $m(ck)^{1/ck}$ and $m(ck)^{ck}$, which implies $f(x)
\in S$ for all such $x$, and hence $n' \in S'$.

Next we show $L' \in \C$ on $S'$. Define a language $K'$ as
follows: $x \in K'$ iff $f(x) \in K$. Since $\C$ is closed
under poly-time m-reductions, $K' \in \C$. Now, on each input
length $n' \in S'$, for any $x$ of length $n'$, $|f(x)| \in S$,
and hence $L(x) = K(x)$. This implies that $L'(x) = K'(x)$ for
all such $x$, and hence $L'_{n'} = K'_{n'}$, which finishes the
proof.
\end{proof}

\begin{proof}[Proof of Theorem~\ref{PolyHier}]
We show that for any positive integer $k$, $\Sigma^p_{k}
\subseteq \ro\P$. The proof is by induction on $k$. We will
need to formulate our inductive hypothesis carefully.

By assumption, $\SAT \in \ro\DTIME(n^c)$ for some integer $c >
0$. Let $S$ be a robust set such that $\SAT \in \DTIME(n^c)$ on
$S$. Using the same idea as in the proof of Lemma
\ref{ComplRef}, we have that for any $L \in \NTIME(n^d)$, $L
\in \DTIME(n^{cd})$ on $S_d$, where $S_d$ is the canonical
refinement defined in Definition \ref{canonical}. Moreover, the
conclusion holds even if the assumption is merely that $L \in
\NTIME(n^d)$ on $S_d$.

Now we formulate our inductive hypothesis $H_k$. The hypothesis
$H_k$ is that $\Sigma_{k}SAT \in \DTIME(n^{c^{k+o(1)}})$ on
$S_{c^{k+1}}$. For $k=1$, the hypothesis holds because by
assumption $\SAT \in DTIME(n^c)$ on $S$ and hence on $S_{c^2}$
since $S_{c^2}$ is a refinement of $S$. If we prove the
inductive hypothesis, the proof of the theorem will be complete
using Lemma \ref{ComplRef} and the fact that $\Sigma_{k}-SAT$
is complete for $\Sigma^{p}_{k}$ under polynomial-time honest
m-reductions.

Assume that $H_k$ holds - we will establish $H_{k+1}$. Since
$H_k$ holds, we have that $\Sigma_{k}SAT \in
\DTIME(n^{c^{k+o(1)}})$ on $S_{c^{k+1}}$. Now $\Sigma_{k+1}SAT
\in \NTIME(n^{1+o(1)})^{\Sigma_{k}SAT}$. Moreover, by using
padding, we can assume that that all the queries of the oracle
machine are of length $n^{1+o(1)}$. Now, using the same idea as
in the proof of Lemma \ref{ComplRef},  we have that
$\Sigma_{k+1}SAT \in \NTIME(n^{c^{k+o(1)}})$ on $S_{c^{k+2}}$.
Using again the assumption that $\SAT \in \DTIME(n^c)$ on
$S$,we have for each language $L \in \NTIME(n^{c^{k+o(1)}})$ on
$S_{c^{k+1}}$, $L \in \DTIME(n^{c^{k+1+o(1)}})$ on
$S_{c^{k+1}}$ and hence on $S_{c^{k+2}}$ since $S_{c^{k+2}}$ is
a refinement of $S_{c^{k+1}}$. Thus we have that
$\Sigma_{k+1}SAT \in \DTIME(n^{c^{k+1+o(1)}})$ on
$S_{c^{k+1}}$, which establishes $H_{k+1}$ and completes the
proof of the theorem.
\end{proof}

\begin{proof}[Proof of Lemma~\ref{KLRobust}]

We follow the usual proof of the Karp-Lipton theorem. Let
$\{C_n\}$ be a sequence of circuits such that $C_n$ solves
$\SAT$ correctly for $n \in S$ and $|C_n| = O(n^k)$. Using
self-reducibility and paddability of $\SAT$, we can define a
sequence $\{C'_n\}$ of circuits such that $C'_n$ outputs a
satisfying assignment for all satisfiable formulae of length $n
\in S$, and $|C'_n| = O(n^{k+1})$ for all $n$.

Now let $\phi$ be an instance of $\Pi_2SAT$: $\phi \in
\Pi_2SAT$ iff $\forall \vec{x} \exists \vec{y}
\phi(\vec{x},\vec{y})$. By $\NP$-completeness and paddability
of $\SAT$, there is a polynomial-time computable function $f$
such that $\phi \in \Pi_2SAT$ iff $\forall \vec{x}
f(<\phi,\vec{x}>) \in SAT$, and $|z| \leq |f(z)| \leq
|z|^{1+o(1)}$. Consider the following $\Sigma_2$ algorithm for
$\phi$: it existentially guesses a circuit $C'$ of size
$O(n^{k+1})$ and universally verifies for all $\vec{x}$ that
$\phi(\vec{x}, C'(f(<\phi, \vec{x}>)))$ holds. This algorithm
decides $\Pi_2SAT$ correctly on all $n \in S_2$, where $S_2$ is
the canonical refinement of $S$ at level $2$. The time taken by
the algorithm is $O(n^{k+1+o(1)})$, which gives the desired
conclusion.
\end{proof}

\begin{proof}[Proof of Theorem~\ref{IKWro}]
Let $k > 1$ be a constant and $L$ be a complete set for $\NE$
under linear-time reductions for which there is a robust set
$S$ such that $L \in \SIZE(n^k)$ on $S$. This implies that
there is a robust refinement $S'$ of $S$ such that $\NE
\subseteq \SIZE(n^k)$ on $S'$, using Lemma \ref{ComplRef}.
Using Lemma \ref{KLMNro}, there is a robust refinement $S''$ of
$S'$ such that $\E \subseteq \MATIME(n^{2k})$ on $S''$. Using
Proposition \ref{PolyPadding}, we have that there is a robust
refinement $S'''$ of $S''$ such that $\DTIME(2^{n^{2k}})
\subseteq \MATIME(n^{4k^2})$ on $S'''$. Now set $R=S'''$ in
Lemma \ref{EasyWitness}.

There are two cases. Case 1 is that we have $\NE \subseteq
\DTIME(2^{n^{16k^4}})$ on $S''$. In this case, by using padding
and applying Proposition \ref{PolyPadding}, we have that $\NE
\subseteq \DTIME(2^{n^{16k^4}}) \subseteq \MA$ on
$S''_{16k^4+1}$. This gives that $\NE \subseteq \ro \MA$, which
implies $\NEXP \subseteq \ro \MA$ by applying Proposition
\ref{PolyPadding} again, and we have the desired conclusion.

Case 2 is that $\MATIME(n^{4k^2}) \subseteq \NE/O(n)$ on some
robust refinement $S''''$ of $S'''$. We will derive a
contradiction in this case. Since $S''''$ is a refinement of
$S'''$, we have that $\DTIME(2^{n^{2k}}) \subseteq
\MATIME(n^{4k^2}) \subseteq \NE/O(n)$ on $S''''$. Since $S''''$
is a refinement of $S'$, we have that $\NE \subseteq
\SIZE(n^k)$ on $S''''$. Applying Proposition \ref{PolyAdvice},
we have that there is a robust refinement $S'''''$ of $S''''$
such that $\NE/O(n) \subseteq \SIZE(n^k)$ on $S'''''$. Thus
$\DTIME(2^{n^{2k}}) \subseteq \ro \SIZE(n^k)$, but this is in
contradiction to the fact that $\DTIME(2^{n^{2k}}) \not
\subseteq \io \SIZE(n^k)$ by direct diagonalization.
\end{proof}

\begin{proof}[Proof of Theorem~\ref{NexpHier}]
Assume, to the contrary, that there is an infinitely-often
robust simulation of $\NEXP$ by $\NP$. Let $L$ be a language
which is paddable and complete for $\NE$ under linear-time
m-reductions. By assumption, there is a robust set $S$ and an
integer $k$ such that $L \in \NTIME(n^k)$ on $S$.

Let $K$ be a set in $\DTIME(2^{n^{2k}})$ but not in
$\io\NTIME(n^k)$ - such a set can be constructed by direct
diagonalization. Consider the following padded version $K'$ of
$K$: $y \in K'$ iff $y = x01^{|x|^{2k} - |x| - 1}$ for $x \in
K$. Since $K \in \DTIME(2^{n^{2k}})$, we have that $K' \in \E
\subseteq \NE$. Now using the assumption on $L$ and the proof
idea of Lemma \ref{ComplRef}, we have that $K' \in \NTIME(n^k)$
on $S_2$, where $S_2$ is the canonical refinement of $S$ at
level $2$. Since there is an m-reduction from $K$ to $K'$
running in time $O(n^{2k})$ which only blows up the instance
length, we can use the proof idea of Lemma \ref{ComplRef} again
to show that $K \in \NTIME(n^{2k^2})$ on $S_{4k}$. Now, since
$K \in \NE$ and $L$ is complete for $\NE$, we can use the proof
of idea of Lemma \ref{ComplRef} a third time to show that $K
\in \NTIME(n^k)$ on $S_{4k}$. But since $S_{4k}$ is infinite,
this implies $K \in \NTIME(n^k)$ on an infinite set of input
lengths, which contradicts our assumption on $K$.

\end{proof}

\begin{proof}[Proof of Theorem~\ref{CktLowerBound}]
Assume, contrarily, that $\Sigma^{p}_2 \subseteq
\ro\SIZE(n^k)$. Let $L$ be a set that is complete for
$\Sigma_2-\TIME(n^{4k^4})$ under linear-time reductions, and
$S$ be a robust set such that $L \in \SIZE(n^k)$ on $S$.  Hence
also $\Sigma_{2}SAT \in \SIZE(n^k)$ on $S$. This implies $\SAT
\in \SIZE(n^k)$ on $S$, and by Lemma \ref{KLRobust}, there is a
robust refinement $S'$ of $S$ such that $\Pi_2SAT \in
\Sigma_2-\TIME(n^{k+o(1)})$ on $S'$. Now, by using the proof
idea of Theorem \ref{PolyHier}, we have that there is a robust
refinement $S''$ of $S'$ such that $\Sigma_4SAT \in
\Sigma_2-\TIME(n^{k^3+o(1)})$ on $S''$. Using the fact that
$\Sigma_4SAT$ is complete for $\Sigma_4 \P$ and the proof idea
of Lemma \ref{ComplRef}, we have that there is a robust
refinement $S'''$ of $S''$ such that $\Sigma_4-\TIME(n^{3k})
\subseteq \Sigma_2-\TIME(n^{3k^4 + o(1)})$ on $S'''$. Now, by
completeness of $L$ for $\Sigma_2-\TIME(n^{4k^4})$, we have
that there is a robust refinement $S''''$ of $S'''$ such that
$\Sigma_2-\TIME(n^{3k^4+o(1)}) \subseteq \SIZE(n^k)$ on
$S''''$. This implies $\Sigma_4-\TIME(n^{3k}) \subseteq
\SIZE(n^k)$ on $S''''$, which contradicts Theorem
\ref{Sigma4LowerBound} since $S''''$ is an infinite set.
\end{proof}

\begin{proof}[Proof of Theorem~\ref{TimeSpTrade}]
Assume, to the contrary, that $\SAT \in \ro\DTISP(n^{\alpha},
\polylog(n))$ on $S$ for some robust set $S$ and constant
$\alpha < \sqrt{2}$. Since $\DTISP(n^{\alpha}, \polylog(n))
\subseteq \Pi_2-\TIME(n^{\alpha/2 + o(1)})$
\cite{Fortnow-Lipton-vanMelkebeek-Viglas05} and using that
$\SAT$ is complete for $\NTIME(n \polylog(n))$ under
quasilinear-time length-increasing reductions, we have that
there is a robust refinement $S'$ of $S$ such that $\NTIME(n)
\subseteq \DTISP(n^{\alpha}, \polylog(n)) \subseteq
\Pi_2-\TIME(n^{\alpha/2+o(1)})$ on $S'$.  By padding, there is
a robust refinement $S''$ of $S'$ such that $\NTIME(n^2)
\subseteq \Pi_2-\TIME(n^{\alpha + o(1)})$ on $S''$. By using
the assumption that $\NTIME(n) \subseteq \DTIME(n^{alpha})$ on
$S'$ again to eliminate the existential quantifier in the
$\Pi_2$ simulation, we have that there is a robust refinement
$S'''$ of $S''$ such that $\NTIME(n^2) \subseteq
\coNTIME(n^{\alpha^2 + o(1)})$ on $S'''$. But this is a
contradiction to Proposition \ref{nondetcohier}, since $S'''$
is infinite.
\end{proof}

\end{document}